\newtheorem{theorem}{Theorem}[section]
\newtheorem{lemma}[theorem]{Lemma}
\newtheorem{claim}[theorem]{Claim}
\newtheorem{corollary}[theorem]{Corollary}
\newtheorem{definition}{Definition}
\newtheorem{remark}{Remark}
\newcommand{\ignore}[1]{}
\newcommand{\cC}{{\cal C}}
\newcommand{\cJ}{{\cal J}}
\newcommand{\R}{\mathbb R}
\newcommand{\F}{\mathbb F}
\newcommand{\Z}{{\mathbb Z}}
\newcommand{\eps}{\varepsilon}
\newcommand{\poly}{\mathrm{poly}}
\newcommand{\calL}{{\cal L}}
\newcommand{\calF}{{\cal F}}
\newcommand{\calH}{{\cal H}}
\newcommand{\calP}{{\cal P}}
\newcommand{\calS}{{\cal S}}
\newcommand{\calT}{{\cal T}}
\newcommand{\calC}{{\cal C}}
\newcommand{\calI}{{\cal I}}
\newcommand{\calJ}{{\cal J}}
\newcommand{\calR}{{\cal R}}
\newcommand{\br}{\boldsymbol{r}}
\newcommand{\ceil}[1]{\lceil#1\rceil}
\newcommand{\floor}[1]{\lfloor#1\rfloor}
\newcommand{\Sec}[1]{\hyperref[sec:#1]{\S\ref*{sec:#1}}} 
\newcommand{\Eqn}[1]{\hyperref[eq:#1]{(\ref*{eq:#1})}} 
\newcommand{\Fig}[1]{\hyperref[fig:#1]{Fig.\,\ref*{fig:#1}}} 
\newcommand{\Tab}[1]{\hyperref[tab:#1]{Tab.\,\ref*{tab:#1}}} 
\newcommand{\Thm}[1]{\hyperref[thm:#1]{Theorem\,\ref*{thm:#1}}} 
\newcommand{\Fact}[1]{\hyperref[fact:#1]{Fact\,\ref*{fact:#1}}} 
\newcommand{\Lem}[1]{\hyperref[lem:#1]{Lemma\,\ref*{lem:#1}}} 
\newcommand{\Prop}[1]{\hyperref[prop:#1]{Prop.~\ref*{prop:#1}}} 
\newcommand{\Cor}[1]{\hyperref[cor:#1]{Corollary~\ref*{cor:#1}}} 
\newcommand{\Conj}[1]{\hyperref[conj:#1]{Conjecture~\ref*{conj:#1}}} 
\newcommand{\Def}[1]{\hyperref[def:#1]{Definition~\ref*{def:#1}}} 
\newcommand{\Alg}[1]{\hyperref[alg:#1]{Alg.~\ref*{alg:#1}}} 
\newcommand{\Ex}[1]{\hyperref[ex:#1]{Ex.~\ref*{ex:#1}}} 
\newcommand{\Clm}[1]{\hyperref[clm:#1]{Claim~\ref*{clm:#1}}} 
\def\diam{\mathrm{diam}}
\def\dist{\mathrm{dist}}
\def\mckc{{\sffamily Heterogeneous Cap-$k$-Center}\xspace}
\def\cckp{$Q|f_i|C_{min}$\xspace}
\def\opt{\mathsf{OPT}}
\def\x{{\mathsf x}}
\def\y{y^{{\mathsf{int}}}}
\def\z{\bar{z}}
\def\Supp{\mathsf{Supp}\xspace}
\def\effc{c_{\mathrm{eff}}}
\def\zz{z^{\mathsf{int}}}
\def\suff{\mathsf{suff}}
\def\Cb{C_{\mathsf{blue}}}
\def\Cbb{C_{\mathsf{black}}}
\def\Cd{C_{\mathsf{del}}}
\newcommand{\brp}{{(p)}}
\renewcommand{\br}[1]{{(#1)}}
\newcommand{\bc}{{\bar c}}
\renewcommand{\epsilon}{\varepsilon}
\DeclareMathOperator*{\argmax}{arg\,max}
\DeclareMathOperator*{\argmin}{arg\,min}
\newcommand{\initOneLiners}{%
    \setlength{\itemsep}{0pt}
    \setlength{\parsep }{0pt}
    \setlength{\topsep }{0pt}
}
\newenvironment{oneLiners}[1][\ensuremath{\bullet}]
    {\begin{list}
        {#1}
        {\initOneLiners}}
    {\end{list}}
\begin{document}
\title{The Heterogeneous Capacitated $k$-Center Problem}
\date{}
\author{Deeparnab Chakrabarty \\ \small Microsoft Research India\\\small  deeparnab@gmail.com \and Ravishankar Krishnaswamy \\ \small Microsoft Research India\\\small  ravishankar.k@gmail.com \and Amit Kumar\thanks{Part of the work done while visiting Microsoft Research, India} \\ \small  Comp. Sci. \& Engg., IIT Delhi \\ \small amitk@cse.iitd.ac.in}
\maketitle
\begin{abstract}
	In this paper we initiate the study of the {\em heterogeneous capacitated $k$-center problem}: given a metric space $X = (F \cup C, d)$, and a collection of capacities. The goal is to open each capacity at a unique facility location in $F$, and also to assign clients to facilities so that the number of clients assigned to any facility is at most the capacity installed; the objective is then to minimize the maximum distance between a client and its assigned facility. If all the capacities $c_i$'s are identical, the problem becomes the well-studied {\em uniform capacitated $k$-center problem} for which constant-factor approximations are known~\cite{Bar-IlanKP93,KhullerS00}.
The additional choice of determining which capacity should be installed in which location makes our problem considerably different from this problem, as well the non-uniform generalizations studied thus far in literature. In fact, one of our contributions is in relating the heterogeneous problem to special-cases of the classical {\em santa-claus problem}. Using this connection, and by designing new algorithms for these special cases, we get the following results for \mckc.
\begin{oneLiners}
\item A quasi-polynomial time $O(\log n/\epsilon)$-approximation where every capacity is violated by $1+\epsilon$.
\item A polynomial time $O(1)$-approximation where every capacity is violated by an $O(\log n)$ factor.
\end{oneLiners}
We get improved results for the {\em soft-capacities} version where we can place multiple facilities in the same location.
\end{abstract}

\section{Introduction}
The capacitated $k$-center problem is a classic optimization problem where a finite metric space $(X,d)$ needs to be partitioned into $k$ clusters so that  every  cluster has cardinality at most
some specified value $L$, and the objective is to minimize the maximum intra-cluster distance. This problem introduced by Bar-Ilan et al~\cite{Bar-IlanKP93} has many applications~\cite{LuptonMY98,MorganL77,Murthy1983AnAA}. 
One application is deciding placement of machine locations (centers of clusters) in a network scheduling environment where jobs arise in a metric space and the objective function has a  job-communication (intra-cluster distance) and machine-load (cardinality)
component~\cite{PSW97}. 
 The above problem is {\em homogeneous} in the sizes of the clusters, that is, it has the same cardinality constraint $L$ for each cluster. In many applications, one would ask for a \emph{heterogeneous} version of the problem where we have a different cardinality constraint for the clusters.
For instance in the network scheduling application above, suppose we had machines of differing speeds. We could possibly load higher-speed machines with more jobs than lower-speed ones. In this paper, we study  this heterogenous version.

\begin{definition}\emph{(The \mckc Problem\footnote{Technically, we should call our problem the Heterogeneous Capacitated $k$-Supplier Problem since we can only open centers in $F$. However, we avoid making this distinction throughout this paper.}.)}
	We are given a metric space $(X = F\cup C,d)$  where $C$ and $F$ represent the clients and facility locations.
	We are also given a collection of {\em heterogeneous} capacities: $(k_1,c_1), (k_2,c_2),\ldots, (k_P,c_P)$ with $k_i$ copies of capacity $c_i$.
	The objective is to install these capacities at unique locations $F'\subseteq F$, and find an assignment $\phi:C\to F'$ of clients to these locations,
	such that for any $i\in F'$ the number of clients $j$ with $\phi(j) = i$ is at most the capacity installed at $i$, and $\max_{j\in C} d(j,\phi(j))$ is minimized.
	%
	A weaker version, which we call \mckc  with soft capacities, allows multiple capacities to be installed at the same location.
\end{definition}
\noindent
Note that when all $c_p = L$ and $\sum_p k_p = k$, we get back the usual capacitated $k$-center problem.
The \mckc problem is relevant in many applications where the resources available are heterogenous. The machine placement problem was one example which has applications in network scheduling~\cite{QiuSZ15,ImM15a} and distributed databases~\cite{MorganL77,SKRN15}. Another example is that of  vehicle routing problems with  fleets of different speeds~\cite{GortzMN016}. A third relevant application may be clustering; often clusters of equal sizes are undesirable~\cite{GuhaRS01} and explicitly introducing heterogeneous constraints might lead to desirable clusters.
In this paper, we investigate the worst-case complexity of the \mckc problem. 

Bar-Ilan et al~\cite{Bar-IlanKP93} gave a $10$-approximation for the homogeneous capacitated $k$-center problem which was improved to a $6$-factor approximation by Khuller and Sussmann~\cite{KhullerS00}. One cannot get a better than $2$-approximation even for the {\em uncapacitated} $k$-center problem~\cite{HochbaumS85}. More recently, the {\em non-uniform} capacitated $k$-center problem was considered~\cite{CyganHK12,AnBCGMS14} in the literature:~in this problem every facility $v\in F$ has a pre-determined capacity $c_v$ if opened (and $0$ otherwise). We remark that the non-uniform version and our heterogeneous version seem unrelated in the sense that none is a special case of the other.
Cygan et al~\cite{CyganHK12} gave an $O(1)$-approximation for the problem which was improved to a $11$-approximation by An et al~\cite{AnBCGMS14}.


%
%
%
\subsection*{Connection to Non-Uniform Max-Min Allocation Problems.}
One main finding of this paper is the connection of the \mckc problem to {\em non-uniform} max-min allocation (also known as Santa Claus~\cite{BansalS06}) problems, which
 underscores its difficulty and difference from  the homogeneous capacitated $k$-center problems. 
We use the machine scheduling parlance to describe the max-min allocation problems.

\begin{definition}[$Q||C_{min}$ and \cckp]
In the\footnote{(Ab)using Graham's notation} $Q||C_{min}$ problem, one is given $m$ machines with demands $D_1,\ldots,D_m$ and $n$ jobs with capacities $c_1,\ldots,c_n$,
and the objective is to find an assignment of the jobs to machines satisfying each demand.
In the {\em cardinality constrained}  non-uniform max-min allocation problem, denoted as the \cckp problem, each machine further comes with a cardinality constraint $f_i$, and a feasible solution cannot allocate more than $f_i$ jobs to machine $i$. The objective remains the same. An $\alpha$-approximate feasible solution assigns each machine $i$  total capacity at least $ D_i/\alpha$.
\end{definition}
\noindent
We now show how these problems arise as special cases
of the \mckc problem, even with soft capacities.
\begin{remark}[Reduction from $Q|f_i|C_{min}$]\label{frem:cckp}
	Given an instance $\calI$ of $Q|f_i|C_{min}$, construct the instance of \mckc as follows. The capacities available to us are precisely the capacities of the jobs in $\calI$.
	The metric space is divided into $m$ groups $(F_1\cup C_1),\ldots,(F_m\cup C_m)$ such that the distance between nodes in any group is $0$ and across groups is $1$.
	Furthermore, for $1\leq i\leq m$, $|F_i| = f_i$ and $|C_i| = D_i$. Observe that the \mckc instance has a $0$-cost, capacity-preserving solution iff $\calI$ has a feasible assignment.
\end{remark}
The $Q||C_{min}$ and $Q|f_i|C_{min}$ problems are strongly NP-hard.\footnote{A simple reduction from 3-dimensional matching shows NP-hardness of \cckp and $Q || C_{min}$ even when the demands and capacities are polynomially bounded.} Therefore, no non-trivial approximation to \mckc, even the soft-version, exists unless we {\em violate} the capacities.
This observation, which is in contrast to the homogeneous version, motivates us to look at bicriteria approximation algorithms.
%
%
%
%
%
\begin{definition}[$(a,b)$-Bicriteria Approximation.]
	Given an instance of the \mckc problem, an $(a,b)$-approximate feasible solution installs $k_p$ units of $c_p$ capacity, and
	assigns clients to  facilities at most $a\cdot\opt$ away and the number of clients assigned to a facility where a capacity
	$c_p$ has been opened\footnote{We add the ceiling to avoid pesky rounding issues.} is $\leq \ceil{bc_p}$. An $(a,b)$-bicriteria approximation algorithm always returns an $(a,b)$-approximate feasible solution.
\end{definition}
Although bicriteria approximation algorithms may be unsatisfactory, sometimes these can give unicriteria approximations for other related problems.
We mention one application that  we alluded to above, and in fact was the starting point of this research,  which may be of independent interest.
\begin{definition}\emph{(Machine Placement Problem for Network Scheduling.)}\label{fdef:mpp}
	The input is a metric space $(X=F\cup C,d)$ with jobs with processing times $p_j$ at locations $C$. We are also given $P$ machines with speeds $s_1,s_2,\ldots,s_P$.
	The goal is to find a placement of these machines on $F$ and schedule the jobs on these machines so as to minimize the makespan. A job can be scheduled on a machine only after it reaches the location of the machine. In the ``soft" version of the problem, multiple machines may be placed in the same location.
\end{definition}
Although we do not prove it in this paper, any $(a,b)$-bicriteria approximation algorithm for (soft) \mckc problem implies an $O(a+b)$ approximation for the (soft) machine placement problem.

\subsection{Results}
The reduction in Remark~\ref{frem:cckp} does not rule out arbitrarily small violations to the capacity. Indeed the $Q||C_{min}$ problem has a PTAS~\cite{AzarE98}.
Our first couple of results give logarithmic approximation to the cost with $(1+\epsilon)$-violations to the capacities.
\begin{theorem}\label{fthm:1}
	Fix an $\eps > 0$. There exists an $(O(\log n/\eps), (1+\eps))$-bicriteria approximation algorithm for the \mckc problem running in time $C_\eps^{\tilde{O}(\log^3n)}$ for a constant $C_\epsilon$ depending only on $\epsilon$. There exists an $(O(\log n/\eps), (1+\eps))$-bicriteria approximation algorithm for the \mckc problem with soft capapcities running in time $n^{O(1/\epsilon)}$.
\end{theorem}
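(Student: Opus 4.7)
The plan is to reduce \mckc to the \cckp problem while losing an $O(\log n/\eps)$ factor in distance and a $(1+\eps)$ factor in capacity, and then invoke the QPTAS for \cckp (\Thm{q}) or the known PTAS for $Q\|C_{\min}$ for the soft-capacity version. First, I would guess the optimum radius $\opt$; since $\opt \in \{d(i,j) : i \in F, j \in C\}$, only $|F||C|$ values need be tried, and after scaling we may assume $\opt = 1$. Then form the threshold graph $H$ on $F \cup C$ by including edges $\{i,j\}$ with $d(i,j) \le 1$; any optimal solution assigns each client to an $H$-neighbour in $F$.

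The heart of the argument is a region-growing decomposition of $H$. Starting from any remaining client $j_0$, grow a BFS ball of clients, stopping at the first radius $r$ at which the set $J = \{ \text{clients within } r \text{ hops of } j_0\}$ satisfies $|\Gamma(\Gamma(J)) \setminus J| \le \eps |J|$, where the outer $\Gamma$ is taken in $H$. A standard multiplicative potential argument shows some such $r \le O(\log n/\eps)$ always exists: otherwise $|J|$ would grow by a factor of at least $(1+\eps)$ at every layer and exceed $n$ in $O(\log n/\eps)$ steps. Let $T = \Gamma(J)$; by construction the pair $(J,T)$ is a \emph{complete neighborhood set}, meaning every client in $J$ must (in any feasible unit-radius solution) be assigned to a facility in $T$, and the diameter of $J \cup T$ in the original metric is at most $2(r+1) = O(\log n/\eps)$. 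Delete the boundary clients in $\Gamma(T)\setminus J$ (at most $\eps|J|$ of them), charging each deleted client to a distinct neighbouring client in $J$ so that the effective demand routed through any facility in $T$ grows by a multiplicative factor of at most $(1+\eps)$. Remove $J \cup T$ from $H$ (and the boundary clients from $C$) and recurse. The output is a partition of the surviving vertices into complete neighborhood sets $\{(J_\ell, T_\ell)\}$, each of diameter $O(\log n/\eps)$.

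Now define a \cckp instance $\calI'$ with one machine per $\ell$ having demand $D_\ell = (1+\eps)|J_\ell|$ and cardinality bound $f_\ell = |T_\ell|$, and with the original heterogeneous capacities $(k_p, c_p)$ as jobs. Any feasible unit-radius \mckc solution projects to a feasible solution of $\calI'$ (each $J_\ell$ is served entirely from $T_\ell$, using at most $|T_\ell|$ capacities whose total value meets $|J_\ell|$, and we scale up by $(1+\eps)$ to absorb the boundary charges). Conversely, a $(1+\eps)$-approximate feasible solution to $\calI'$ tells us which capacities to install inside each $T_\ell$; placing them at arbitrary distinct facilities of $T_\ell$ and assigning all clients of $J_\ell$ (together with the charged boundary clients) to these facilities yields a \mckc solution of radius $O(\log n/\eps)$ with every capacity violated by at most $(1+\eps)^2 \le 1+O(\eps)$. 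Invoking the QPTAS of \Thm{q} on $\calI'$ with error $\eps$ gives running time $C_\eps^{\tilde O(\log^3 n)}$, proving the first half. For the soft-capacity version, the reduction produces a $Q\|C_{\min}$ instance (no cardinality constraint, since multiple capacities can be colocated), to which we can apply a PTAS running in $n^{O(1/\eps)}$ time, yielding the second half.

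The principal obstacle is the region-growing step: one has to choose the right notion of ``boundary'' so that (i) the stopping condition is guaranteed to fire within $O(\log n/\eps)$ layers, (ii) the resulting $(J,T)$ is truly complete-neighborhood in $H$ (so the reduction to \cckp loses no feasibility), and (iii) the deleted boundary clients can be absorbed into the $(1+\eps)$ capacity slack through a disjoint charging. Setting up the potential so that one layer of clients is grown and then the corresponding facility shell is examined, rather than the other way round, is what makes all three properties hold simultaneously; once this is set up, the rest of the argument is bookkeeping plus a black-box call to the \cckp/$Q\|C_{\min}$ approximations.
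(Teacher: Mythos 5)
Your overall route is exactly the paper's: region-grow in the threshold graph to extract $O(\log n/\eps)$-diameter complete neighborhood sets $(J_\ell,T_\ell)$ with a small deleted boundary, reduce to a \cckp instance (respectively $Q||C_{min}$ for soft capacities), and invoke the QPTAS of Theorem~\ref{fthm:q} (respectively the PTAS for $Q||C_{min}$). Two pieces of your bookkeeping, however, do not deliver the claimed $(1+\eps)$ capacity bound as written. First, charging each deleted boundary client to a \emph{distinct} client of $J_\ell$ only guarantees that each surviving client absorbs at most one extra unit of demand, so a facility's load can double; to get a $(1+\eps)$ factor you must charge \emph{fractionally}, spreading each deleted client over all of $J_\ell$ so that each surviving client absorbs at most $\eps$ units (this is exactly the paper's notion of a $(\tau,\rho)$-deletable set and Claim~\ref{fclm:prelim3}). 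Second, setting the machine demand to $D_\ell=(1+\eps)|J_\ell|$ can render the \cckp instance infeasible even when the guess of $\opt$ is correct: the optimum only guarantees total capacity $|J_\ell|$ installed inside $T_\ell$, not $(1+\eps)|J_\ell|$, so the QPTAS could legitimately report infeasibility. The paper avoids this by keeping $D_\ell=|J_\ell|$ and pushing the entire boundary charge into the capacity-violation side of the guarantee rather than into the demands. Both fixes are local and leave the rest of your argument intact.
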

We are not aware of non-trivial results for the \cckp problem (although, see Remark~\ref{frem:cckp-prev} below). We therefore call out the special case of the above theorem.
This makes it rather improbable for \cckp to be APX-hard, and we leave the design of a PTAS as a challenging open problem.
\begin{theorem}\label{fthm:q}
	There is a QPTAS for the \cckp problem.
\end{theorem}

Our main technical meat of the paper is in reducing the logarithmic factor in the approximation to the distance.
We can give $O(1)$-approximations if the violations are allowed to be $O(1)$ in the soft-capacity case and $O(\log n)$ in the general case.
These algorithms run in polynomial time.
\begin{theorem}\label{fthm:2}
	There is a polynomial time  $(O(1),O(\log n))$-bicriteria approximation algorithm for the \mckc problem.
\end{theorem}
\begin{theorem}\label{fthm:2a}
	For any $\delta>0$, there is a polynomial time  $(\tilde{O}(1/\delta),2+\delta)$-bicriteria approximation algorithm for the \mckc problem with soft capacities.
\end{theorem}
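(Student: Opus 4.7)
My plan follows the LP-based strong-decomposition strategy sketched for Theorem~\ref{fthm:2}, specialized to the soft-capacity setting: because capacities may be co-located, the scheduling subproblem induced by the decomposition is the uncardinality-constrained $Q||C_{min}$ rather than the harder \cckp, and $Q||C_{min}$ admits a simple $2$-approximation via a Shmoys--Tardos-style rounding of its assignment LP. This is what replaces the $O(\log n)$ capacity violation of Theorem~\ref{fthm:2} by the promised $2+\delta$. After guessing $\opt$ (binary search over pairwise distances, scaled to $1$), I write the natural LP relaxation for soft-capacity \mckc and strengthen it by \emph{supply polyhedron} inequalities: for every collection $\calT = \{(J_i, T_i)\}$ that could arise from the decomposition, the fractional capacity the LP allocates to $\bigcup_i T_i$ must lie in the assignment-LP polyhedron for the $Q||C_{min}$ instance with demands $|J_i|$. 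This polyhedron is $2$-approximate, in the sense that any feasible fractional supply rounds to an integral supply satisfying each demand up to factor~$2$.

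\textbf{Decomposition and rounding.} From a feasible LP solution I extract the strong decomposition. After deleting a $\delta$-fraction of clients and charging them back (costing a $(1+\delta)$ capacity overhead), the residual instance splits into pieces of diameter $\tilde{O}(1/\delta)$ of two kinds. ``Roundable'' pieces have enough $y$-mass that rounding the LP's fractional capacity prescriptions down and installing each selected copy at any facility of the piece yields total supply covering the piece's demand up to $(1+\delta)$; these pieces are dispatched locally at cost $\tilde{O}(1/\delta)$ in distance and $(1+\delta)$ in capacity. ``Complete-neighborhood'' pairs $(J_i, T_i)$ have $\Gamma(J_i) \subseteq T_i$ at radius~$1$, so every client in $J_i$ is served by a facility in $T_i$ in the optimum; I aggregate these pairs into the collection $\calT$ used above. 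The supply-polyhedron constraint guarantees the assignment LP for the induced $Q||C_{min}$ instance is feasible, and its $2$-approximate rounding yields an integral assignment of capacity copies satisfying each demand $|J_i|$ up to factor~$2$; placing the selected copies arbitrarily inside $T_i$ and routing the clients of $J_i$ to them costs $\tilde{O}(1/\delta)$ in distance and $2(1+\delta)$ in capacity, which rescales to $2+\delta$.

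\textbf{Round-and-cut and the main obstacle.} The strengthened LP has exponentially many supply-polyhedron constraints (one per candidate $\calT$), so I solve it via round-and-cut: run ellipsoid with a separation oracle that attempts the above rounding on the current fractional solution and, whenever it fails on some $\calT$ produced by the decomposition, returns the violated supply-polyhedron inequality as a cutting plane. The main obstacle is making this oracle rigorous: I must prove that on every LP point the decomposition either yields a valid $(\tilde{O}(1/\delta), 2+\delta)$-integral solution or else exhibits an explicit $\calT$ whose induced $Q||C_{min}$ assignment LP is fractionally infeasible against the capacity mass the current solution ships into $\bigcup_i T_i$, and I must extract this certificate of infeasibility as an efficiently separable linear inequality. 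Given that oracle, the remaining work is the routine composition of the $(1+\delta)$ decomposition losses with the $2$-factor $Q||C_{min}$ rounding, all absorbed into the final $(\tilde{O}(1/\delta), 2+\delta)$ bound by rescaling $\delta$ by a constant at the outset.
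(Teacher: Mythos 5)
Your proposal follows essentially the same route as the paper: guess $\opt$, strengthen the natural LP with supply-polyhedron constraints, apply the strong decomposition (\Cref{fthm:decomp}) into roundable and complete-neighborhood sets, and plug the $2$-approximate assignment-LP supply polyhedron for $Q||C_{min}$ (\Cref{fthm:asslp}) into the general reduction (\Cref{fthm:reduction}). Two remarks. First, what you single out as ``the main obstacle'' --- making the round-and-cut separation oracle rigorous --- disappears entirely in the soft-capacity case: the paper shows (\Cref{flem:implied}) that for $\calP_\mathsf{ass}$ the added constraints \eqref{feq:lp7} are already implied by \eqref{feq:lp1}--\eqref{feq:lp6}, so no ellipsoid loop is needed and one simply solves the basic LP once. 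Your round-and-cut version is still valid (it is exactly how the paper handles the hard-capacity case of \Cref{fthm:2}), but you are doing unnecessary work and missing the reason the soft-capacity algorithm is ``much more efficient.'' Second, and this is a genuine gap as written: the $2$-approximation for $Q||C_{min}$ does \emph{not} follow from ``a Shmoys--Tardos-style rounding of its assignment LP.'' \Cref{fthm:shmoystardos} only guarantees each machine $D_i - C_i$, where $C_i$ is the largest capacity fractionally assigned to it; since \eqref{feq:asslp2} truncates capacities at $D_i$, a machine served fractionally by large jobs can end up with a guarantee of $0$. The paper instead proves \Cref{fthm:asslp} by a greedy assignment (machines and jobs in decreasing order), certifying any failure via an explicit Farkas-dual solution to \eqref{feq:assdual1}--\eqref{feq:assdual3}. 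To repair your argument you would need either that greedy/duality proof or an added case split separating machines whose LP mass is mostly on jobs with $c_j \ge D_i$ (handled by a matching argument) from the rest (where Shmoys--Tardos does lose at most half).
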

In particular we have polynomial time $O(1)$ and $O(\log n)$ approximation algorithms for the machine placement problem of Definition~\ref{fdef:mpp}.
Once again, we call out what we believe is the first polynomial time non-trivial approximation to \cckp.
\begin{theorem}\label{fthm:cckp}
	There is a polynomial time logarithmic approximation algorithm for the \cckp problem.
\end{theorem}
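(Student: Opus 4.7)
My plan is to prove Theorem~\ref{fthm:cckp} by rounding a configuration LP for \cckp. For each machine $i$, call a job-set $S\subseteq[n]$ a \emph{configuration} if $|S|\le f_i$ and $\sum_{j\in S}c_j\ge D_i$; introduce a variable $x_{i,S}\in[0,1]$ and write the covering/packing LP
\[ \textstyle \sum_S x_{i,S}\ge 1\ \forall i,\qquad \sum_{i,\,S\ni j} x_{i,S}\le 1\ \forall j. \]
This LP has exponentially many variables, but the separation oracle for its dual asks, for fixed job-weights, for a minimum-weight configuration per machine, which is a cardinality-constrained knapsack covering problem admitting an FPTAS. Combined with binary search on the LP's target, the ellipsoid method then yields an approximately feasible $x$ in polynomial time.

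Given such $x$, I would bucket the jobs geometrically by capacity: $B_t=\{j:c_j\in[2^t,2^{t+1})\}$, giving $O(\log D)$ buckets where $D$ is the ratio of largest to smallest demand (polynomially bounded after a standard rescaling, so $O(\log D)=O(\log n)$). For each machine $i$, pigeonhole over buckets picks a single index $t^*(i)$ such that at least a $1/\Theta(\log D)$-fraction of $i$'s LP mass comes from configurations contributing chiefly from jobs in $B_{t^*(i)}$. Restricting each machine to its preferred bucket and aggregating yields a fractional bipartite assignment $y_{ij}$ with $\sum_j y_{ij}c_j\ge D_i/\Theta(\log D)$ and $\sum_j y_{ij}\le f_i$. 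Since capacities within a single bucket differ by at most a factor of two, $y$ can be rounded via a Shmoys--Tardos style flow-integrality argument on the cardinality-constrained bipartite assignment polytope; this preserves $f_i$ exactly and loses only another constant factor, giving each machine total integral capacity $\Omega(D_i/\log D)$.

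The main obstacle is that different machines may favour the same bucket and collectively overuse its jobs, so the $y_{ij}$'s produced by per-machine pigeonhole need not automatically satisfy $\sum_i y_{ij}\le 1$. I would resolve this by either filtering the LP to configurations dominated by the chosen buckets and re-solving, or by invoking the round-and-cut framework developed elsewhere in the paper: each failure of the rounding procedure would exhibit a violated supply-polyhedron constraint, which is then fed back into the ellipsoid method. Making this interplay go through while keeping the multiplicative loss at $O(\log D)$ (rather than a higher polylogarithmic factor), and aligning it cleanly with the supply-polyhedron viewpoint used to prove Theorem~\ref{fthm:2}, is where the real technical work lies.
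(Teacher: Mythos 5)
Your overall architecture --- a configuration LP solved via the ellipsoid method with an FPTAS-based approximate separation oracle for the cardinality-constrained knapsack subproblem, followed by bucketing and a Shmoys--Tardos rounding --- matches the paper's (Lemma~\ref{flem:conf-so}, Theorem~\ref{fthm:shmoystardos}, and Lemma~\ref{flem:conf-round}). However, there is a genuine gap in the rounding step, and the obstacle you flag as the ``main'' one is not the real difficulty. Restricting each machine's fractional assignment to the jobs of its preferred bucket only ever keeps a sub-multiset of what the LP already allocated to that machine, so the packing constraints $\sum_i y_{ij}\le 1$ are preserved automatically; no re-solving and no round-and-cut is needed for that (round-and-cut enters the paper only in the reduction from \mckc, Theorem~\ref{fthm:reduction}, not in the standalone \cckp algorithm).

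The real difficulty is with machines whose demand is met, fractionally, by a small number of \emph{large} jobs. After restricting machine $i$ to its preferred bucket you may well have $\sum_j y_{ij}<1$: for instance, half of $i$'s configuration mass may sit on singletons $\{j\}$ with $c_j\approx D_i$ while the other half contributes nothing to that bucket, which still passes your pigeonhole test. Shmoys--Tardos only guarantees integral load at least $\sum_j y_{ij}c_j - \max_{j:y_{ij}>0}c_j$, which is vacuous here, and the ``capacities within a factor of two'' observation rescues you only when the fractional job count in the chosen bucket is at least $1$. Moreover several machines can fractionally share the same large job, so one needs an integrality/Hall-type argument to decide which machine actually receives it and to show the others can be compensated. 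This is precisely what the paper's Steps 2--3 of Lemma~\ref{flem:conf-round} (the \textsf{FixLargeMachine} and \textsf{FixBucket} subroutines, which reduce to at most one ``hybrid'' machine per demand bucket, followed by a bipartite matching exploiting integrality of the supplies) accomplish, and it is the technical heart of the proof; the paper's definition of ``large'' relative to $D_i/(3\log_2 D)$, rather than by absolute capacity buckets, is what guarantees that on the remaining all-small configurations the Shmoys--Tardos loss is dominated by the fractional guarantee. Your proposal has no substitute for this step.
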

We end the section by stating what we believe was the frontier of knowledge for the \cckp problem.
\begin{remark}[Known algorithms for \cckp]\label{frem:cckp-prev}\emph{
	To our knowledge, \cckp has not been explicitly studied in the literature. 
	However, in a straightforward manner one can reduce \cckp to {\em non-uniform}, restricted-assignment max-min allocation problem (which we denote as $Q|restr|C_{min}$) where,  instead of the cardinality constraint dictated by $f_i$, we restrict jobs to be assigned only to a subset of the machines:
	for every machine $i$ and job $j$, $j$ can be assigned to $i$ iff $c_j \geq D_i/2f_i$. It is not hard to see that a $\rho$-approximation for the $Q|restr|C_{min}$ implies a $2\rho$-approximation for the \cckp instance.
}

\emph{
Clearly $Q|restr|C_{min}$  is a special case of the general max-min allocation problem~\cite{ChakrabartyCK09} and therefore for any $\epsilon>0$, there are $n^{O(1/\epsilon)}$-time algorithms
achieving $O(n^\epsilon)$-approximation. We do not know of any better approximations for $Q|restr|C_{min}$. The so-called Santa Claus problem is the {\em uniform} version $P|restr|C_{min}$ where all demands are the same~\cite{BansalS06}. This has a $O(1)$-approximation algorithm~\cite{Feige08,AsadpourFS12,PolacekS16}. However all these algorithms use the configuration LP; unfortunately for the non-uniform version $Q|restr|C_{min}$, the configuration LP has an integrality gap of $\Omega(\sqrt{n})$ (this example is in fact the same example of~\cite{BansalS06} proving the gap for general max-min allocation -- see Appendix~\ref{sec:app-bsig}.)}
\end{remark}

\subsection{Outline of Techniques} \label{fsec:overview}
We give a brief and informal discussion of how we obtain our results, referring to the formal definition whenever needed.
In a nutshell, we obtain our results by reducing the \mckc problem {\em to} the \cckp problem (complementing the reduction {\em from} discussed in Remark~\ref{frem:cckp}).
We provide two reductions -- the first incurs logarithmic approximation to the cost but uses black-box algorithms for \cckp, the second incurs $O(1)$-approximation to the cost but uses ``LP-based'' algorithms for \cckp.
Both these reductions proceed via {\em decomposing} the given instance of the \mckc problem. 

\medskip \noindent {\bf Warm-up: Weak Decompostion.}
Given a \mckc instance, suppose we \emph{guess} the optimal objective value, which we can assume to be $1$ after scaling. Then, we construct a graph connecting client $j$ with facility location $i$ iff $d(i,j) \leq 1$.
Then, starting at an arbitrary client and using a simple region-growing technique (like those used for the graph cut problems~\cite{LeightonR99,GargVY96}), we can find a set of clients $J_1$ of along with their neighboring facility locations $T_1 = \Gamma(J_1)$\footnote{For $S \subseteq C \cup F$, $\Gamma(S)$ denotes the neighboring vertices  of $S$.}, such that: (a) the diameter of $J_1$ is $O(\log n/\epsilon)$, and (b) the additional clients in the boundary $|\Gamma(T_1) \setminus J_1|$ is at most $\epsilon |J_1|$. Now, we simply \emph{delete} these boundary clients and charge them to $J_1$, incurring a capacity violation of $(1+\epsilon)$. Moreover, note that in an optimal solution, \emph{all} the clients in $J_1$ \emph{must be} assigned to facilities opened in $T_1$. Using this fact, we define our first demand in the \cckp instance by $D_1 = |J_1|$ and $f_1 = |T_1|$. Repeating this process, we get a collection of $\{(J_i,T_i)\}$ which naturally defines our \cckp instance. It is then easy to show that an $\alpha$-approximation to this instance then implies an $(O(\log n/\epsilon),  \alpha(1+\epsilon))$-bicriteria algorithm for \mckc.

\medskip \noindent {\bf LP-Based Strong Decompostion.}  It is not a-priori clear how to modify the above technique to obtain better factors for the cost. To get $O(1)$-approximations, we resort to linear programming relaxations. One can write the natural LP relaxation \eqref{feq:lp1}-\eqref{feq:lp6} described in Section~\ref{fsec:prelims} -- the relaxation has $y_{ip}$ variables which denote opening a
facility with capacity $c_p$ at $i$.
Armed with a feasible solution to the LP, we prove a \emph{stronger decomposition theorem} (Theorem~\ref{fthm:decomp}): we show that we can delete a set of clients $\Cd$ which can be charged to the remaining ones, and then partition the remaining clients and facilities into {\em two} classes.
One class $\calT$  is the so-called \emph{complete neighborhood sets} of the form $\{(J_i, T_i)\}$ with $\Gamma(J_i) \subseteq T_i$ as described above --- we define our \cckp instance using these sets. The other class $\calS$  is of, what we call, {\em roundable} sets (Definition~\ref{fdef:rnding-mkc}). Roundable sets have ``enough'' $y$-mass such that installing as many  capacities as prescribed by the LP (rounded down to the nearest integer) supports the total demand incident on the set (with a $(1+\epsilon)$-factor capacity violation). Moreover, the diameter of any of these sets constructed is $\tilde{O}(1/\epsilon)$.

\medskip \noindent {\bf Technical Roadblock.}
It may seem that the above decomposition theorem implies a reduction to the \cckp problem -- for the class $\calT$ form the \cckp instance and use black-box algorithms, while the roundable sets in $\calS$ are taken care of almost by definition.
The nub of the problem lies in the {\em supply} of capacities to each of these classes.  Sure, the \cckp instance formed from $\calT$ must have a solution if the \mckc problem is feasible, {\em but only if all the $k_p$ copies of capacity $c_p$ are available to it.} However, we have already used up some of these copies to take care of the $\calS$ sets, and what we actually have available for $\calT$ is what the {\em LP prescribes.} And this can be very off (compared to the case when the \cckp instance had all the $k_p$ copies to itself). In fact, this natural  LP relaxation has bad integrality gap (Remark~\ref{frem:ig}), that is, although the LP is feasible, any assignment will violate capacities to $\Omega(n)$ factors.

\medskip \noindent {\bf The Supply Polyhedra.}
The above method would be fine if the supply  prescribed by the LP to the complete-neighborhood sets in $\calT$ would satisfy (or approximately satisfy) the demands of the machines in the corresponding \cckp instance. This motivates us  to define {\em supply polyhedra} for \cckp and other related problems. Informally, the supply polyhedron (Definition~\ref{fdef:supp-poly}) of a \cckp instance is supposed to capture all the vectors $(s_1,\ldots,s_n)$ such that $s_j$ copies of capacity $c_j$ can satisfy the demands of all the machines. Conversely, any vector in this polyhedron should also be a feasible (or approximately feasible) supply vector for this instance.

 If such an object $\calP$ existed, then we could strengthen our natural LP relaxation as follows. For {\em every} collection $\calT$ of complete-neighborhood sets, we add a constraint (described as \eqref{feq:lp7}) stating that the fractional capacity allocated to the facilities in $\calT$ should
 lie in the supply polyhedron of the corresponding \cckp instance. Note that this LP has exponentially many constraints, and it is not clear how to solve it. However, we can use the ``round-and-cut'' framework exploited earlier in many papers~\cite{CarrFLP00,ChakrabartyCKK11,AnSS14,DemirciL16,Li15,Li16}. Starting with a solution $(x,y)$, we use the strong decomposition theorem to obtain the set $\calT$ and check if the restriction of $y$ to the facilities in $\calT$ lies in the supply polyhedron of the corresponding \cckp instance. If yes, then we are done.
 If no, then we have obtained a separating hyperplance for the super-large LP\eqref{feq:lp1}-\eqref{feq:lp7}, and we can run the ellipsoid algorithm. In sum, we obtain an algorithm which reduces the \mckc problem to obtaining good supply polyhedra for the \cckp problem (Theorem~\ref{fthm:reduction}).

\medskip \noindent {\bf Supply Polyhedron for \cckp and $Q||C_{min}$.}
Do good supply polyhedra exist for \cckp or even the simpler $Q||C_{min} $ problem? Unfortunately, we show (Theorem~\ref{fthm:no-supp}) that there cannot exist {\em arbitrarily good} supply polyhedra. More precisely, there exists an instance of the $Q||C_{min}$ problem such that
for {\em any} convex set which contains all feasible supply vectors, it also contains integer supply vectors which can't satisfy all demands even when a violation of $1.001$ in capacities  is allowed. This observation exhibits the limitation of our approach: we cannot hope to obtain $(1+\epsilon)$-violation to the capacities for arbitrarily small $\epsilon$.

Nevertheless, for $Q||C_{min}$ we describe a $2$-approximate supply polyhedron (Theorem~\ref{fthm:asslp}) based on the natural assignment LP, which along with our reduction proves Theorem~\ref{fthm:2a}. In fact, we show (Lemma~\ref{flem:implied}) that for the \mckc problem with soft capacities, the strong  inequalities \eqref{feq:lp7} that we add for this $2$-approximate supply polyhedron are already implied by \eqref{feq:lp1}-\eqref{feq:lp6}.

For $Q|f_i|C_{min}$ we describe a supply polyhedron based on the {\em configuration LP} and prove that is $O(\log D)$-approximate (Theorem~\ref{fthm:conflp}) where $D$ is the ratio of maximum and minimum demand. This also implies a {\em polynomial time} $O(\log D)$-approximation algorithm for the \cckp problem. As remarked in Remark~\ref{frem:cckp-prev}, this is considerably better than any polynomial time algorithm implied before. 
We complement this by showing (Theorem~\ref{fthm:conf-ig}, \Cref{fsec:conf-ig}) that the integrality gap of the configuration LP is $\Omega(\log D/\log\log D)$.
Moreover, our example also shows (Theorem~\ref{fthm:supply-bad-cckp}) that any convex set containing all feasible supply vectors for instances of \cckp also contains an integer supply vector which cannot satisfy all demands even when the 
capacitiy violation is allowed to be $\Theta(\log n/\log\log n)$. This settles the best algorithms we can obtain via our techniques.
On the other hand, using fairly standard tricks of enumeration and rounding, we can provide a QPTAS for \cckp (Theorem~\ref{fthm:q}). We leave the complexity of \cckp as an interesting open question.

\subsection{Related Work}\label{fsec:related}
Capacitated Location problems have a rich literature although most of the work has focused on versions where each facility arrives with a predetermined capacity and the decision process is to whether open a facility or not.
We have already mentioned the state of the art for capacitated $k$-center problems.
For the capacitated facility location problem a $5$-approximation is known via local search~\cite{BansalGG12}, while more recently an $O(1)$-approximate {\em LP-based} algorithm was proposed~\cite{AnSS14}.
All these are true approximation algorithms in that they do not violate capacities. It is an outstanding open problem to obtain true approximations for the capacitated $k$-median problem.
The best known algorithm is  the recent work of  Demirci and Li~\cite{DemirciL16} who for any $\epsilon>0$ give a $\poly(1/\epsilon)$-approximate algorithm violating the capacities by $(1+\epsilon)$-factor.
The technique of this algorithm and its precursors~\cite{AnSS14,Li15,Li16}  are similar to ours in that they follow the round-and-cut strategy to exploit exponential sized linear programming relaxations.\smallskip

The \cckp problem is a cardinality constrained max-min allocation problem. There has been some work in the scheduling literature on cardinality-constrained min-max problem.
When all the machines are identical, the problem is called the $k_i$-partitioning problem~\cite{BabelKK98}.
When the number of machines is a constant, Woeginger~\cite{Woe05} gives a FPTAS for the problem, and the best known result is a $1.5$-approximation due to Kellerer and Kotov~\cite{KellererK11}.
To our knowledge, the related speeds case has not been looked at. When the machines are unrelated,  Saha and Srinivasan~\cite{SahaS10} showed a $2$-approximation; in fact this follows from the Shmoys-Tardos rounding of the assignment LP~\cite{ShmoysT93}.\smallskip

As we have discussed above, the \mckc problem behaves rather differently than the usual homogeneous capacitated $k$-center problem. This distinction in complexity when we have heterogeneity  in resource is a curious phenomenon which deserves more attention.
A previous work~\cite{ChakrabartyGK16} of the first two authors (with P.~Goyal) looked at the (uncapacitated) $k$-center problem where the heterogeneity was in the radius of the balls covering the metric space.
As in our work, even for that problem one needs to resort to bicriteria algorithms where the two criteria are cost and {\em number} of centers opened. That paper gives an $\left(O(1),O(1)\right)$-approximation algorithm.
In contrast, we do not wish to violate the number of capacities available at all (in fact, the problem is considerably easier if we are allowed to do so -- we do not expand on this any further).

\subsection{Roadmap}
In ~\Cref{fsec:prelims}, we set up the notation and key definitions which we will subsequently use in the remaining sections. Then in~\Cref{fsec:regiongrowing}, we give our simpler weak-decomposition theorem which (upto a logarithmic factor in the distance objective) effectively reduces \mckc to \cckp. To overcome this logarithmic loss in the distance objective, we turn to an LP-based approach and a stronger decomposition theorem. But to help us along the way, we introduce and state our main results about the so-called \emph{supply polyhedra} for \cckp in~\Cref{fsec:supplypolyhedra}. In~\Cref{fsec:o1} we then state our strong decomposition theorem and show how it can be combined with good supply polyhedra to get~\Cref{fthm:2,fthm:2a}. In the next~\Cref{fsec:decomp-proof}, we prove the strong decomposition theorem. Subsequently, in~\Cref{fsec:asslp,fsec:conflp}, we prove the existence of good supply polyhedra for $Q || C_{min}$ and \cckp. In Section~\ref{fsec:conf-ig} we provide integrality gap examples.
Finally in~\Cref{fsec:qptas} we show that \cckp admits a QPTAS, thereby proving~\Cref{fthm:q}.

\section{Technical Preliminaries}\label{fsec:prelims}
Given an \mckc instance, we start by guessing $\opt$. We either prove $\opt$ is infeasible, or find an $(a,b)$-approximate allocation of clients to facilities.
	We define the bipartite graph $G = (F\cup C,E)$ where $(i,j)\in E$ iff $d(i,j) \leq \opt$. If $\opt$ is feasible, then the following assignment LP\eqref{feq:lp1}-\eqref{feq:lp6}
	must have a feasible solution.
In this LP, we  have opening  variables $y_{ip}$ for every $i\in F,p\in [P]$ indicating whether we open a facility with capacity $c_p$ at location $i$. Recall that the capacities available to us are $c_1, c_2, \ldots, c_P$ -- a facility with
capacity $c_p$ installed on it will be referred to as a {\em type $p$ facility.}
	We have connection variables $x_{ijp}$ indicating the fraction to which client $j\in C$ connects to a facility at location $i$ where a type $p$ facility has been opened.
	We force $x_{ijp} = 0$ for all pairs $i,j$ and type $p$ such that  $d(i,j) > \opt$.

		\begin{minipage}{0.45\textwidth}
			\begin{alignat}{4}
				& \quad \forall j\in C,   &&\quad  \textstyle \sum_{i\in F} \sum_{p\in [P]}  x_{ijp} \geq 1 \label{feq:lp1} \tag{\small{L1}}  \\
				& \quad \forall i\in F,p\in [P] ,  &&\quad  \textstyle \sum_{j\in C}  x_{ijp} \leq c_py_{ip} \label{feq:lp2} \tag{\small{L2}} \\
				& \quad \forall p\in [P], && \quad \textstyle \sum_{i\in F} y_{iq}   \leq k_p \label{feq:lp3}  \tag{\small{L3}}
			\end{alignat}
		\end{minipage}
		~\vline~
		\begin{minipage}{0.45\textwidth}
			\begin{alignat}{4}
				& \quad \forall i\in F, j\in C,p\in [P],  && \quad x_{ijp} \leq y_{ip}\label{feq:lp4}   \tag{\small{L4}} \\
				& \quad \forall i\in F, && \quad \textstyle\sum_{p\in [P]} y_{ip} \leq 1 \label{feq:lp5}  \tag{\small{L5}} \\
				& \quad \forall i\in F,j\in C,p\in [P], && \quad x_{ijp},y_{ip} \geq 0\label{feq:lp6}\tag{\small{L6}}
			\end{alignat}
		\end{minipage}
\smallskip

\noindent		
We say a solution $(x,y)$ is $(a,b)$-feasible if it satisfies \eqref{feq:lp1}, \eqref{feq:lp3}-\eqref{feq:lp6}, and \eqref{feq:lp2} with the RHS replaced by $bc_py^\mathsf{int}_{ip}$, and $x_{ijp} > 0$ only if $d(i,j) \leq a\cdot \opt$,
We desire to find an integral solution $(x^\mathsf{int},y^\mathsf{int})$ which is $(a,b)$-feasible.
The following lemma shows that it suffices just to round the $y$-variables.
\begin{claim}
Given an $(a,b)$-feasible solution $(x,\y)$ where $\y_{ip}\in \{0,1\}$,
we can get  an $(a,b)$-approximate solution to the \mckc problem.
\end{claim}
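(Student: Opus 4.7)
The plan is to observe that once the opening vector $\y$ is integral, the only remaining task is to integrally round the assignment variables $x$, and that this is a bipartite transportation problem whose polytope has integer extreme points.

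First, using \eqref{feq:lp5}, for each location $i \in F$ there is at most one type $p(i)$ with $\y_{i,p(i)} = 1$; let $F' := \{i \in F : \sum_p \y_{ip} = 1\}$ be the set of locations that are opened and, for $i \in F'$, let $p(i)$ denote its (unique) type. Because $\y$ is integral, constraint \eqref{feq:lp3} already gives that for every type $p$ exactly the right number of capacities are installed; this part is outside of what we need to round.

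Next I build a bipartite graph $H$ with vertex classes $C$ and $F'$ and an edge $\{j,i\}$ whenever $d(i,j) \le a \cdot \opt$. From the $(a,b)$-feasibility of $(x,\y)$ we have $x_{ijp(i)} = 0$ whenever $(i,j)$ is not an edge of $H$ (by the distance condition), $\sum_{i \in F'} x_{ij\,p(i)} \ge 1$ for every client $j \in C$ (by \eqref{feq:lp1}, since $\y_{ip}=0$ forces $x_{ijp}=0$ via \eqref{feq:lp4}), and $\sum_{j \in C} x_{ij\,p(i)} \le b\, c_{p(i)}$ for every $i \in F'$ (by the relaxed \eqref{feq:lp2}). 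In particular, since the left-hand side of the last inequality is being upper-bounded, we also have $\sum_{j\in C} x_{ij\,p(i)} \le \lceil b\, c_{p(i)} \rceil$.

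Now I consider the transportation polytope on $H$ with unit demand at every $j\in C$ and capacity $\lceil b\, c_{p(i)} \rceil$ at every $i\in F'$:
\begin{equation*}
\mathcal{Q} := \Big\{\, z \ge 0 \;:\; \textstyle \sum_{i} z_{ij} \ge 1\ \forall j\in C,\ \sum_{j} z_{ij} \le \lceil b\, c_{p(i)} \rceil\ \forall i\in F',\ z_{ij} = 0\text{ if } \{i,j\}\notin H \,\Big\}.
\end{equation*}
The vector $z_{ij} := x_{ij\,p(i)}$ certifies that $\mathcal{Q}$ is non-empty. Since $\mathcal{Q}$ is a bipartite transportation polytope with integer right-hand sides, its constraint matrix is totally unimodular, so $\mathcal{Q}$ has an integer extreme point $z^{\mathsf{int}}$; equivalently one may argue via integrality of bipartite $b$-matching, or by rounding $z$ on the support of an acyclic basic feasible solution.

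Finally, from $z^{\mathsf{int}}$ I read off the assignment $\phi(j) := $ the unique $i$ with $z^{\mathsf{int}}_{ij} \ge 1$; ties (if a client fractionally splits integrally into more than one unit) can be broken by taking any one such $i$. By construction $d(j,\phi(j)) \le a\cdot\opt$ for all $j$, and each open facility $i \in F'$ receives at most $\lceil b\, c_{p(i)} \rceil$ clients, so $(\phi, \y)$ is an $(a,b)$-approximate solution. There is no real obstacle here; the only thing to be a little careful about is taking the ceiling of $b\,c_p$ before invoking integrality of the transportation polytope, which is precisely why the definition of $(a,b)$-approximation in the paper used $\lceil b c_p\rceil$.
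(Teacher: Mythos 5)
Your proposal is correct and is essentially the paper's own argument: both set up a bipartite graph between clients and opened facilities, observe that the given fractional assignment is a feasible point of the resulting transportation/$b$-matching polytope with capacities $\ceil{bc_p}$, and invoke integrality of that polytope to extract an integral assignment within distance $a\cdot\opt$. The only cosmetic difference is that the paper uses nodes $(i,p)$ with $\y_{ip}=1$ rather than appealing to \eqref{feq:lp5} to collapse each location to a single type, which makes the same argument apply verbatim to the soft-capacity variant where \eqref{feq:lp5} is absent.
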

\begin{proof}
Consider a bipartite graph with client nodes $C$ on one side, and nodes of the form $(i,p)$ with $\y_{ip} = 1$ on the other. The node $(i,p)$ has capacity $bc_p$.
Since $(x,\y)$ satisfies the conditions of the lemma, there is a fractional matching in this graph so that each client $j$  is fractionally matched to an $(i,p)$ so that $d(i,j)\leq a\cdot \opt$,
and the total fractional load on $(i,p)$ is $\leq bc_p$. The theory of matching tells us that there is an {\em integral} assignment of clients $j$ to nodes $(i,p)$ such that $d(i,j)\leq a\cdot\opt$
and the number of nodes matched to $(i,p)$ is $\leq \ceil{bc_p}$. Therefore opening a capacity $c_p$ facility at $i$ for all $(i,p)$ with $\y_{ip} = 1$ gives an $(a,b)$-approximate solution to \mckc.
\end{proof}
\noindent
Henceforth, we focus on rounding the $y$-values. To this end, we make the following useful definition.
\begin{definition}[Roundable Sets]\label{fdef:rnding-mkc}
	A set of facilities $S\subseteq F$ is said to be $(a,b)$-roundable w.r.t $(x,y)$ if
	\begin{itemize}[noitemsep]
		\item[(a)] $\diam_G(S) \leq a$
		\item[(b)] there exists a rounding $\y_{ip} \in \{0,1\}$ for all $i \in S, p\in [P]$ such that
		\begin{enumerate}
			\item $\sum_{q \geq p} \sum_{i\in S} \y_{iq} ~\leq~ \floor{\sum_{q \geq p}\sum_{i\in S} y_{iq}}$ for all $p$, and
			\item $\sum_{j\in C} d_j \sum_{i\in S,p\in [P]} x_{ijp} \leq b\cdot \sum_{i\in S} \sum_{p\in [P]} c_p \y_{ip}$
		\end{enumerate}
	\end{itemize}
\end{definition}
\noindent
If $(x,y)$ were feasible, then for any $(a,b)$-roundable set, we can integrally open facilities to satisfy all the demand that was fractionally assigned to it taking a hit of $a$ in the cost and a factor of $b$ in the capacities. Furthermore, the number of open facilities is at most what the LP prescribes. Therefore, if we would be able to decompose the instance into roundable sets, we would be done.
Unfortunately, that is not possible, and in fact the above LP has a large integrality gap even when we allow arbitrary violation of capacities.

\begin{remark}[Integrality Gap for \mckc] \label{frem:ig}
Consider the following instance. The metric space $X$ is partitioned into $(F_1\cup C_1) \cup \cdots \cup (F_K\cup C_K)$, with $|F_k| = 2$ and $|C_k| = K$ for all $1\le k\le K$.
The distance between any two points in $F_i\cup C_i$ is $1$ for all $i$, while all other distances are $\infty$. The capacities available are $k_1 = K$ facilities with capacity $c_1 = 1$ and
$k_2= K-1$ facilities with capacity $c_2 = K$. It is easy to see that integrally any solution would violate capacities by a factor of $K/2$.
On the other hand, there is a feasible solution for the above LP relaxation: for $F_k = \{a_k,b_k\}$, we set $y_{a_k2} = 1-1/K$ and $y_{b_k1} = 1$, and for all $j\in C_k$, we set $x_{a_kj2} = 1-1/K$ and $x_{b_kj1} = 1/K$.

 For the version with soft capacities, we do not have the constraint \eqref{feq:lp5} and the above integrality gap doesn't hold since we can install capacity $K$ facilities on $K-1$ of the sets $F_k$'s, $1\leq k\leq K-1$, and $K$ copies of the capacity $1$ facilities at $F_K$. Note that although $|F_K| = 2$, we have opened $K$ capacities.
\end{remark}

In particular, note that for the $(x,y)$ solution in the integrality gap example above there are no roundable sets. This motivates the definition of the second kind of sets.

\begin{definition}[Complete Neighborhood Sets] \label{fdef:comp-nbr}
	A subset $T\subseteq F$ of facilities is called a {\em complete neighborhood} if there exists a client-set $J\subseteq C$ such that $\Gamma(J) \subseteq T$.
	In this case the subset $J$ is said to be {\em responsible} for $T$. Additionally, a complete neighborhood $T$ is said to be an $\alpha$-complete neighborhood if $\diam(T) \leq \alpha$.
\end{definition}
\begin{remark}[Complete Neighborhood Sets to \cckp]\label{frem:red}
	\emph{
If we find a complete neighborhood $T$ of facilities with say a set $J$ of clients responsible for it, then we know that the optimal solution must satisfy all the demand in $J$ by suitably opening facilities of sufficient capacity in $S$. Given a collection $\calT = (T_1,\ldots,T_m)$ of disjoint $\alpha$-complete neighborhood sets with $J_i$ repsonsible for $T_i$, we can define an instance $\calI$ of the \cckp problem with $m$ machines with demands $D_i = |T_i|$ and cardinality constraint $f_i = |T_i|$, and $P$ jobs of capacities $c_1,\ldots,c_P$. The facilities opened by the $\opt $ solution corresponds to a valid solution for $\calI$; furthermore, any $\beta$-approximate solution for $\calI$ corresponds to a $(\alpha,\beta)$-approximate solution for the \mckc problem restricted to clients in $\cup_{\ell} J_\ell$. Finally note that for \mckc with soft-capacities, $\calI$ is an instance of the $Q||C_{min}$ problem.
}

\emph{
Note that the above integrality gap  example is essentially a \cckp instance with $K$ machines of demand $K$ each having cardinality constraint $2$, and there are $K$ jobs of capacity $1$ and $K-1$ jobs with capacity $K$. This shows the assignment LP has bad integrality gap for the \cckp problem (but not for $Q||C_{min}$).}
\end{remark}

Our final definition is that of $(\tau,\rho)$-{\em deletable clients} who can be removed from the instance since they can be ``$\rho$-charged'' to the remaining clients no further than $\tau$-away.
\begin{definition}[Deletable Clients]\label{fdef:deletable}
	A subset $\Cd\subseteq C$ of clients is $\rho$-deletable if there exists a mapping $\phi_{j,j'}\in [0,1]$ for $j\in \Cd$ and $j'\in C\setminus \Cd$ satisfying (a) $\sum_{j'\in C\setminus \Cd} \phi_{j,j'} = 1$ for all $j\in \Cd$, and(b) $\sum_{j\in \Cd} \phi_{j,j'} \leq \rho$ for all $j'\in C\setminus \Cd$. Furthermore, $\phi_{j,j'} > 0$ only if $d(j,j') \leq \tau\cdot\opt$.
\end{definition}
 The following claim shows we can remove $\Cd$ from consideration.
\begin{claim}\label{fclm:prelim3}
	Let $\Cd$ be a $(\rho,\tau)$-deletable set.
	Given an $(a,b)$-approximate feasible solution $(x',\y)$ where $x'_{ijp}$ is defined only for $j\in C\setminus \Cd$, we can extend $x'$ to a general $(x,\y)$ solution
	which is $(a+\tau, b(1+\rho))$-approximate feasible.
\end{claim}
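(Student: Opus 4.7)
The plan is to extend $x'$ by routing each deleted client's demand along the mapping $\phi$ to its non-deleted proxies. Concretely, I would set $x_{ijp} := x'_{ijp}$ for every $j \in C \setminus \Cd$, and for each $j \in \Cd$ define
\[
x_{ijp} \;:=\; \sum_{j' \in C\setminus \Cd} \phi_{j,j'}\, x'_{ij'p}.
\]
Intuitively, each deleted client inherits a $\phi_{j,j'}$-weighted average of the fractional assignments of its proxies, and $\y$ is left untouched. The LP constraints involving only $\y$ (namely \eqref{feq:lp3}--\eqref{feq:lp6}) are therefore unaffected, so only three things need to be verified: coverage, the distance support, and the capacity bound.

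For coverage \eqref{feq:lp1}, I would sum over $i, p$ and use property (a) of deletability: for $j \in \Cd$,
\[
\sum_{i,p} x_{ijp} \;=\; \sum_{j' \in C\setminus \Cd} \phi_{j,j'} \sum_{i,p} x'_{ij'p} \;\geq\; \sum_{j' \in C\setminus \Cd} \phi_{j,j'} \;=\; 1,
\]
since each proxy $j'$ is already fully served by $(x',\y)$. For the distance condition, if $x_{ijp} > 0$ for some $j \in \Cd$ then there exists a proxy $j'$ with $\phi_{j,j'} > 0$ and $x'_{ij'p} > 0$; by \Def{deletable} we have $d(j,j') \leq \tau \cdot \opt$, by $(a,b)$-feasibility of $(x',\y)$ we have $d(i,j') \leq a \cdot \opt$, and so the triangle inequality gives $d(i,j) \leq (a+\tau)\opt$ as required.

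The main bookkeeping is the capacity bound \eqref{feq:lp2}. Fix $(i,p)$ and compute
\[
\sum_{j \in C} x_{ijp} \;=\; \sum_{j' \in C\setminus \Cd} x'_{ij'p} \;+\; \sum_{j \in \Cd} \sum_{j' \in C\setminus \Cd} \phi_{j,j'}\, x'_{ij'p} \;=\; \sum_{j' \in C\setminus \Cd} x'_{ij'p}\Bigl(1 + \sum_{j \in \Cd}\phi_{j,j'}\Bigr).
\]
By property (b) of deletability the inner factor is at most $1+\rho$, and since $(x',\y)$ is $(a,b)$-feasible, $\sum_{j' \in C\setminus \Cd} x'_{ij'p} \leq b\, c_p\, \y_{ip}$. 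Chaining these gives $\sum_{j} x_{ijp} \leq (1+\rho)\, b\, c_p\, \y_{ip}$, which is exactly the capacity-violation guarantee we want.

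There is no real obstacle here; the only care needed is to (i) apply the triangle inequality through a single proxy $j'$ witnessing $x_{ijp} > 0$, and (ii) correctly bookkeep the per-proxy load when summing the convex combinations, so that property (b) of $\phi$ translates cleanly into the multiplicative $(1+\rho)$ blowup in capacities.
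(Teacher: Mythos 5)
Your proposal is correct and is essentially the same argument as the paper's: define $x_{ijp}=\sum_{j'}\phi_{j,j'}x'_{ij'p}$ for deleted clients, use property (a) of $\phi$ for coverage and property (b) for the $(1+\rho)$ capacity blowup. The only (welcome) addition is that you spell out the triangle-inequality step for the $a+\tau$ distance bound, which the paper leaves implicit.
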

\begin{proof}
For any $j\in \Cd$, define $x_{ijp} = \sum_{j'\in C\setminus \Cd} x_{ij'p}\phi_{j,j'}$.
We get for all $j\in \Cd$,
$\textstyle \sum_{i\in F} \sum_{p\in [P]} x_{ijp} = \sum_{i,p} \sum_{j'\in C\setminus \Cd} x_{ij'p}\phi_{j,j'} = \sum_{j'\in C\setminus \Cd} \phi_{j,j'} \left(\sum_{i,p} x_{ij'p}\right) \geq \sum_{j'\in C\setminus \Cd} \phi_{j,j'} = 1$,
and for all $i\in F,p\in [P]$,
$\textstyle \sum_{j\in \Cd}  x_{ijp} = \sum_{j\in \Cd} \sum_{j'\in C\setminus \Cd} x_{ij'p}\phi_{j,j'} = \sum_{j'\in C\setminus \Cd} x_{ij'p}\left( \sum_{j\in \Cd} \phi_{j,j'}\right)  \leq \rho \sum_{j'\in C\setminus \Cd} x_{ijp} \leq b\rho c_p$. Therefore, in all we have $\sum_{j\in C} x_{ijp} \leq bc_p(1+\rho)$.
\end{proof}


\section{Reduction to Max-Min Allocation via Region Growing}\label{fsec:regiongrowing}
In this section, we give a reduction to \cckp  when we allow logarithmic approximations. We then show how we get Theorem~\ref{fthm:1} using this result.
\begin{theorem}\label{fthm:weakred}
	Given an $\beta$-approximation algorithm for \cckp (respectively, $Q||C_{min}$), for any $\epsilon>0$ there exists an $\left(O(\log n/\epsilon), \beta(1+\epsilon)\right)$-approximate algorithm for the \mckc problem (respectively, for the \mckc problem with soft capacities).
\end{theorem}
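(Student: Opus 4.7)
The plan is to reduce the given \mckc instance to a \cckp instance via a region-growing decomposition into complete neighborhoods (the ``weak decomposition'' sketched in \Cref{fsec:overview}), and then invoke the given $\beta$-approximation in a black-box manner.

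First, I would guess $\opt$ by enumerating the $O(|F|\cdot|C|)$ candidate pairwise distances and scale so $\opt=1$; for each guess, build the threshold bipartite graph $G=(F\cup C,E)$ with $(i,j)\in E$ iff $d(i,j)\leq 1$, run the reduction below, and output the smallest successful guess. The heart of the argument is region growing on $G$: pick any remaining client $j_0$, set $J^{(0)}=\{j_0\}$, and iteratively put $T^{(t)}=\Gamma(J^{(t)})$ and $J^{(t+1)}=\Gamma(T^{(t)})$ in the residual graph, stopping at the first $t$ for which $|J^{(t+1)}\setminus J^{(t)}|\leq \epsilon |J^{(t)}|$. If this halting condition failed for all $t<T$, then $|J^{(t+1)}|\geq (1+\epsilon)|J^{(t)}|$ would give $|J^{(T)}|\geq (1+\epsilon)^T$, so taking $T=\log_{1+\epsilon} n=O(\log n/\epsilon)$ would force $|J^{(T)}|>n$, a contradiction.

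Setting $(J_1,T_1)=(J^{(t)},T^{(t)})$ yields a complete neighborhood in the sense of Definition~\ref{fdef:comp-nbr}, of metric diameter $O(\log n/\epsilon)\cdot\opt$, since every element of $J_1\cup T_1$ is within graph-distance $O(t)$ of $j_0$ and each graph edge has metric length at most $\opt$. The boundary clients $B_1:=J^{(t+1)}\setminus J^{(t)}$ number at most $\epsilon|J_1|$ and lie at metric distance $\leq 2\opt$ from $J_1$, so the uniform charging $\phi_{j,j'}=1/|J_1|$ makes them $(\epsilon,2)$-deletable per Definition~\ref{fdef:deletable}. I would then remove $J_1\cup T_1\cup B_1$ from the instance and iterate on the residual graph, producing a disjoint family $\{(J_\ell,T_\ell)\}_{\ell=1}^{m}$ of complete neighborhoods plus an aggregate deletable set $\Cd=\bigcup_\ell B_\ell$; because the $J_\ell$'s are pairwise disjoint, the combined mapping still satisfies $\sum_{j\in\Cd}\phi_{j,j'}\leq \epsilon$ for every $j'\in C\setminus\Cd$, so $\Cd$ is globally $(\epsilon,2)$-deletable.

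Finally, form the \cckp instance $\calI$ with $m$ machines, machine $\ell$ having demand $D_\ell=|J_\ell|$ and cardinality bound $f_\ell=|T_\ell|$, together with jobs of capacity $c_p$ of multiplicity $k_p$. Since $\Gamma(J_\ell)\subseteq T_\ell$, any optimal \mckc solution must serve all of $J_\ell$ from within $T_\ell$, and the $T_\ell$'s being disjoint in $F$ means those installations share the global capacity pool without conflict; hence $\calI$ is feasible (for the soft-capacity variant, $f_\ell$ is dropped and $\calI$ is a $Q||C_{min}$ instance). Applying the $\beta$-approximation to $\calI$ installs capacities in each $T_\ell$ totalling at least $|J_\ell|/\beta$, so at the price of a factor-$\beta$ capacity violation per installed facility, every client of $J_\ell$ can be routed to some facility of $T_\ell$ at metric distance $O(\log n/\epsilon)\cdot\opt$. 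This is an $(O(\log n/\epsilon),\beta)$-approximate assignment on $C\setminus \Cd$, and Claim~\ref{fclm:prelim3} extends it to all of $C$ at the advertised cost $(O(\log n/\epsilon),\beta(1+\epsilon))$. The only potentially delicate point is global feasibility of $\calI$, but this is automatic from the disjointness of the $T_\ell$'s together with the fact that the \mckc optimum itself certifies a feasible assignment of the global capacity list to the $T_\ell$'s.
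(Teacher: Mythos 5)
Your proposal is correct and follows essentially the same route as the paper: region growing in the threshold graph until the client-ball stops expanding, yielding disjoint complete neighborhoods $(J_\ell,T_\ell)$ plus an $\epsilon$-fraction of deletable boundary clients, then a black-box call to the \cckp (resp.\ $Q||C_{min}$) approximation and an appeal to Claim~\ref{fclm:prelim3}. The only nitpick is that with uniform charging $\phi_{j,j'}=1/|J_\ell|$ the deleted clients are $(O(\log n/\epsilon),\epsilon)$-deletable rather than charged within distance $2\cdot\opt$, but this is harmless since the distance guarantee is already $O(\log n/\epsilon)$.
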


The main crux of the above proof is the following decomposition theorem obtained by the technique of region growing which was first used in the context of sparsest and multi cut problems~\cite{LeightonR99,GargVY96}.

\begin{theorem}\label{fthm:weakdecomp}
	Given a guess $\opt$ for \mckc problem and any $\epsilon>0$,  there is an algorithm which partitions the facilities $F$ into a collection $\calT = (T_1,\ldots,T_L)$ of
	$O(\log n/\epsilon)$-complete neighborhoobd sets with $J_\ell$ responsible for $T_\ell$, and the client set $C = \Cd \cup \bigcup_{\ell=1}^L J_\ell$ such that
	$\Cd$ is an $(O(\log n/\epsilon),\epsilon)$-deletable set.
\end{theorem}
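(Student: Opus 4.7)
\medskip
\noindent\textbf{Proof proposal.} The plan is a region-growing decomposition in the spirit of the classical sparsest-cut/multicut roundings \cite{LeightonR99,GargVY96}, run on the bipartite graph $G = (F \cup C, E)$ where $(i,j) \in E$ iff $d(i,j) \le \opt$. I will iteratively peel off pieces $(J_\ell, T_\ell)$ and move the boundary clients into $\Cd$, until every client is classified; the remaining unclaimed facilities (which by then have no surviving client-neighbor) will be handled at the end.

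In one iteration, I pick an arbitrary surviving client $j_0$ and run BFS from $j_0$ in the residual graph, where even layers are clients and odd layers are facilities. After ``radius'' $r$ (i.e., through facility layer $2r+1$), let $J^{(r)}$ be the clients in layers $0,2,\dots,2r$, let $T^{(r)}$ be the facilities in layers $1,3,\dots,2r+1$, and let $B^{(r)}$ be the new clients in layer $2r+2$. By construction every facility-neighbor of a client in $J^{(r)}$ already lies in $T^{(r)}$ (its facility neighbors live in layers $2r-1$, $2r+1$), so $\Gamma(J^{(r)}) \subseteq T^{(r)}$ and $T^{(r)}$ is a complete neighborhood with responsible set $J^{(r)}$.

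I keep incrementing $r$ as long as $|B^{(r)}| > \epsilon \cdot |J^{(r)}|$; note this is exactly the event $|J^{(r+1)}| \ge (1+\epsilon) |J^{(r)}|$. Since $|J^{(r)}| \le n$, this growth must halt at some $r_\ell \le \log_{1+\epsilon} n = O(\log n / \epsilon)$ with $|B^{(r_\ell)}| \le \epsilon |J^{(r_\ell)}|$. I then set $J_\ell := J^{(r_\ell)}$, $T_\ell := T^{(r_\ell)}$, append every $j \in B^{(r_\ell)}$ to $\Cd$ with charge $\phi_{j,j'} := 1/|J_\ell|$ uniformly over $j' \in J_\ell$, delete $J_\ell \cup T_\ell \cup B^{(r_\ell)}$, and continue. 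The required properties are then routine: (i) $T_\ell$ has BFS-diameter $O(\log n / \epsilon)$ in $G$, hence metric diameter $O(\log n/\epsilon) \cdot \opt$; (ii) for $j \in \Cd$ added in iteration $\ell$ and any $j' \in J_\ell$, the graph-distance $\le 2 r_\ell + 2 = O(\log n/\epsilon)$ gives $d(j,j') \le O(\log n/\epsilon) \cdot \opt$; (iii) each surviving client $j'$ absorbs total mass $\sum_{j \in \Cd} \phi_{j,j'} \le |B^{(r_\ell)}|/|J_\ell| \le \epsilon$.

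Once all clients are classified, any leftover facility has no surviving client-neighbor and can be packaged as a singleton trivial complete neighborhood $T = \{i\}$ with responsible set $J = \emptyset$; this preserves the partition of $F$. The only subtlety I foresee is keeping the ``$\Gamma(J_\ell) \subseteq T_\ell$'' property valid after deletions — it is, because removing already-processed clients and facilities cannot create new edges out of $J_\ell$ — but since the residual graph is again bipartite between the surviving clients and facilities, the same BFS analysis goes through verbatim. I do not expect any genuine obstacle; the whole argument is the standard region-growing calculation, and the $(1+\epsilon)$-multiplicative growth bound on $|J^{(r)}|$ is the only non-trivial ingredient.
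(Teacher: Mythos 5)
Your proposal is essentially the paper's own proof: BFS region growing in the residual graph, grow until the boundary layer is at most an $\epsilon$-fraction of the interior, delete and charge the boundary to the interior, cut out the piece, and repeat; the growth bound, the uniform $\phi$-charging, and the treatment of leftover facilities as trivial singleton complete neighborhoods are all fine.

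The one place your reasoning needs repair is precisely the subtlety you flag. The concern is not whether deletions ``create new edges out of $J_\ell$'' (they cannot), and residual bipartiteness is beside the point; the concern is whether some client in $J_\ell$ already has a $G$-edge to a facility deleted in an earlier round, which would make $\Gamma_G(J_\ell) \not\subseteq T_\ell$ and void the complete-neighborhood property in the \emph{original} graph $G$ (the property actually needed for the reduction in Remark~\ref{frem:red}). Your argument as written only establishes $\Gamma_H(J_\ell) \subseteq T_\ell$. The missing observation is that each round deletes, together with $T_{\ell'}$, all of $T_{\ell'}$'s surviving client-neighbors (namely $J_{\ell'} \cup B^{(r_{\ell'})}$); hence, inductively, any client still alive at the start of round $\ell$ has no $G$-edge to any previously deleted facility, and therefore $\Gamma_G(J_\ell) = \Gamma_H(J_\ell) \subseteq T_\ell$ at the moment the piece is cut. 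With this invariant made explicit, your proof is complete.
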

\begin{proof}
Recall $G$ is the graph with $d(i,j) \leq \opt$ for $(i,j)\in G$. 
	Initially $\calT$ and $\Cd$ are empty. We maintain a set of alive clients $C'$ which is initially $C$. We maintain a working graph $H$ which is initialized to $G$ and is always a subgraph of $G$.
	Given a node $j$ and an integer $t$, let $N^{(t)}_H(j)$ denote all the modes $j'$ s.t. $d_H(j,j') < t$ and $\Gamma^{(t)}_H(j)$ denote all the nodes $j'$ with $d_H(j,j')  = t$.
	Note that for even $t$, we have $\Gamma^{(t)}_H(j) \subseteq C$, and for odd $t$, $\Gamma^{(t)}_H(j) \subseteq F$.
	\smallskip

	Till $C'$ is empty, we perform the following operation.
	Select an arbitrary active client $j\in C'$. Find the smallest even $t$ such that $|\Gamma^{(t)}_H(j)| < \eps\cdot |N^{(t)}_H(j)\cap C|$. Since for all $s < t$ we have $|N^{(s+2)}(j)\cap C|\geq (1+\epsilon)|N^{(s)}_H(j)\cap C|$,  $|N^{(s+2)}(j)\cap C| > (1+\epsilon)^{\frac{s}{2}}$. Therefore,  $t \leq (2\ln n)/\epsilon,$ where $n=|C'|$. 
	We define $T_\ell := N^{(t)}_H(j) \cap F$ and $J_\ell := N^{(t)}_H \cap C$; note that $T_\ell$ is an $O(\log n/\epsilon)$-complete neighborhood which is responsible for $J_\ell$. Furthermore, we add $J_\mathsf{ext} := \Gamma^{(t)}_H(j)$ to $\Cd$, and since
	$|J_\mathsf{ext}| < \eps |J_\ell|$ and $\diam(J_\ell) = O(\log n/\epsilon)$, there exists a mapping $\phi_{j,j'}$ for $j\in J_\mathsf{ext}$ and $j'\in J_\ell$ such that $\sum_{j'\in J_\ell} \phi_{j,j'} = 1$ for all $j\in J_\mathsf{ext}$, and
	$\sum_{j\in J_\mathsf{ext}} \phi_{j,j'} \leq \epsilon$ for all $j\in J_\ell$, and $\phi_{j,j'} > 0$ only if $d(j,j') = O(\log n/\epsilon)$. That is, $J_\mathsf{ext}$ is a valid $(O(\log n/\epsilon),\epsilon)$-deletable set.
	Finally, we delete $T_\ell \cup J_\ell \cup J_\mathsf{ext}$ from $H$ and $J_\ell \cup J_\mathsf{ext}$ from $C'$. We continue this procedure till $C'$ is empty. \end{proof}
\begin{proof}[Proof of Theorem~\ref{fthm:weakred}]
	Given $\calT$ we form the instance $\calI$  of \cckp (or $Q||C_{min}$ in case of soft-capacities) described in Remark~\ref{frem:red}. We provide $k_p$ copies of job with capacity $c_p$. If $\opt$ is feasible, then there must exist a feasible solution to $\calI$. Furthermore, a $\beta$-approximate solution to $\calI$ gives an $\left(O(\log n/\epsilon),\beta\right)$-approximate solution to the clients in $C\setminus \Cd$. The theorem follows from Claim~\ref{fclm:prelim3}.
%
\end{proof}

As a corollary to Theorem~\ref{fthm:weakred}, and using the fact that $Q||C_{min}$ has a PTAS ~\cite{AzarE98},  and our result (Theorem~\ref{fthm:q} proved in~\Cref{fsec:qptas}) that \cckp has a quasipolynomial time approximation scheme (QPTAS), we get Theorem~\ref{fthm:1}.\medskip

In Section~\ref{fsec:o1} we state a much stronger decomposition theorem than Theorem~\ref{fthm:weakdecomp} which exploits the LP solution.
To exploit it for \mckc problem, however, and prove an analogous theorem as Theorem~\ref{fthm:weakred}, we need to understand certain polyhedra with respect to the \cckp problem.
We first do this in the next section.

\section{Max-Min Allocation Problems and Supply Polyhedra} \label{fsec:supplypolyhedra}
An instance of the \cckp problem has $m$ machines $M$ with demands $D_1,\ldots,D_m$ and cardinality constraints $f_1,\ldots, f_m$, and $n$  types of jobs $J$ with capacities $c_1,\ldots,c_n$ respectively.
In $Q||c_{min}$, there are no $f_i$'s, or equivalently $f_i = \infty$.

A {\em supply vector} $(s_1,\ldots,s_n)$ where each $s_j$ is a non-negative integer
is called {\em feasible} for instances of these problems if the ensemble formed by $s_j$ copies of jobs of capacity $c_j$ can be allocated feasibly to satisfy all the demands.
The {\em supply polyhedra} of these instances desires to capture these feasible supply vectors.

\begin{definition}[Supply Polyhedron]\label{fdef:supp-poly}
	Given an instance $\calI$ for a max-min allocation problem, a polyhedron $\calP(\calI)$ is called an $\alpha$-approximate supply polyhedron if
	(a) all feasible supply vectors lie in $\calP(\calI)$, and (b) given any non-negative integer vector $(s_1,\ldots,s_n)\in \calP(\calI)$ there exists an assignment
	of the $s_j$ jobs of capacity $c_j$ to the machines such that machine $i$ receives a total capacity of $\geq D_i/\alpha$.
\end{definition}

Ideally, we would like {\em exactly} supply polyhedra. One guess would be the convex hull of all the supply vectors; indeed this is the tightest polytope satisfying condition (a).
Unfortunately, there are instances of $Q||C_{min}$ (and even for the uniform case $P||C_{min}$) where the convex hull of supply vectors contains infeasible integer points.
This rules out exact or even $(1+\epsilon)$-approximate supply polyhedra. In~\Cref{fthm:supply-bad-cckp} in~\Cref{fsec:conf-ig}, we show a stronger lower bound of $\Omega(\log D/\log \log D)$ on the best approximation-factor of any supply polyhedra for \cckp.
\begin{theorem}\label{fthm:no-supp}
	There cannot exist $\alpha$-approximate supply polyhedra (or convex sets) for $\alpha < 1.001$ for all  $P||C_{min}$ instances.
\end{theorem}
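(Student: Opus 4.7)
The plan is to exhibit an explicit $P\|C_{\min}$ instance $\calI$ together with an integer supply vector $s^{*}$ that (a) lies in the convex hull $\calH$ of the feasible supply vectors for $\calI$, and (b) admits no assignment leaving every machine with total load at least $D/1.001$. The theorem then follows immediately: by clause~(a) of Definition~\ref{fdef:supp-poly} any supply polyhedron $\calP(\calI)$ must contain every feasible supply vector, hence contain their convex closure $\calH$ and in particular $s^{*}$; but clause~(b) applied to an $\alpha$-approximate polyhedron with $\alpha<1.001$ would forbid $s^{*}\in\calP(\calI)$, a contradiction.

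The construction of $\calI$ I would try uses $m$ identical machines of demand $D$ and a small number of job capacities chosen so that the feasible per-machine assignment ``patterns'' fall into a short explicit list. A natural template takes two capacities $c_{1},c_{2}$ tuned just above and just below $D/k$ for a small integer $k$, so that patterns reduce to $k$ copies of $c_{2}$-jobs, $k+1$ copies of $c_{1}$-jobs, or a small mix; writing $a_{\pi}$ for the number of machines assigned pattern $\pi$, the integer polytope $\{a\ge 0:\sum_{\pi}a_{\pi}=m\}$ maps linearly to supply vectors, and the vertices of its image trace out a piecewise-linear lower boundary of $\calH$. The candidates for $s^{*}$ are integer points sitting on a facet of $\calH$ but \emph{not} expressible as a non-negative integer combination of that facet's vertex pattern counts: any actual assignment realising such an $s^{*}$ is forced, by a Diophantine mismatch on the pattern counts, to push at least one machine onto a pattern with total capacity below $D$.

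The main obstacle is calibration. In most small instances the convex hull $\calH$ already coincides with the upward closure of the integer-feasible set and no bad $s^{*}$ exists; one has to tune the ratios $c_{i}/D$ so that the slopes of adjacent facets of $\calH$ have moduli coprime to $m$ in a way that creates a positive-density family of ``interior facet'' lattice points. If a single instance does not quite reach an infeasibility gap of $1.001$, a scaling/amplification argument should close the remaining slack: take many independent copies of a base instance with gap $1+\delta$ for some $\delta>0$, introduce a shared small-capacity job type that lets the per-copy slacks be pooled, and use the fact that convex combinations over per-copy extreme pattern counts additively accumulate the deficits until their total crosses the $0.001$ threshold. Once $(\calI,s^{*})$ are pinned down, property~(b) reduces to a finite pigeonhole check on the available pattern counts, completing the proof.
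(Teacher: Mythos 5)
Your overall framework is exactly right: exhibit an instance $\calI$ and an integer supply vector $s^{*}$ lying in the convex hull of the feasible supply vectors such that no allocation of $s^{*}$ gives every machine load at least $D/\alpha$; since any $\alpha$-approximate supply polyhedron is convex and contains all feasible supply vectors, it contains $s^{*}$, contradicting condition (b) of Definition~\ref{fdef:supp-poly}. This is precisely the logical skeleton of the paper's proof.

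However, the proposal stops short of the actual mathematical content, which is the construction of $(\calI, s^{*})$, and the route you sketch is unlikely to get there. You yourself note the obstacle: for identical machines with two capacities near $D/k$, the convex hull of feasible supply vectors typically coincides with the (upward closure of the) integer feasible set, so no bad $s^{*}$ exists, and your proposed fix --- tuning slopes of facets and then ``amplifying'' by taking independent copies and pooling slack --- is not a construction but a hope; taking disjoint copies of a base instance makes the convex hull a product, and a convex combination of feasible vectors across copies does not accumulate deficits into an \emph{integer} infeasible point without a concrete combinatorial mechanism. What is genuinely needed (and what the paper supplies) is a combinatorial object with a gap between a fractional and an integral decomposition: the paper uses the Petersen graph, which has six perfect matchings covering every edge exactly twice (so the uniform average of the six matching-supported supply vectors is the integer vector with $k$ copies of each edge-job), yet its edge set cannot be partitioned into three perfect matchings. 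Encoding edges as jobs $p_{ij}=2^{i}+2^{j}$ and setting $D=\sum_{i=0}^{9}2^{i}=1023$ forces any exactly-satisfying bundle to be a perfect matching, so the averaged integer supply vector must shortchange some machine by at least $1$, yielding the constant gap $1023/1022$. Without such an obstruction (or an equivalent one), your pigeonhole/Diophantine-mismatch step has nothing to bite on, so as written the proof is incomplete.
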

\begin{proof}
The example is almost similar to the example in \cite{KurpiszMMMVW16} which was used to show integrality gap examples for strong LP relaxations for identical machines makespan minimization problem.
We just sketch a proof here. Recall the Petersen Graph with $10$ nodes and $15$ edges which has the following key property: it has six perfect matchings $M_1,\ldots,M_6$ such that each edge $(i,j)$ appears in exactly $2$ of these matchings; however, its edge set cannot be partitioned into $3$ perfect matchings.
The vertices are numbered $0,1,\ldots,9$.

Now we can describe the instance. Fix $k$ to be any positive integer.
We have $15$ types of jobs $p_{ij} = 2^i + 2^j$ for every edge $(i,j)$ of the Petersen graph.
We have $3k$ machines each with the same demand $D = \sum_{i=0}^9 2^i = 1023$.
Consider the six supply vectors $s^{(t)}$ for $1\leq t\leq 6$,  which contains $3k$ copies of the job corresponding to edge $(i,j)$ iff $(i,j)$ is in the matching $M_t$.
These are feasible supply vectors; indeed assign each of the $3k$ machines one jobs $p_{ij}$ for $(i,j) \in M_t$. Now any convex set (in particular polyhedra) containing these six supply vectors
must contain any convex combination. However the vector $\frac{1}{6}\sum_{t=1}^6 s^{(t)}$ is an integer vector with $k$ copies of each $(i,j)$ for all edges of the Petersen Graph.
This uses the fact that every edge is in exactly two perfect matchings. Since the edges of the  Petersen graph can't be partitioned into $3$ perfect matchings, any allocation of this supply vector
must give one machine demand $\leq 1022$. Therefore, there can't be any $\alpha$-approximate supply polyhedra for $P||C_{min}$.
\end{proof}
\begin{remark}\emph{
At this point, we should underscore the difference between supply polyhedra and say LP relaxations for solving  these allocation problems.
Given an instance of say $Q||C_{min}$ {\em along with} the supply vector (which is one standard way the problems are stated), there does exist a polytope capturing all the feasible allocations. It is the integer hull.
However, in general, the description of this integer hull uses the supply vector in describing these constraints and therefore are non-linear when the supplies are variables. Nevertheless, as we discuss below, many LP relaxations
studied in the literature imply supply polyhedra, and their integrality gaps imply the approximation factor for the polyhedra as well.
}
\end{remark}

For our purposes, we need more technical conditions from the supply polyhedra. The first is a natural condition which states that if one moves the supply to higher capacity jobs, then feasibility remains.
The second is related to polynomial time algorithms.

\begin{definition}
	A supply polyhedron $\calP(\calI)$ is {\em upward-feasible} if the following condition holds.
Reorder the jobs so that $c_1\le c_2 \le \cdots \le c_n$.
		If $(s_1,\ldots,s_n)\in \calP$ and $(t_1,\ldots,t_n)$ is a non-negative vector satisfying $t\succeq_\suff s$, that is, $\sum_{k\geq i} t_k \geq \sum_{k\geq i} s_k$, then $(t_1,\ldots,t_n)\in \calP$ as well.
\end{definition}
\begin{definition}[$\gamma$-Approximate Separation.]
A $\gamma$-approximate separation oracle for the supply polyhedron $\calP(\calI)$ is a polynomial time procedure which 	given any $y\in \R^n_{\geq 0}$,  either returns a hyperplane separating $y$ from $\calP$, or asserts that
		$y\in \calP(\calI')$ for the supply polyhedra of the instance $\calI'$ where all demands have been reduced by a factor $\gamma$.
\end{definition}
\subsection{Approximate Supply Polyhedra for $Q||C_{min}$}
\def\pv{\mathbf{b}}
%
%
\noindent
For $Q||C_{min}$, the following assignment LP acts as a good supply polyhedra.
\begin{alignat}{4}
	\calP_\mathsf{ass}(\calI)  && = \{(s_1,\ldots,s_n):  && \notag \\
	&& \quad \forall j \in J,   &\quad  \textstyle \sum_{i\in M} z_{ij}  \leq  s_j \label{feq:asslp1} \tag{A1} \\
	&& \quad \forall i\in M ,  &\quad  \textstyle \sum_{j\in J}  z_{ij}  \min(c_j,D_i) \geq D_i \label{feq:asslp2} \tag{A2}\\
	&& \quad \forall i\in M, j\in J, & \quad z_{ij}   \geq 0 \label{feq:asslp3}  \} \tag{A3}
\end{alignat}
\begin{theorem}\label{fthm:asslp}
	For any instance $\calI$ of $Q||C_{min}$, $\calP_\mathsf{ass}(\calI)$ is an upward feasible, $2$-approximate supply polyhedron with exact separation oracle.
\end{theorem}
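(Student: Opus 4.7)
The plan is to verify four properties: containment of every feasible supply vector in $\calP_\mathsf{ass}$, upward feasibility, exactness of the separation oracle, and the 2-approximate rounding guarantee. For containment, given a feasible integer $(s_j)$ with a witness integer assignment $\bar z_{ij}$, I would just take $z := \bar z$; constraint~\eqref{feq:asslp1} is immediate, and for~\eqref{feq:asslp2} either some ``big'' job $j^*$ with $c_{j^*} \geq D_i$ has $\bar z_{ij^*} \geq 1$ so that $\bar z_{ij^*}\min(c_{j^*},D_i) = \bar z_{ij^*}\, D_i \geq D_i$, or every job used by $i$ has $c_j < D_i$ and the LHS reduces to $\sum_j \bar z_{ij}\, c_j \geq D_i$ directly. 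For upward feasibility, sort the jobs so $c_1 \leq \cdots \leq c_n$, which makes $\min(c_j, D_i)$ nondecreasing in $j$; decompose $t \succeq_\suff s$ into atomic shifts of the form ``drop one at $j$, add one at $j' > j$'' plus a final nonnegative increment, and realize each atomic shift by moving one unit of $z$-mass at some machine from $j$ to $j'$: constraint~\eqref{feq:asslp1} is preserved and the LHS of~\eqref{feq:asslp2} only grows. For exact separation, $\calP_\mathsf{ass}$ is the projection onto the $s$-coordinates of a polynomial-size LP, so a membership test reduces to an LP feasibility check and, on infeasibility, a separating hyperplane in $s$-space is read off the Farkas certificate.

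The technical heart is the 2-approximate rounding: given an integer $(s_j)\in\calP_\mathsf{ass}$ with LP witness $z$, I must produce integer $\bar z$ with $\sum_i \bar z_{ij} \leq s_j$ and $\sum_j \bar z_{ij}\, c_j \geq D_i/2$ per machine. Splitting each machine's covering constraint as
$$D_i \sum_{j:\, c_j \geq D_i} z_{ij} \;+\; \sum_{j:\, c_j < D_i} z_{ij}\, c_j \;\geq\; D_i,$$
at least one of the two summands is $\geq D_i/2$: call $i$ \emph{big-type} if the first summand is, and \emph{small-type} otherwise (in which case the small-job contribution strictly exceeds $D_i/2$). For a big-type $i$, assigning any single big job delivers capacity $\geq D_i \geq D_i/2$; for a small-type $i$, I must integrally pack small jobs totalling $\geq D_i/2$ in capacity, starting from a fractional pack of total weight $> D_i/2$. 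I plan to realize both inside one bipartite-rounding step: form a bipartite graph with machines on the left and $s_j$ slots for each job $j$ on the right, place the $z_{ij}$ values on the edges, and extract the integer assignment using a Shmoys--Tardos-style per-machine sort-and-bucket (sort jobs by decreasing $c_j$ and chunk the $z$-mass into unit-weight groups) combined with the integrality of the bipartite $b$-matching polytope.

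The main obstacle is coordinating the big-type and small-type assignments against the common supply cap $s_j$, since a given job $j$ may be big for one machine and small for another. I plan to handle this by reading a supply partition off of $z$ itself: reserve $s_j^{\mathsf{B}} := \bigl\lceil \sum_{i\ \text{big-type},\ c_j \geq D_i} z_{ij}\bigr\rceil$ copies of each job $j$ for the big-matching subproblem and the remaining $s_j - s_j^{\mathsf{B}}$ copies for the small-packing subproblem, then round each subproblem independently using bipartite-matching integrality on the big side and the sort-and-bucket argument on the small side. The factor $2$ enters precisely through the $\tfrac12$ threshold used to classify the machines: each branch preserves at least half of the fractional contribution assigned to it, and the split $\alpha_i D_i + \beta_i \geq D_i$ guarantees that the branch chosen for machine $i$ already carries fractional value $\geq D_i/2$.
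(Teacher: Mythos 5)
Your treatment of containment, upward feasibility, and the separation oracle is fine and matches the paper's (the paper also proves upward feasibility by the same hybridization/shift argument, and separation is indeed just a polynomial-size LP feasibility check). The problem is in the core rounding step. Your plan classifies each machine as big-type or small-type from the split $D_i\sum_{j: c_j\ge D_i} z_{ij} + \sum_{j:c_j<D_i} z_{ij}c_j \ge D_i$ and then matches every big-type machine to a distinct big job. But having fractional big-mass $\sum_{j:c_j\ge D_i} z_{ij}\ge 1/2$ per machine does \emph{not} imply Hall's condition for that matching, because the supply is shared. Concretely: take $3$ machines with $D_i=10$, two copies of a capacity-$10$ job, and ten copies of a capacity-$1$ job; set $z_{i,\mathrm{big}}=2/3$ and $z_{i,\mathrm{unit}}=10/3$ for each $i$. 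Constraints \eqref{feq:asslp1}--\eqref{feq:asslp3} hold, every machine has big contribution $20/3>D_i/2$ and is therefore unambiguously big-type under your rule, yet only two big jobs exist, so the big matching is infeasible. Demoting one machine to the small branch does not save you either: its fractional small value is only $10/3<D_i/2$, so your stated invariant (``the branch chosen for machine $i$ already carries fractional value $\ge D_i/2$'') fails, and giving that machine enough unit jobs integrally requires stealing small-job supply that the LP fractionally routed to the \emph{other} machines. That global re-routing is exactly what your independent two-subproblem rounding never performs, and the $\lceil\cdot\rceil$ supply reservation only prevents over-use of supply; it does nothing to guarantee matchability. (A secondary, fixable issue: for small-type machines the sort-and-bucket guarantee is fractional value minus the largest used capacity, which can be close to $D_i$ even for ``small'' jobs, so value $\ge D_i/2$ does not by itself survive that rounding.)

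For comparison, the paper sidesteps all of this coordination by not rounding the LP at all: it runs a single greedy pass (jobs in decreasing capacity order, machines in decreasing demand order, filling each machine to $D_i/2$), and proves that if greedy leaves any machine unhappy then the system \eqref{feq:asslp1}--\eqref{feq:asslp3} is infeasible, by explicitly constructing a Farkas certificate $(\alpha,\beta)$ from the greedy run. The duality argument is where the interaction between scarce large jobs and plentiful small jobs is actually resolved, and it is the piece your proposal is missing. Your decomposition could in principle be repaired with an alternating-path/deficiency argument on the big matching plus a global accounting for the displaced machines, but as written the proof does not go through.
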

\noindent
We defer the proof of the above theorem to Section~\ref{fsec:asslp}.

\subsection{Approximate Supply Polyhedra for $Q|f_i|C_{min}$}
For \cckp, a candidate supply polyhedra would be \eqref{feq:asslp1}-\eqref{feq:asslp3} along with
\begin{equation}
\textstyle \forall i\in M, \qquad \sum_{j\in J} z_{ij} ~\leq~ f_i \tag{A4} \label{feq:asslp4}
\end{equation}
which would enforce the cardinality constraint. Unfortunately, an example akin to that in Remark~\ref{frem:ig} shows that $\calP_\mathsf{ass}$ is not
an $\alpha$-approximate supply polyhedron for \cckp instances with $\alpha = o(n)$. We define a stronger supply polyhedron.
However, at this juncture we state a theorem regarding \eqref{feq:asslp1}-\eqref{feq:asslp4} which
is based on the ideas from Shmoys-Tardos rounding~\cite{ShmoysT93}.
\begin{theorem}[Shmoys-Tardos Rounding]
	\label{fthm:shmoystardos}
	Let \eqref{feq:asslp1}-\eqref{feq:asslp4} have a feasible solution $z$ and let $C_i := \max_{j:z_{ij}>0} c_j$.
There is an integral assignment $\zz_{ij}\in \{0,1\}$ which satisfies \eqref{feq:asslp1}, \eqref{feq:asslp4}, and \eqref{feq:asslp2} with the RHS replaced by $D_i - C_i$ for all $i$.
\end{theorem}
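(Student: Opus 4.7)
The plan is to mimic the Shmoys--Tardos rounding of the assignment LP, but track \emph{minimum} loads rather than maximum loads, and respect the cardinality constraint \eqref{feq:asslp4}. For every machine $i$, I would sort the jobs $j$ with $z_{ij}>0$ in weakly decreasing order of the truncated capacity $\bc_{ij}:=\min(c_j,D_i)$. Set $K_i:=\lceil\sum_j z_{ij}\rceil$; since $f_i$ is integral and \eqref{feq:asslp4} gives $\sum_j z_{ij}\leq f_i$, we have $K_i\leq f_i$. Pour the fractional mass at machine $i$, in this sorted order, into $K_i$ unit-capacity slots $(i,1),\ldots,(i,K_i)$: fill slot $(i,1)$ up to mass $1$, then slot $(i,2)$, and so on. This produces weights $\tilde z_{i,k,j}\geq 0$ with $\sum_k \tilde z_{i,k,j}=z_{ij}$, $\sum_j \tilde z_{i,k,j}=1$ for $k<K_i$, and $\sum_j \tilde z_{i,K_i,j}\leq 1$. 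Writing $\bc_{\min}(i,k)$ and $\bc_{\max}(i,k)$ for the minimum and maximum of $\bc_{ij}$ over jobs active in slot $(i,k)$, the sorted pouring ensures $\bc_{\min}(i,k)\geq \bc_{\max}(i,k+1)$ for every $k<K_i$.

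Next, I would form the bipartite graph $H$ with job types on the left (vertex capacity $s_j$) and slots on the right (vertex capacity $1$), placing an edge whenever $\tilde z_{i,k,j}>0$. Then $\tilde z$ is a feasible fractional $b$-matching in $H$ that saturates, with unit mass, every slot $(i,k)$ with $k<K_i$. Adjoining these saturation equalities to the $b$-matching LP still leaves a totally unimodular bipartite vertex-edge incidence system, so an integer feasible point $\zz$ exists. By construction $\zz$ satisfies \eqref{feq:asslp1}, and since each slot carries at most one job while machine $i$ owns only $K_i\leq f_i$ slots, $\zz$ satisfies \eqref{feq:asslp4}; moreover every slot $(i,k)$ with $k<K_i$ is matched to some job $j$ with $\tilde z_{i,k,j}>0$.

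For the capacity lower bound, let $a_{i,k}:=\bc_{ij}$ if slot $(i,k)$ is matched to job $j$, and $a_{i,k}:=0$ otherwise. Then \eqref{feq:asslp2} together with $\bc_{ij}\leq \bc_{\max}(i,k)$ for jobs active in slot $(i,k)$ gives
\[
D_i \;\leq\; \sum_j z_{ij}\,\bc_{ij} \;=\; \sum_{k=1}^{K_i}\sum_j \tilde z_{i,k,j}\,\bc_{ij} \;\leq\; \sum_{k=1}^{K_i}\bc_{\max}(i,k),
\]
while the matching guarantee $a_{i,k}\geq \bc_{\min}(i,k)\geq \bc_{\max}(i,k+1)$ for $k<K_i$ yields
\[
\sum_{j}\zz_{ij}\,\bc_{ij} \;=\; \sum_{k=1}^{K_i}a_{i,k} \;\geq\; \sum_{k=2}^{K_i}\bc_{\max}(i,k) \;\geq\; D_i-\bc_{\max}(i,1) \;\geq\; D_i-C_i,
\]
using $\bc_{\max}(i,1)\leq \max_{j:z_{ij}>0}c_j = C_i$ in the final step. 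This is exactly the weakening of \eqref{feq:asslp2} required.

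The main technical obstacle is the saturation claim of step two, namely extracting an integer $b$-matching that simultaneously saturates every slot of unit fractional mass. I intend to discharge this via bipartite total-unimodularity after adjoining the saturation equalities; equivalently, one can pad $H$ with dummy ``null'' jobs of large supply so that the desired rounding reduces to integrality of a bipartite \emph{perfect} $b$-matching and apply the Hall/K\"onig characterization. Everything else is routine bookkeeping in the spirit of~\cite{ShmoysT93}.
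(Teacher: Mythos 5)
Your proposal is correct and is essentially the same argument as the paper's (sketched) proof: both split machine $i$ into $\lceil\sum_j z_{ij}\rceil\le f_i$ unit slots, pour the fractional mass in decreasing capacity order so that each slot's minimum dominates the next slot's maximum, extract an integral bipartite matching saturating the fully-filled slots, and charge the loss to the first slot's capacity, which is at most $C_i$. Your write-up just fills in the details (truncated capacities, the total-unimodularity justification of the saturation step) that the paper leaves implicit.
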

	\begin{proof} (Sketch) We proceed as in  the Shmoys-Tardos rounding~\cite{ShmoysT93} of the assignment LP.
	We convert the instance into a bipartite matching instance where on one side we have the jobs $J$ with
		multiplicities $s_1,\ldots,s_n$, and on the other side we have the machines where we take $f'_i := \ceil{\sum_{j\in J} z_{ij}} \le f_i$ copies of each machine.
		The solution $z$ is converted to a (fractional) solution $\z$ on this bipartite graph where each job $j$
		is assigned  by $\z$ to an extent of at most $1$. Furthermore, for every machine $i$, each of its $f'_i$ copies, except perhaps for the last one,
		 gets $\z$-mass exactly $1$.  This assignment also has the property that for any machine $i$ and job $j$,
		if  $\z$ assigns (fractionally) $j$ to $\ell^{th}$ copy of $i$, then $c_j$ is at least the total fractional demand assigned by $z$ to
		the $(\ell+1)^{th}$ copy of this machine.
		Since each copy of a machine (except for the last copy) gets $\z$-mass exactly $1$, there is an assignment of jobs to these copies such that each such copy of machine $i$ gets exactly one job. We give machine $i$ whatever its copies obtain; note that it obtains $f'_i \leq f_i$ jobs.
		The total capacity of jobs allocated is therefore $\ge D_i - \Delta$ where $\Delta$ is the fractional capacity assigned to $i$'s first copy. Since $\Delta\leq C_1$,  this proves the theorem.
%
%
%
	\end{proof}

\noindent
In other words, for instances where $c_j$'s are $\ll D_i$'s, $\calP_\mathsf{ass}$ is a good supply polyhedron. But in general we need a supply polyhedra with stronger constraints. \smallskip

Let $\Supp$ be a multiset indicating infinitely many copies of jobs in $J$.
For every machine $i$, let $\calF_i :=\{S\in \Supp: |S| \leq f_i ~\textrm{ and } \sum_{j\in S} c_j \geq D_i\}$ denote all the feasible sets that can satisfy machine $i$.
Let $n(S,j)$ denote the number of copies of job of type $j$.
\begin{alignat}{4}
	\calP_\mathsf{conf}(\calI) && = \{(s_1,\ldots,s_n):  && \notag \\
	&& \quad \forall i \in M,   &\quad  \textstyle \sum_{S} z(i,S)  =  1 \label{feq:conflp1} \tag{C1} \\
	&& \quad \forall j\in J ,  &\quad  \textstyle \sum_{i\in M,S}  z(i,S)n(S,j) \le  s_j \label{feq:conflp2}\tag{C2} \\
	&& \quad \forall i\in M, S\notin \calF_i, & \quad z(i,S)  = 0 \label{feq:conflp3} \} \tag{C3}
\end{alignat}
\begin{theorem}\label{fthm:conflp}
	For any instance $\calI$ of \cckp, $\calP_\mathsf{conf}(\calI)$ is an upward feasible, $O(\log D)$-approximate supply polyhedron with $(1+\epsilon)$-approximate separation oracle for any $\eps > 0$,
	where $D := D_{\mathrm{max}}/D_\mathrm{min}$.
\end{theorem}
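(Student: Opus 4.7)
The plan is to establish the three assertions in turn, with the rounding being the technical core.

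For upward feasibility, suppose $z$ certifies $s \in \calP_\mathsf{conf}(\calI)$ and $t$ satisfies $\sum_{k \geq j} t_k \geq \sum_{k \geq j} s_k$ for every $j$. I would sweep $j$ from smallest capacity upward, and whenever the current $z$ uses job $j$ in excess of $t_j$, the suffix condition guarantees a surplus at some $j' > j$; pick a configuration $S$ in the support of $z$ containing a copy of $j$, and replace that copy with a copy of $j'$. The size $|S|$ is unchanged and $\sum_{k \in S} c_k$ only increases, so $S$ stays in $\calF_i$, and iterating produces a certificate $z'$ for $t$.

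For the $(1+\epsilon)$-approximate separation oracle, I would dualize the feasibility LP in the $z$-variables. Farkas' lemma says $s \notin \calP_\mathsf{conf}(\calI)$ iff there exist $\alpha_i \in \RR$ and $\beta_j \geq 0$ with $\alpha_i \leq \sum_j \beta_j\, n(S,j)$ for every $i$ and every $S \in \calF_i$, while $\sum_i \alpha_i > \sum_j \beta_j s_j$. Eliminating $\alpha_i$ via its tight upper bound, separation reduces, for each machine $i$, to a cardinality-constrained covering knapsack: given weights $\beta$, find a multiset $S$ with $|S| \leq f_i$ and $\sum_{k \in S} c_k \geq D_i$ minimizing $\sum_{k \in S}\beta_k$. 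This admits an FPTAS via the standard capacity-rounding plus dynamic-programming recipe, and plugging such a $(1+\epsilon)$-approximate oracle into the ellipsoid method yields $(1+\epsilon)$-approximate separation for $\calP_\mathsf{conf}$.

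The heart of the theorem is the $O(\log D)$-approximate rounding. Given an integer $s \in \calP_\mathsf{conf}(\calI)$ with fractional certificate $z$, I would bucket the machines into $L = O(\log D)$ demand classes $M_\ell = \{ i : D_i \in [2^{\ell-1} D_{\min}, 2^{\ell}D_{\min}) \}$ so that demands within a bucket are uniform up to a factor of two. Restricting $z$ to each $M_\ell$ yields a per-bucket fractional certificate using supply $s^\ell_j := \sum_{i \in M_\ell, S} z(i,S)\, n(S,j)$, with $\sum_\ell s^\ell \leq s$. Within a bucket, each configuration in the support of $z$ is split into its ``large'' jobs (those with $c_k \geq D_i/4$) and its ``small'' jobs; the large-job side uses only $O(1)$ jobs per configuration and can be rounded by bipartite matching or iterative LP rounding against $s^\ell$, while the small-job side is rounded by the Shmoys--Tardos argument of Theorem~\ref{fthm:shmoystardos}, losing only one ``top'' small job per machine of capacity at most $D_i/4$. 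Combining per-bucket integer assignments delivers every machine capacity $\Omega(D_i/L) = \Omega(D_i/\log D)$ while respecting the total supply $s$. The main obstacle is making the large/small split work simultaneously across all $L$ buckets without over-using supply and arguing that each per-bucket LP restriction remains feasible; the $\Omega(\log D/\log\log D)$ integrality gap for the configuration LP shown in~\Cref{fsec:conf-ig} confirms that this loss is essentially optimal for this polyhedron.
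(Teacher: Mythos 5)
Your treatment of upward feasibility and the separation oracle matches the paper's arguments in substance (the pairwise-swap/hybridization argument for upward feasibility, and the Farkas dual plus FPTAS-for-knapsack plus ellipsoid argument for separation). The genuine issue is in the rounding, which is the technical heart, and there you have a real gap that you yourself flag at the end.

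The per-bucket independent rounding does not respect the global supply constraint. You restrict $z$ to a bucket $M_\ell$ and obtain a \emph{fractional} budget $s^\ell_j = \sum_{i \in M_\ell, S} z(i,S)\,n(S,j)$. Rounding the bucket independently (by matching/Shmoys--Tardos) uses up to $\lceil s^\ell_j \rceil$ copies of job $j$ in bucket $\ell$; summing across $L = \Theta(\log D)$ buckets this can reach $s_j + L - 1$, violating the supply $s_j$. Rounding each budget down to $\lfloor s^\ell_j \rfloor$ instead can make the per-bucket LP infeasible. You acknowledge this as ``the main obstacle,'' but that obstacle is exactly the theorem, not a detail to be filled in. Your asserted conclusion ``delivers every machine capacity $\Omega(D_i/L)$ while respecting the total supply $s$'' does not follow from the sketch. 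A second, more quantitative mismatch: your ``large'' threshold $c_k \geq D_i/4$ is miscalibrated for an $O(\log D)$ loss. The paper sets the threshold to $\bar D_i/(3\log_2 D)$ precisely so that removing up to $O(\log D)$ large jobs from a small configuration (which the cross-bucket coordination forces) costs only a constant fraction of $\bar D_i$; with threshold $D_i/4$, removing $\log D$ large jobs could wipe out the entire demand.

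The paper avoids per-bucket decoupling entirely. It keeps a single global fractional solution $\bar z$ and massages it in stages that always preserve constraints \eqref{feq:conflp1}--\eqref{feq:conflp2}: Step 1 splits each configuration into a ``large'' singleton or a ``small'' configuration (with the $\bar D_i/(3\log_2 D)$ threshold); Step 2 uses two local exchange subroutines ({\sf FixLargeMachine} and {\sf FixBucket}) to terminate with no ``purely large'' machine and \emph{at most one hybrid machine per demand bucket}; Step 3 handles these $\le \log D$ hybrid machines by a Hall-condition bipartite matching whose feasibility relies crucially on the \emph{integrality} of $s$; Step 4 applies Shmoys--Tardos to the remaining (purely small) machines, which at that point retain only a $1/\log D$ fraction of their configuration mass --- this is where the $O(\log D)$ loss truly comes from, not from budget ceilings. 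These three ideas --- the log-tuned threshold, the one-hybrid-per-bucket invariant achieved by exchange moves on a \emph{single} global LP solution, and exploiting integrality of $s$ for the hybrid matching --- are absent from your proposal and are what make the rounding go through.
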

\noindent
As a corollary, we get  \Cref{fthm:cckp}.
We complement this with an almost matching integrality gap.
\begin{theorem}
	\label{fthm:conf-ig}
	The integrality gap of $\calP_\mathsf{conf}$ is $\Omega\left(\frac{\log n}{\log\log n}\right)$. More precisely, there exists an instance $\calI$ of \cckp and a supply vector $(s_1,\ldots,s_n) \in \calP_\mathsf{conf}(\calI)$, but in any feasible allocation of $s_j$ jobs of capacity $c_j$ to the machines, there exists some machine  $i$ receiving $\leq O\left( D_i\frac{\log\log n}{\log n}\right)$.
\end{theorem}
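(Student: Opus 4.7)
The plan is to exhibit an explicit instance $\calI$ of \cckp together with an integer supply vector $(s_1,\ldots,s_n) \in \calP_{\mathsf{conf}}(\calI)$ such that any integer allocation of these jobs leaves some machine with capacity at most $O(D_i \log\log n/\log n)$ of its demand. I will organize the instance into $L$ nested \emph{levels}, choosing the level sizes so that the total number of objects $n$ satisfies $\log n/\log\log n = \Theta(L)$. Concretely, I would take at each level $\ell \in \{1,\ldots,L\}$ a set $\calM_\ell$ of machines, all with common demand $D$ and cardinality constraint $f_\ell = 2^{\ell-1}$; and a single job type of level $\ell$ with capacity $c_\ell = D/2^{\ell-1}$, so that a level-$\ell$ machine is exactly satisfied by $2^{\ell-1}$ type-$\ell$ jobs, by one type-$1$ job, or by various intermediate mixtures. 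Setting $|\calM_\ell|$ to grow roughly like $L^\ell$ yields $n = \Theta(L^L)$, so $\log n = \Theta(L\log L)$ and $\log\log n = \Theta(\log L)$, giving the desired relation $L = \Theta(\log n/\log\log n)$.

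For the fractional feasibility, I would describe an explicit solution $z(i,S)$ in which every machine $i\in\calM_\ell$ distributes weight $1-x_\ell$ on its ``pure'' configuration ($2^{\ell-1}$ type-$\ell$ jobs of total capacity $D$) and weight $x_\ell$ on a ``shortcut'' configuration consisting of a single type-$1$ job. The mixing weights $x_\ell$ are chosen small enough that the type-$1$ (big, scarce) job supply is just barely enough to cover the aggregated shortcut demand $\sum_\ell x_\ell |\calM_\ell|$, while simultaneously each type-$\ell$ supply $s_\ell = (1-x_\ell)\cdot 2^{\ell-1}|\calM_\ell|$ is tight. Verifying that this $s$ is integer (after adjusting the numerics slightly) and that each row of $z$ sums to one establishes $s \in \calP_{\mathsf{conf}}(\calI)$.

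For the integer lower bound -- the heart of the proof -- I would argue via a level-by-level counting/exchange argument. Because the type-$1$ ``shortcut'' supply is only a $1/L$-fraction of what would be needed to shortcut every machine, the vast majority of machines at every level $\ell$ must be satisfied using jobs of their own level or coarser. But among such configurations, the cardinality constraint $f_\ell = 2^{\ell-1}$ forces any alternative to commit a large number of type-$\ell$ jobs. An inductive accounting, working from the highest level downward and tracking how many type-$\ell'$ jobs are ``promoted'' from lower levels to higher ones, shows that since supply is tight at every level, each promotion creates a shortfall of the same magnitude at a lower level. Telescoping these shortfalls produces some level at which the total available capacity in $\calM_\ell$ is $O(D|\calM_\ell|/L)$, so at least one machine in $\calM_\ell$ ends up with capacity $O(D/L) = O(D \log\log n/\log n)$.

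The main obstacle is the integer lower bound: one must argue that \emph{no} reshuffling of jobs across levels can save every machine, including the possibility of using a type-$\ell'$ job with $\ell' < \ell$ for a level-$\ell$ machine (which is wasteful since the machine only needs $D$ and such a job has capacity larger than $D/2^{\ell-1}$). A clean way to present this will be to define, for each integer allocation, a per-level potential $\Phi_\ell$ equal to the total ``residual demand'' at level $\ell$ and to show that any improvement at level $\ell$ must pay for itself by a deterioration of magnitude $\geq 2$ somewhere below; the telescoped deficit then yields the bound. A secondary obstacle is ensuring the construction gives the right dependence of $L$ on $n$ (and equivalently on $\log D$, since in this instance $D_{\max}/D_{\min}$ is $2^{\Theta(L)}$), which is essentially bookkeeping.
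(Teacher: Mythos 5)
There is a genuine gap, and it is in the heart of the argument: the construction you describe does not actually have an integrality gap. In your instance the fractional solution for a level-$\ell$ machine is a convex combination of a ``pure'' configuration ($2^{\ell-1}$ type-$\ell$ jobs) with weight $1-x_\ell$ and a ``shortcut'' (one type-$1$ job) with weight $x_\ell$, and you set the supplies to be exactly the aggregated fractional usage: $s_\ell=(1-x_\ell)2^{\ell-1}|\calM_\ell|$ and a type-$1$ supply covering $\sum_\ell x_\ell|\calM_\ell|$ shortcuts. But then the integral adversary can simply \emph{derandomize the mixture level by level}: give $x_\ell|\calM_\ell|$ of the level-$\ell$ machines one type-$1$ job each and give the remaining $(1-x_\ell)|\calM_\ell|$ machines their pure configurations. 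This uses exactly $s_\ell$ type-$\ell$ jobs and exactly the shortcut budget of type-$1$ jobs, and fully satisfies every machine. Your proposed ``promotion/telescoping shortfall'' argument therefore cannot succeed --- there is nothing to telescope. (There is also an internal inconsistency: you state all machines share a common demand $D$, yet later invoke $D_{\max}/D_{\min}=2^{\Theta(L)}$.) The general principle is that a configuration-LP gap cannot come from a fractional solution in which each machine's configurations are individually realizable and the supply is just the sum of fractional usages split cleanly by level; you need fractional \emph{sharing} that is combinatorially impossible to realize integrally.

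The paper's construction supplies exactly that mechanism. It has, for each level $i$, a class $\M_i$ of $f_i$ machines with cardinality constraint $1$ and tiny demand $c_i$, together with only $f_i(1+1/K)$ jobs of capacity $c_i$: $f_i$ ``private'' jobs (each fractionally assigned $1-1/K$ to its own machine in $\M_i$ and $1/K$, as part of a set configuration, to a higher-level machine $M_i$) and $f_i/K$ ``shared'' jobs, each split $1/f_i$-ways across all of $\M_i$. Integrally, every machine in $\M_i$ must grab a whole job of capacity at least $c_i$, which exhausts essentially the entire type-$i$ supply; consequently $M_i$ cannot realize its $S_i$-configuration, and since its own cardinality constraint makes all lower-type jobs contribute at most $D(M_i)/K$ in total, $M_i$ is forced to demand one of the $K$ capacity-$1$ jobs. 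As $K+1$ machines ($M_0,\dots,M_K$) compete for $K$ such jobs, one of them receives only $O(D_i/K)$, and the geometric choice of parameters gives $K=\Theta(\log n/\log\log n)$. If you want to salvage your level-based plan, you must build in an analogue of this cardinality-one sharing gadget so that the fractional solution's mass is not integrally separable level by level.
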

We defer the proofs of \Cref{fthm:conflp} and \Cref{fthm:conf-ig} to Section~\ref{fsec:conflp}.

\section{\mckc via Supply Polyhedra}\label{fsec:o1}
\def\yy{y^\calT}
In this section, we prove the following theorem. One of the main engines will be a strong decomposition theorem (\Cref{fthm:decomp}) which we will state here but will prove in the next section.
\begin{theorem}\label{fthm:reduction}
Suppose there exists $\beta$-approximate, upward feasible supply polyhedra  for all instances of $Q|f_i|C_{min}$ (respectively, $Q||C_{min}$) which have $\gamma$-approximate separation oracles.
Then for any $\delta\in(0,1)$, there is an $\left(\tilde{O}(1/\delta),\gamma\beta(1+5\delta)\right)$-bicriteria approximation algorithm for the \mckc problem (respectively, with soft capacities).
\end{theorem}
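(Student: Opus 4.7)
The plan is to strengthen the natural LP \eqref{feq:lp1}-\eqref{feq:lp6} by adding, for every collection $\calT=\{(T_\ell,J_\ell)\}$ of complete-neighborhood sets that the strong decomposition might output, a supply constraint \eqref{feq:lp7} forcing the aggregated type-vector $\yy$ (with $\yy_p:=\sum_\ell\sum_{i\in T_\ell} y_{ip}$) to lie in the supply polyhedron $\calP(\calI_\calT)$ of the induced \cckp instance $\calI_\calT$, where machine $\ell$ has demand $D_\ell=|J_\ell|$ and cardinality cap $f_\ell=|T_\ell|$. At the true $\opt$, the integral optimum restricted to $\bigcup_\ell T_\ell$ defines a feasible supply vector for $\calI_\calT$ and so lies in $\calP(\calI_\calT)$ by definition; hence the strengthened LP is still feasible.

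Since \eqref{feq:lp7} is an exponentially large family, I would exploit the round-and-cut paradigm. At each ellipsoid iteration we are given a candidate $(x,y)$ satisfying the polynomially many constraints; we verify these combinatorially and run \Cref{fthm:decomp} to produce a $(\tilde{O}(1/\delta),\delta)$-deletable client set $\Cd$, a collection $\calS$ of $(\tilde{O}(1/\delta),1+\delta)$-roundable sets, and a collection $\calT$ of $\tilde{O}(1/\delta)$-complete-neighborhood sets. We then invoke the $\gamma$-approximate separation oracle on the projected vector $\yy$ induced by $\calT$. If it returns a violated hyperplane we feed it back into the ellipsoid; otherwise we have $\yy\in\calP(\calI'_\calT)$, where $\calI'_\calT$ has every demand of $\calI_\calT$ divided by $\gamma$, and we proceed to rounding.

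The rounding has three parts. First, each $S\in\calS$ is collapsed via \Cref{fdef:rnding-mkc}: integer openings $\y_{ip}$ on $S$ consume at most $\floor{\sum_{i\in S} y_{iq}}$ suffix-capacity of each type $q$, so the $y$-mass remaining on facilities in $\bigcup\calT$ still dominates $\yy$ in the suffix-sum order; upward feasibility of $\calP$ then certifies that an integer representative of the remainder lies in $\calP(\calI'_\calT)$. Second, the $\beta$-approximation property of the supply polyhedron furnishes an integer allocation of these installed type-capacities to the machines so that each $J_\ell$ collectively receives capacity at least $|J_\ell|/(\gamma\beta)$, using at most $f_\ell=|T_\ell|$ distinct facilities per $\ell$, which we place arbitrarily in $T_\ell$ so that all distances to $J_\ell$ remain at most $\tilde{O}(1/\delta)\cdot\opt$. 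A fractional matching of $C\setminus\Cd$ onto the openings thus exists with all edges of that length and per-facility load at most $\gamma\beta(1+\delta)\,c_p$; the bipartite matching argument from the claim right after \eqref{feq:lp6} promotes this to an integer assignment with load at most $\ceil{\gamma\beta(1+\delta)\,c_p}$. Third, \Cref{fclm:prelim3} reinserts $\Cd$ at an additional additive $\tilde{O}(1/\delta)\cdot\opt$ distance and multiplicative $(1+\delta)$ capacity cost, so the final distance is $\tilde{O}(1/\delta)\cdot\opt$ and the final capacity violation is $\gamma\beta(1+\delta)^2\le \gamma\beta(1+5\delta)$.

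The main obstacle is the interaction between the rounding inside $\calS$ and the supply left available to $\calT$: spending integer openings on $\calS$ consumes $y$-mass that the \cckp instance $\calI_\calT$ might still need. Enforcing \eqref{feq:lp7} for the specific $\calT$ produced by \Cref{fthm:decomp}, combined with upward feasibility and the suffix-sum discipline of roundable sets, is precisely what certifies that the residual mass remains inside $\calP$; without this matching of decomposition output to polyhedral constraint the argument breaks. The soft-capacity variant is obtained by the identical argument with $Q||C_{min}$ replacing \cckp and constraint \eqref{feq:lp5} dropped throughout.
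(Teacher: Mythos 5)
Your proposal follows essentially the same approach as the paper: strengthen the natural LP with the supply-polyhedron constraints \eqref{feq:lp7}, use the round-and-cut framework with \Cref{fthm:decomp} providing the candidate collection $\calT$ on which to run the $\gamma$-approximate separation oracle, and at the final iterate round $\calS$ via roundability, hand $\calT$ to the $\beta$-approximate polyhedron, and reinsert $\Cd$ via \Cref{fclm:prelim3}. The one place where your phrasing is slightly off is the claim that ``the $y$-mass remaining on facilities in $\bigcup\calT$ still dominates $\yy$ in the suffix-sum order'' --- the $y$-mass on $\calT$ \emph{is} $\yy$; what actually dominates $\yy$ in suffix order is the residual \emph{supply budget} $t_p := k_p - \tilde{s}_p$ (where $\tilde{s}$ is the suffix-redistributed integer count of openings consumed by $\calS$, constructed in the paper's Claim~\ref{fclm:massmovement}), and it is this $t$ that upward feasibility certifies to lie in $\calP(\calI'_\calT)$; your argument is otherwise the paper's.
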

\noindent
	
\noindent
\noindent
The above theorem and results about supply polyhedra imply the bicriteria algorithms for the \mckc problem.
Theorem~\ref{fthm:2} follows from the above theorem (instantialted with $\delta = 0.5$, say) and Theorem~\ref{fthm:conflp} after noting that $D_\mathrm{max}/D_\mathsf{min} \leq n$ in the reduction we describe below. Theorem~\ref{fthm:2a} follows
from the above theorem and Theorem~\ref{fthm:asslp}.

Before moving to the proof of Theorem~\ref{fthm:reduction}, we state our main technical result which is a decomposition theorem which essentially states that given an \mckc instance, we can partition the problem into roundable and complete neighborhood sets. The reader may want to recall the definitions of roundable sets (Definition~\ref{fdef:rnding-mkc}), complete neighborhood sets (Definition~\ref{fdef:comp-nbr}), deletable sets (Definition~\ref{fdef:deletable}), and the natural LP relaxation \eqref{feq:lp1}-\eqref{feq:lp6}.
It is perhaps instructive to compare the below theorem with Theorem~\ref{fthm:weakdecomp}.
The proof of this theorem is rather technical, and we defer it to the next section.
\begin{theorem}[{\bf Decomposition Theorem}]\label{fthm:decomp}
	Given a feasible solution $(x,y)$ to LP\eqref{feq:lp1}-\eqref{feq:lp6}, and $\delta > 0$, there is a polynomial time algorithm which finds a solution $\x$ satisfying \eqref{feq:lp2} and\eqref{feq:lp4}, and a
	decomposition as follows.
	\begin{enumerate}
		\item The facility set $F$ is partitioned into two families $\calS = (S_1, S_2, \ldots, S_K)$ and $\calT = (T_1, T_2, \ldots, T_L)$ of mutually disjoint subsets.
		The client set $C$ is partitioned into three disjoint subsets $C = \Cd \cup \Cbb \cup \Cb$ where $\Cd$ is a $(\tilde{O}(1/\delta),\delta)$-deletable subset.
		
			\item Each $S_k \in \calS$ is $(\tilde{O}(1/\delta),(1+\delta))$-roundable with respect to $(\x,y)$, and moreover, each client in $\Cb$ satisfies $\sum_{i \in \calS, p} \x_{ijp} \geq 1 - \frac{\delta}{100}$.
		\item Each $T_\ell$ is a $\tilde{O}(1/\delta)$-complete neighborhood with a corresponding set $J_\ell$ of clients responsible for it, and $\Cbb = \cup_{\ell = 1}^L J_\ell$.	
%
%
%
	\end{enumerate}
\end{theorem}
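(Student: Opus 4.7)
The plan is to strengthen the region-growing procedure of Theorem~\ref{fthm:weakdecomp} by letting the LP opening vector $y$ drive a second stopping criterion, so that every set carved off is either an $\tilde{O}(1/\delta)$-complete neighborhood (Definition~\ref{fdef:comp-nbr}) or an $(\tilde{O}(1/\delta), 1+\delta)$-roundable set (Definition~\ref{fdef:rnding-mkc}). Algorithmically, I maintain an active client set $C' \subseteq C$, a working subgraph $H$ of $G$ on the current unprocessed vertices, and the growing families $\calT, \calS, \Cd, \Cbb, \Cb$, all initially empty. While $C' \neq \emptyset$ I pick an arbitrary $j_0 \in C'$ and grow the ball $B_t := N_H^{(t)}(j_0)$ for $t = 0, 2, 4, \dots$, examining the pair $(S_t, J_t) := (B_t \cap F, B_t \cap C)$ after each even step.

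I impose two stopping tests. The first is the complete-neighborhood test of Theorem~\ref{fthm:weakdecomp}: if $|\Gamma_H^{(t)}(j_0) \cap C| \leq \delta \cdot |J_t|$, I append $S_t$ to $\calT$ with $J_\ell := J_t \subseteq \Cbb$, dump the shell clients into $\Cd$, and charge them uniformly onto $J_\ell$. The second is a new roundability test. Sorting the capacities $c_1 \leq \cdots \leq c_P$ and writing $M_p(S_t) := \sum_{q \geq p} \sum_{i \in S_t} y_{iq}$, I greedily construct a candidate integer $\y$ top-down: install $\lfloor M_p \rfloor$ facilities of the highest capacities $\geq p$ at the top-$y$-mass locations, then repeat at smaller $p$ on the residual budget. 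Constraint~\eqref{feq:lp2} gives $\sum_j \sum_{i \in S_t, p} x_{ijp} \leq \sum_{i \in S_t, p} c_p y_{ip}$, so the only loss in condition~(b) of Definition~\ref{fdef:rnding-mkc} comes from the fractional pieces lopped off at each threshold. I declare $S_t$ roundable, appending it to $\calS$ and $J_t$ to $\Cb$, as soon as the cumulative fractional loss fits inside a $\delta$-fraction of the LP capacity supported in $S_t$, which yields the $(1+\delta)$ factor.

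The main obstacle is to certify that every ball halts with diameter $\tilde{O}(1/\delta)$ independently of $n$, sharper than the $O(\log n/\delta)$ bound of the weak decomposition. My plan is a two-track argument: while neither test fires, (i) failure of the complete-neighborhood test forces $|J_{t+2}| \geq (1+\delta)|J_t|$ as before, and (ii) failure of the roundability test forces the ``rounding-loss'' potential $\sum_p c_p \{M_p(S_t)\}$ (where $\{\cdot\}$ denotes fractional part) to stay above a $\delta$-threshold. Every new shell simultaneously shifts every $M_p$ by the $y$-mass of the fresh facilities in it, and a pigeonhole/averaging argument on how the fractional parts evolve shows that after $\poly(1/\delta)$ shells some $M_p$ must land within $\delta$ of an integer, firing~(ii). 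Combined with~(i), which prevents indefinite failure of the boundary test via a demand-doubling cap, this gives the $\tilde{O}(1/\delta)$ diameter; the hidden polylog factors stem from the double bookkeeping between client growth and fractional-part progress.

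The modified LP solution $\x$ is finally built by routing each $j \in \Cb$ entirely onto its roundable $S_k \in \calS$ and each $j \in \Cbb$ entirely onto its $T_\ell \in \calT$. The first move is legal because I only allow the roundability test to fire when $\sum_{i \notin S_t, p} x_{ijp} \leq \delta/100$ for every $j \in J_t$ (clients with larger outside mass are diverted into $\Cd$ instead, which is precisely the $1 - \delta/100$ clause in the theorem); the second is legal because $\Gamma(J_\ell) \subseteq T_\ell$ by definition of a complete neighborhood. All redirections and $\Cd$-charges lie within an $\tilde{O}(1/\delta)$-radius ball, so Claim~\ref{fclm:prelim3} absorbs the resulting blow-up and certifies that $\Cd$ is $(\tilde{O}(1/\delta), \delta)$-deletable, completing the three properties required.
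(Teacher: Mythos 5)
Your high-level architecture (region growing with two stopping tests, one producing complete neighborhoods and one producing roundable sets) matches the paper's, but the mechanism you propose for the roundable case has a genuine gap. Roundability is not certified by ``some suffix sum $M_p$ landing within $\delta$ of an integer'': condition (b) of Definition~\ref{fdef:rnding-mkc} requires the \emph{total} capacity lost to flooring, summed over all capacity levels, to be at most a $\delta$-fraction of the demand routed into the set, and a single high-capacity class whose $y$-mass stays at, say, $0.9$ forever (as in the integrality-gap example of Remark~\ref{frem:ig}, where $y_{a_k2}=1-1/K$ carries essentially all the demand) defeats any argument based on fractional parts approaching integers. There is no pigeonhole that forces this to resolve, because the adversary controls the $y$-mass added per shell. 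The paper's actual engine is different and is the part you are missing: each ball is centered not at an arbitrary client but at the facility--type pair $(i^\star,p^\star)$ maximizing the \emph{effective capacity} $\effc(i,p)=\sum_j d_j x_{ijp}/y_{ip}$, the radius is capped at a fixed $t^\star=\tilde{O}(1/\delta)$, and if expansion persists to $t^\star$ then $|J_{\mathsf{int}}|\geq \epsilon^{-4}|\Gamma_H(i^\star)|\geq \epsilon^{-4}\effc(i^\star,p^\star)$; bucketing the pairs in the ball by effective capacity in powers of $(1+\epsilon)$ then shows the flooring loss is dominated by a geometric series of order $\effc(i^\star,p^\star)/\epsilon^2$, which is an $\epsilon$-fraction of the total demand precisely because of the greedy choice of center. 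This also yields the $\tilde{O}(1/\delta)$ diameter directly, whereas your ``demand-doubling cap'' only gives $O(\log n/\delta)$ and your fractional-part track does not terminate.

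Two further problems. First, you divert clients with more than $\delta/100$ of their $x$-mass outside a roundable set into $\Cd$; nothing bounds the number of such straddling clients, so $\Cd$ need not be $(\tilde{O}(1/\delta),\delta)$-deletable. The paper never deletes such clients: they stay alive in $H$, and later balls near an existing roundable set are \emph{merged into it} (the root set $R(\calS)$ and the distance test $\dist_G(i^\star,R(\calS))\leq \tilde{O}(1/\epsilon)$ exist exactly for this), so that a client enters $\Cb$ only once its $\calS$-mass exceeds $1-\epsilon$. Second, your complete neighborhoods are only complete with respect to the residual graph $H$; once roundable sets have been carved out, a client of $J_\ell$ may retain a $G$-neighbor inside some earlier $S_k$, in which case $\Gamma_G(J_\ell)\not\subseteq T_\ell$ and the induced \cckp instance need not be feasible for $\opt$. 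The paper's invariants I1/I2 together with the merge step are what restore $\Gamma_G(J_\ell)\subseteq T_\ell$; your proposal does not address this.
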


\begin{proof}[{\bf Proof of Theorem~\ref{fthm:reduction}}]

Let us first describe an approach which fails. Let $(x,y)$ be a feasible solution to LP\eqref{feq:lp1}-\eqref{feq:lp6}, and apply~\Cref{fthm:decomp}.
Although the sets in $\calS$ by definition are roundable which takes care of the clients in $\Cb$, the issue arises in assigning clients of $\Cbb$.
In particular, $\yy_p := \sum_{i\in \calT} y_{ip}$ for all $1\le p\le P$ which indicates the ``supply" of capacity $c_p$ available for the $\Cbb$ clients.
However, this may not be enough for serving all these clients (even with violation).
That is, the vector $\yy$ may not lie in the (approximate) supply polyhedra of the \cckp instance
defined by $\calT$ as described in Remark~\ref{frem:red}.

That we fail is not surprising; after all, the LP has a bad integrality gap (Remark~\ref{frem:ig}) and we need to strengthen it.
We strengthen the LP by {\em explicitly requiring $\yy$ to be in the supply polyhedra}. Since we do not know $\calT$ before solving the LP (after all the LP generated it),
we go ahead and require this for {\em all} collection of complete-neighborood sets.
More precisely, for $\calT := (T_1,\ldots,T_L)$ of $L$ disjoint complete neighborhood sets, let $\calI_\calT$ denote the \cckp instance a la Remark~\ref{frem:red}.
\begin{equation}\label{feq:lp7}
\forall \calT := (T_1,\ldots,T_L) \textrm{ disjoint neighborhood subsets}, \quad \yy \in \calP(\calI_\calT) \tag{\small{L7}}
\end{equation}
Note that this is a feasible constraint to add to LP\eqref{feq:lp1}-\eqref{feq:lp6}. In the $\opt$ solution, for any $\calT$ there must be enough supply dedicated for the clients responsible for these complete neighborhood sets. So we have the following claim.
\begin{claim}\label{fclm:trivial}
	If $\opt$ is feasible, then there is a feasible solution to LP\eqref{feq:lp1}-\eqref{feq:lp6} along with \eqref{feq:lp7}.
\end{claim}

\noindent
We don't know how (and don't expect) to check feasibility of  \eqref{feq:lp7} for all collections $\calT$. However, we can still run ellipsoid method using the ``round-and-cut'' framework of \cite{CarrFLP00,ChakrabartyCKK11,Li15,Li16}.
To begin with, we start with the LP\eqref{feq:lp1}-\eqref{feq:lp6} and obtain feasible solution $(x,y)$. Subsequently, we apply the decomposition Theorem~\ref{fthm:decomp} to obtain the collection $\calT = (T_1,\ldots,T_L)$.
We then check if $\yy \in \calP(\calI_\calT)$ or not. Since we have a $\gamma$-approximate separation oracle for $\calP(\calI_\calT)$, we  are either guaranteed that $\yy \in \calP(\calI'_\calT)$ where the $\ell^{th}$ demand is now  $D_\ell/\gamma$; or we get a hyperplane separating $\yy$ from $\calP(\calI_\calT)$ which also gives us a
hyperplane separating $y$ from  LP\eqref{feq:lp1}-\eqref{feq:lp7}. This can be fed to the ellipsoid algorithm to obtain a new iterate $(x,y)$ and the above process is repeated. The analysis of the ellipsoid algorithm 
 tells us that in polynomial time we either prove infeasibility of the system \eqref{feq:lp1}-\eqref{feq:lp7} (implying the $\opt$  guess for \mckc is infeasible), or we
are have $(x,y)$ satisfying the premise of the following lemma.

\begin{lemma}
	Given $(x,y)$ feasible for  LP\eqref{feq:lp1}-\eqref{feq:lp6}, let us apply the Decomposition Theorem~\ref{fthm:decomp} to obtain the instance $\calS,\calT$.
	Suppose the solution $\yy_p := \sum_{i\in\calT} y_{ip}$ lies in $\calP(\calI'_\calT)$ for the \cckp (respectively, $Q||C_{min}$) instance $\calI'_\calT$ with $L$ machines with $D_\ell := |J_\ell|/\gamma$
	and $f_\ell := |T_\ell|$ (respectively, no cardinality constraints). Then we can obtain an $(\tilde{O}(1/\delta), \beta\gamma(1+\delta))$-approximate solution to the \mckc problem (respectively, with soft-capacities).
\end{lemma}
\begin{proof}
Since every set $S_k, 1\leq k\leq K,$ is $(\tilde{O}(1/\delta),(1+\delta))$-roundable, there exists a rounding $\y_{ip}$ for $i\in S_k$  such that
\begin{equation}\label{feq:repeat}
\textstyle \forall p, ~~~ \sum_{q\geq p} \sum_{i\in S_k} \y_{iq} \leq \floor{\sum_{q\geq p} \sum_{i\in S_k} y_{iq}}
\end{equation}
\def\s{\tilde{s}}
Ideally, we would like to open a facility of capacity $c_p$ at location $i$ whenever $\y_{ip} = 1$. Unfortunately, the decomposition theorem doesn't have capacity constraints for individual $p$'s but only their suffix sums. Instead we do the following. Define $y^{\calS}_p := \sum_{i\in \calS} y_{ip}$; LP\eqref{feq:lp3} implies that for all $p$, $y^\calS_p + y^\calT_p \leq k_p$.
For $1\leq p\leq P$, define $s_p := \sum_{i\in \calS} \y_{ip}$; \eqref{feq:repeat} implies for all $p$, $\sum_{q\geq p} s_q \leq \floor{\sum_{q\geq p} y^\calS_q}$ (since $\floor{a} + \floor{b} \leq \floor{a+b}$.)
\begin{claim}\label{fclm:massmovement}
	Given $(s_1,\ldots,s_P)$ satisfying $\sum_{q\geq p} s_q \leq \floor{\sum_{q\geq p} y^\calS_q}$, there exists $(\s_1,\ldots,\s_P)$ satisfying
	for all $p$, (a) $\sum_{q\geq p} s_q \leq \sum_{q\geq p} \s_q \leq \sum_{q\leq p} y^\calS_q$, and (b) $\s_p \leq k_p$.
\end{claim}
\begin{proof}
Simply define $\s_p := \floor{\sum_{q\geq p} y^\calS_q} - \floor{\sum_{q> p} y^\calS_q}$. Therefore, $\sum_{q\geq p} \s_q = \floor{\sum_{q\geq p} y^\calS_q}$ implying (a).
To see (b), note $\s_p \leq \ceil{y^\calS_p} \leq k_p$, where we use the fact $\floor{a+b} \leq \floor{a} + \ceil{b}$ for any non-negative $a,b$.
\end{proof}

The first inequality in (a) implies that at every location with $\y_{ip} = 1$, we can open a facility of capacity $c_q \geq c_p$. This, along with condition (b) of roundable sets (Definition~\ref{fdef:rnding-mkc}),
implies we can find a fractional solution $x_{ijp}$ for $j\in \Cb$ and $(i,p)$ with $\y_{ip} = 1$
such that (a) $\sum_{i\in \calS, p\in [P]} x_{ijp} \geq 1$, (b) $x_{ijp} > 0$ only if $d(i,j) \leq \diam(S_k) \leq \tilde{O}(1/\delta)$, and (c) the capacity violation is $\leq (1+\delta)(1 - \delta/100)^{-1} \leq (1+2\delta)$. Note the second term arises since
from the decomposition theorem we have $\sum_{i\in \calS, p\in [P]} \x_{ijp} \geq 1-\delta/100$.
Thus we have fractionally assigned all $\Cb$ clients to open facilities in $\calS$.

Define, for $p\in [P]$,  $t_p := k_p - \s_p$, the number of facilities of capacity $c_p$ we can open in $\calT$. Note, by Claim~\ref{fclm:massmovement}, $t_p$'s are non-negative.
\begin{claim}
	$(t_1,\ldots,t_P) \in \calP(\calI'_\calT)$
\end{claim}
\begin{proof}
	By the Lemma premise, we have $\yy \in \calP(\calI'_\calT)$. 
	Now note that for all $p$,
	\[
\textstyle 	\sum_{q\geq p} t_q = \sum_{q\ge p} (k_q - \s_q) \ge \sum_{q\ge p} k_q - \sum_{q\geq p} y^\calS_q \geq  \sum_{q\geq p} y^\calT_q
	\]
	Since $\calP(\calI'_\calT)$ is upward-feasible, and $\yy\in \calP$, we get the claim.
\end{proof}
Since $\calP(\calI'_\calT)$ is $\beta$-approximate, we can find an allocation of the $t_p$ copies of jobs of capacity $c_p$ to the $L$ machines of $\calI'_\calT$ such that machine $\ell$ gets at most $f_\ell$ jobs and total capacity $\geq D_\ell/\beta = |J_\ell|/\beta\gamma$. We install these capacities on the facilities of $T_\ell$. Since the diameter of each $T_\ell$ is $\tilde{O}(1/\delta)$, we can find an $x_{ijp}$ assignment of $\Cbb$-clients  to these such that $\sum_{i\in \calT,p\in [P]} x_{ijp} \geq 1$ and $x_{ijp} > 0$ iff $d(i,j) = \tilde{O}(1/\delta)$, such that the capacity violation is at most $\alpha\beta$. This takes care of the clients in $\Cbb$. Finally, Claim~\ref{fclm:prelim3} takes care of all the deleted clients $\Cd$ with an extra hit of $(1+\delta)$ on the capacity and additive $\tilde{O}(1/\delta)$ on the distance.
\end{proof}
\noindent
This completes the proof of Theorem~\ref{fthm:reduction} for the general \mckc problem. For the problem with soft capacities, the proof is exactly the same, except in the end, the instance $\calI_\calT$ is a $Q||C_{min}$ instance rather than a \cckp one.
\end{proof}

We end this section by noting that for the \mckc problem with soft-capacities, if we use the assignment supply polyhedra described in Section~\ref{fsec:supplypolyhedra}, then we do not need to run the ellipsoid algorithm.
In particular, the inequality \eqref{feq:lp7} is implied \eqref{feq:lp1}-\eqref{feq:lp6} for $\calP_\mathsf{ass}$ defined in \eqref{feq:asslp1}-\eqref{feq:asslp3}.
\begin{lemma}\label{flem:implied}
Given any $(x,y)$ feasible for LP\eqref{feq:lp1}-\eqref{feq:lp6} and any $\calT = (T_1,\ldots,T_m)$, we have $\yy \in \calP_\mathsf{ass}(\calI_\calT)$.
\end{lemma}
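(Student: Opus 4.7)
\textbf{Proof proposal for Lemma~\ref{flem:implied}.} The plan is to exhibit an explicit feasible point of $\calP_\mathsf{ass}(\calI_\calT)$ for the supply vector $\yy$ by reading it directly off the given LP solution $(x,y)$. Concretely, I would set, for each machine $\ell \in [m]$ (corresponding to the complete neighborhood $T_\ell$) and each job type $p \in [P]$,
\[
z_{\ell p} \;:=\; \sum_{i \in T_\ell} y_{ip}.
\]
The supply constraint \eqref{feq:asslp1} is immediate: since the $T_\ell$'s are disjoint,
\[
\sum_{\ell} z_{\ell p} \;=\; \sum_{\ell}\sum_{i \in T_\ell} y_{ip} \;=\; \sum_{i \in \calT} y_{ip} \;=\; \yy_p,
\]
and non-negativity \eqref{feq:asslp3} is obvious.

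The only real content is the demand constraint \eqref{feq:asslp2}. The key observation is that because $T_\ell$ is a complete neighborhood with $J_\ell$ responsible for it, every client $j \in J_\ell$ has $\Gamma(j) \subseteq T_\ell$, hence $x_{ijp}$ can be nonzero only for $i \in T_\ell$ (recall that in the LP we force $x_{ijp}=0$ whenever $d(i,j) > \opt$). Combining the two capacity/opening inequalities \eqref{feq:lp2} and \eqref{feq:lp4}, I get the basic bound
\[
\sum_{j \in J_\ell} x_{ijp} \;\leq\; \min\!\bigl(c_p,\, |J_\ell|\bigr)\, y_{ip} \;=\; \min(c_p, D_\ell)\, y_{ip},
\]
since on one hand $\sum_{j} x_{ijp} \le c_p y_{ip}$ and on the other hand $x_{ijp}\le y_{ip}$ summed over the $D_\ell=|J_\ell|$ clients of $J_\ell$ gives $D_\ell y_{ip}$. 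Then by \eqref{feq:lp1} applied to each $j \in J_\ell$,
\[
D_\ell \;\leq\; \sum_{j\in J_\ell}\sum_{i\in T_\ell, p} x_{ijp} \;=\; \sum_{i \in T_\ell, p}\sum_{j \in J_\ell} x_{ijp} \;\leq\; \sum_{p} \min(c_p, D_\ell)\, \sum_{i\in T_\ell} y_{ip} \;=\; \sum_p z_{\ell p}\min(c_p,D_\ell),
\]
which is exactly \eqref{feq:asslp2} for machine $\ell$.

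There is no real obstacle here; the only subtle point is remembering to use the complete neighborhood property to restrict the sum over $i$ to $T_\ell$ (so that the supply constraint only consumes the $y$-mass lying on $\calT$), and to use \emph{both} LP inequalities \eqref{feq:lp2} and \eqref{feq:lp4} in order to bring the $\min(c_p,D_\ell)$ factor into play that the supply polyhedron $\calP_\mathsf{ass}$ demands. Putting these ingredients together yields $\yy \in \calP_\mathsf{ass}(\calI_\calT)$, as claimed.
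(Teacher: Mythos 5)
Your proposal is correct and matches the paper's own argument: the same assignment $z_{\ell p}=\sum_{i\in T_\ell}y_{ip}$, the same use of the complete-neighborhood property to invoke \eqref{feq:lp1}, and the same combination of \eqref{feq:lp2} and \eqref{feq:lp4} to produce the $\min(c_p,D_\ell)$ factor. The only cosmetic difference is that you take the $\min$ per facility $i$ before summing over $T_\ell$, whereas the paper first aggregates $x_{jp}=\sum_{i\in T_\ell}x_{ijp}$ and then applies the two bounds; these are equivalent.
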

\begin{proof}
	Fix $\calT = (T_1,\ldots,T_m)$ to be a collection of complete neighborhood sets. In the instance $\calI_\calT$ of $Q||C_{min}$, we have $m$ machines with demands
	$D_\ell = |J_\ell|$, where $J_\ell$ is the client set responsible for $T_\ell$. Recall, $y^\calT_p := \sum_{i \in \calT} y_{ip}$, and we need to find $z_{\ell,p}$ which satisfy the constraints
	\eqref{feq:asslp1}-\eqref{feq:asslp3} where $s_p := y^\calT_p$.
	
	The definition is natural: $z_{\ell,p} := \sum_{i\in T_\ell} y_{ip}$. Clearly it satisfies \eqref{feq:asslp1} (indeed with equality). We now show it satisfies \eqref{feq:asslp2}.
	To this end, define for any $j\in J_\ell$, $x_{jp} := \sum_{i\in T_\ell} x_{ijp}$. Since $\Gamma(J_\ell) \subseteq T_\ell$, we get from \eqref{feq:lp1} that $\sum_p x_{jp} \geq 1$.
	In particular,
	\begin{equation}
\textstyle 	\label{feq:hohum1}\sum_p \sum_{j\in J_\ell}x_{jp} \geq D_\ell
	\end{equation}
	
	From \eqref{feq:lp4}, we know $x_{ijp} \leq y_{ip}$ and summing over all $i\in T_\ell$, we get for all $j\in T_\ell$, $x_{jp} \leq \sum_{i\in T_\ell} y_{ip} = z_{\ell,p}$.
	In particular, $\sum_{j\in J_\ell} x_{jp} \leq z_{\ell,p} D_\ell$. From \eqref{feq:lp2} we know for all $i\in T_\ell, p\in [P]$, $\sum_{j\in J_\ell} x_{ijp} \leq c_p y_{ip}$.
	Summing over all $i\in T_\ell$, gives $\sum_{j\in J_\ell} x_{jp} \leq c_p z_{\ell,p}$. Putting together, we get
	\begin{equation}
	\textstyle 	\label{feq:hohum2}\sum_{j\in J_\ell}x_{jp} \leq z_{\ell,p} \min(D_\ell,c_p)
	\end{equation}
	\eqref{feq:hohum1} and \eqref{feq:hohum2} imply that $z$ satisfies \eqref{feq:asslp2}.
\end{proof}
Therefore, one can use the natural LP relaxation to obtain for any $\delta > 0$, a $\left(\tilde{O}(1/\delta),(2+\delta)\right)$-bicriteria approximation for the \mckc problem with soft capacities.
As it should be clear, this is a much more efficient algorithm.

\section{Proof of Decomposition Theorem~\ref{fthm:decomp}} \label{fsec:decomp-proof}
\def\F{F_\mathsf{tentative}}
\noindent
We now prove the Decomposition Theorem~\ref{fthm:decomp}. We first describe the algorithm which constructs the partitions into roundable and complete neighborhood sets. It is based on the following refinement of the region growing idea used in the proof of Theorem~\ref{fthm:weakdecomp} -- starting
with an arbitrary client we first check if there is a small enough neighborhood (i.e., of small diameter)  around it which is {\em non-expanding}, i.e., the number of clients on the boundary are much smaller than the number of clients inside the neighborhood. If so, we can remove the clients on the boundary and obtain a complete neighborhood set. Otherwise, we show that the total $y$-mass of the facilities in this neighborhood is quite high, and so, we can get
a roundable set of facilities.   The algorithm is written formally in Algorithm~1. We  analyze the algorithm subsequently  and show that it has the desired properties.
Throughout, we let $\epsilon := \min(1/12,\delta/100)$.

\subsection{Algorithm Description}
Our algorithm starts with the collections $\calS$ and $\calT$, and the clients sets $\Cd$, $\Cbb$, and $\Cb$ being empty. Once a facility is assigned into a set in $\calS$ or $\calT$, it is called an \emph{assigned facility}. Similarly clients are assigned once they are added to $\Cd \cup \Cbb \cup \Cb$.
As our algorithm forms these clusters, it changes the connection graph $G$ by deleting all assigned clients and facilities. At any time, we denote the residual graph by $H$.
We make a couple of definitions as in \Cref{fsec:regiongrowing}. Given a node $i$ and an integer $t$, let $\Gamma^{(t)}_H(i)$ denote the nodes at distance (in $H$) exactly $t$ from $i$. We let $N^{(t)}_H(i)$ denote the nodes at distance $< t$ from $i$.
We use the shorthand $\Gamma_H(i)$ to denote $\Gamma^{(1)}_H(i)$. We extend this definition to subsets: $\Gamma_H(S) := \cup_{i\in S}\Gamma(i)$.
Since we only delete vertices from the graph over the iterations, $\Gamma_H(S) \subseteq \Gamma_G(S)$ for all sets $S \subseteq V$ of the original set of vertices of $G$.
For each of the partitions $\calS$ and $\calT$, let $L(\calS) = \cup_{1\leq k \leq K} S_k$ and $L(\calT) = \cup_{1 \leq \ell \leq L} T_\ell$ denote the set of all locations in them respectively.
Each set $S_k$ (resp. $T_\ell$) in the partitions $\calS$ (resp. $\calT$) will have a {\em root} facility $i_k \in S_k$ (resp. $i_\ell \in T_\ell$). We use $R(\calS)$ and $R(\calT)$ to denote the collection of roots $\cup_{1 \leq k \leq K} \{i_k\}$ and $\cup_{1 \leq \ell \leq L} \{i_\ell\}$ in $\calS$ and $\calT$ respectively. \smallskip

A key definition in our algorithm is that of {\em effective capacity}. For every $i\notin L(\calS)\cup L(\calT)$ and $p\in [P]$ with $y_{ip} > 0$, define
\[
\effc(i,p) := \frac{\sum_{j\in H \cap C} d_j x_{ijp}}{y_{ip}}
\]
Recall that $C$ denotes the set of all clients, and therefore, $H \cap C$ is the set of unassigned clients.
Since all sets
 are initially empty, $\effc(i,p)$ is well defined for all $i\in F, p\in [P]$. Whenever a facility enters $L(\calS)\cup L(\calT)$, we fix its $\effc(i,p)$ to be what it was at the iteration it entered.
Since the set of clients in $H$ only monotonically decreases, the effective capacity can only decrease over time. Each iteration of the algorithm (\Cref{falg:iter}) begins by picking the pair $(i^\star, p^\star)$ with the highest
effective capacity (\Cref{falg:istar}). 

\def\t{\bar{t}}
Let $t^\star$ be the smallest {\em even} integer $> \ceil{\frac{8}{\eps}\ln\left(\frac{1}{\eps}\right)}$. We set $\t$ to be the smallest odd number $t$ in $\{1,\ldots,t^\star\}$ such that $|\Gamma^{(t)}_H(i^\star)\cap C| < \eps\cdot |N^{(t)}_H(i^\star)\cap C|$ if such a number exists, otherwise we set $\t = t^\star + 1$.  That is, as in Section~\ref{fsec:regiongrowing} we find the smallest distance at which the ``client ball'' stops expanding, however, we stop once we cross $t^\star$. Depending on what $\t$ is (although note that it is alway an odd number), we have two cases.
%
\begin{itemize}
\item (If $\t =  t^\star + 1$):  In this case, we have always witnessed expansion. We form a new component  $S_k := N^{(t^\star)}_H(i^\star)\cap F$ to be all the facilities in the $t^\star$-ball around $i^\star$.
Let $J_\mathsf{int} := N^{(t^\star-1)}_H(i^\star)$ be the clients whose neighbors in $H$ lie in $S_k$. Since we witness expansion at all stages, note that
\begin{eqnarray}
\label{feq:jint}
 |J_\mathsf{int}| \geq \left(1+\eps\right)^{t^\star/2}\cdot |\Gamma_H(i^\star)| > \frac{1}{\eps^4}\cdot |\Gamma_H(i^\star)|
\end{eqnarray}
It is not too hard to see that $|\Gamma_H(i^\star)| \ge \effc(i^\star,p^\star)$ (see Claim~\ref{fclm:imp}).
Therefore, the  (fractionally opened) facilities in $S_k$ are servicing a large enough demand, in particular, more than the effective capacity of $i^\star$ and by the greedy choice, the effective capacity of any $(i,p)$.
This implies there will be considerable mass ($>1/\eps$)  of facilities of the same type in $S_k$ opened fractionally; opening floor-many of them violates capacity by only $(1+\eps)$.
So, we add $S_k$ to $\calS$, make $i^\star$ its root and add $i^\star$ to $R(\calS)$  (\Cref{falg:case1}). We remove $S_k \cup J_\mathsf{int}$ from $H$ and add $J_\mathsf{int}$ to $\Cb$. Additionally, for a technical reason, we remove from $H$ any other client $j$
with $\sum_{i\in \calS, p} x_{ijp} > (1-\eps)$; in this case we set $\x_{ijp} = x_{ijp}$ for all $i\in \calS, p\in [P]$ and $\x_{ijp} = 0$ for all $i\notin \calS, p\in [P]$ and add $j$ to $\Cb$~(\Cref{falg:blue}).
%
%
\item (If $\t \leq t^\star$): Let $J_\mathsf{ext} := \Gamma_H^{(\t)}(i^\star)$ and let $J_\mathsf{int} := N_H^{(\t)}(i^\star) \cap C$. We know that $|J_\mathsf{ext}| < \eps\cdot |J_\mathsf{int}|$.
Let $\F := N_H^{(\t)}(i^\star) \cap F$ be the facilities in this ball. We delete $J_\mathsf{ext}$ from $H$ and add it to $\Cd$; we can do so since we can ``charge it'' to $J_\mathsf{int}$.
Ideally, we would like to add $(\F,J_\mathsf{int})$ as a complete-neighborhood to $\calT$.
While it is true that $\Gamma_H(J_\mathsf{int}) \subseteq \F$, the same may not be true in the original graph $G$ since we delete vertices from it. More precisely, there could be a client $j\in J_\mathsf{int}$ and a facility $i\in \calS\cup\calT$
such that $(i,j) \in G$. Therefore, the algorithm branches into two sub-cases.
%
%

(i) There is some root center  $i_r \in R(\calS)$ close to $i^\star$ (\Cref{falg:case2}). In  this
    sub-case, the algorithm considers the closest such root $i_r$, and {\em augments} $S_r$ to $S_r \cup \F$. As in the above case,
    we update $\Cb$ by adding to it any client which has more than $(1-\eps)$ of its fractional assignment to facilities in $\calS$ (in particular, $J_\mathsf{int}$ will
    get added to this set)

 (ii) There is no such root (\Cref{falg:case3}).  In this case, the set $\F$ gets added as a new set $T_\ell$ to $\calT$. Further,
    we add $J_\mathsf{int}$ to $\Cbb$. One of the invariants of our algorithm is that in later stages when we again encounter this case ($\t \leq t^\star$), any client $j\in J_\mathsf{int}$ at that stage {\em cannot} be a neighbor in $G$ to a facility $i\in \calT$.
    \end{itemize}

This completes the description of the algorithm. We now show that the decomposition has the desired properties; in particular it satisfies the conditions in Theorem~\ref{fthm:decomp}.

\begin{algorithm}
\caption{Rounding algorithm for~\Cref{fthm:decomp}}\label{falground}
\begin{algorithmic}[1]
\Procedure{AlgDecompose}{$x,y$}
\State $t \gets 1$; $k \gets 1$; $\ell \gets 1$; $H  \gets G$; $t^\star \gets$ smallest {\em even} integer $> \ceil{\frac{8}{\eps}\ln\left(\frac{1}{\eps}\right)}$; $\x \gets x$
\While{there are no unassigned clients, i.e., $V(H) \cap C \neq \emptyset$ } \label{falg:iter}
\State {\bf update} $\effc(\cdot,\cdot)$ for all $i \in V(H) \cap F, p \in [P]$
\State $(i^\star,p^\star) \gets \argmax_{i \in H, p \in [P]} \effc(i,p)$ \Comment{\emph{pick location and type with largest effective capacity}} \label{falg:istar}
\State $\t \gets 1$ \Comment{\emph{Find the smallest odd $\t$ which is non-expanding}}
\While{$\t < t^\star$}
\If{$|\Gamma^{(t)}_H(i^\star)| < \eps\cdot |N^{(t)}_H(i^\star)\cap C|$}
\State Exit While Loop
\Else
\State $\t \gets \t+2$
\EndIf
\EndWhile
\If{$\t = t^\star + 1$} \Comment{\emph{$N^{(t^\star)}_H(i^\star)\cap F$ will be locally roundable}}
\State $S_k \gets N^{(t^\star)}_H(i^\star)\cap F$ and $\calS \gets \calS \cup S_k$ \label{falg:case1}
\State {\bf define} $i^\star$ to be the root of $S_k$, i.e., $R(\calS) \gets R(\calS) \cup \{i^\star\}$
\State $H \gets H \setminus S_k$ \Comment{\emph{remove assigned facilities from $H$}}
\State $J_\mathsf{int}\gets  N^{(t^\star-1)}_H(i^\star)$ \Comment{\emph{Note $|J_\mathsf{int}| \geq \frac{1}{\eps^4}\cdot |\Gamma(i^\star)| $}}
\For{each $j \in H$ s.t $\sum_{i \in \calS, p} x_{ijp} > (1-\eps)$} \Comment{\emph{In particular, this contains $J_\mathsf{int}$}}
\State $\Cb \gets \Cb \cup \{j\}$ and $H \gets H \setminus \{j\}$ \Comment{\emph{assign clients to $\Cb$}} \label{falg:blue}
\State $\x_{ijp} \gets 0$ for all $i\notin \calS, p\in [P]$ \Comment{\emph{Set $\x$ to $0$ for facilities not in $\calS$}} \label{falg:setx}
\EndFor
\State $k \gets k + 1$
\Else \Comment{\emph{$\t < t^\star$, i.e., the ball is non-expanding.}}
\State $\F \gets N_H^{(\t)}(i^\star) \cap F$ \Comment{\emph{$\F$ are the ball's facilities.}}
\State $J_\mathsf{ext} \gets \Gamma_H^{(\t)}(i^\star)$ \Comment{\emph{Ball's boundary clients}}
\State $J_\mathsf{int} \gets N_H^{(\t)}(i^\star) \cap C$. \Comment{\emph{Ball's internal clients}}
\State $\Cd \gets \Cd \cup J_\mathsf{ext}$ and define $\phi$ appropriately \Comment{\emph{delete $J_\mathsf{ext}$ and charge to $J_\mathsf{int}$}} \label{falg:phi1}\label{falg:phi2}
\State $H \gets H \setminus J_\mathsf{ext} $ \Comment{\emph{remove deleted clients from $H$}}
\If{$\dist_G(i^\star, R(\calS)) \leq \frac{16}{\eps}\ln\left(\frac{1}{\eps}\right)$} \Comment{\emph{$i^\star$ is close to some root in $R(\calS)$}} \label{falg:case2}

\State {\bf let} $i_r = \argmin_{i \in R(\calS)} \dist_G(i^\star,i)$ \Comment{\emph{$i_r$ is the nearby root from $\calS$}}
\State $S_r \gets S_r \cup \F$ \Comment{\emph{add these facilities to $S_r$}} \label{falg:case2a}
\For{each $j \in H$ s.t $\sum_{i \in \calS, p} x_{ijp} > (1-\eps)$} \Comment{\emph{In particular, this contains $J_\mathsf{int}$}}
\State $\Cb \gets \Cb \cup \{j\}$ and $H \gets H \setminus \{j\}$ \Comment{\emph{assign clients to $\Cb$}} \label{falg:case2ac}
\State $\x_{ijp} \gets 0$ for all $i\notin \calS, p\in [P]$ \Comment{\emph{Set $\x$ to $0$ for facilities not in $\calS$}} \label{falg:setx-2}
\EndFor
\Else \Comment{\emph{$\F$ will be a $\tilde{O}(1/\eps)$-complete neighborhood of $J_\mathsf{int}$}} \label{falg:case3}
\State {\bf add} a new part $T_\ell := \F$ to $\calT$  \label{falg:case2b}
\State $J_\ell \gets J_\mathsf{int}$, $\Cbb \gets \Cbb \cup J_\mathsf{int}$, and $H \gets H \setminus J_1$ \Comment{\emph{assign clients to $\Cbb$}} \label{falg:color2}
\State $\ell \gets \ell + 1$
\EndIf
\State $H \gets H \setminus \F$ \Comment{\emph{remove assigned facilities  from $H$}}
\EndIf
\State $t \gets t+1 $ \Comment{\emph{Iteration Counter}}
\EndWhile\label{feuclidendwhile}
\State \textbf{return} $\calS,\calT$
\EndProcedure
\end{algorithmic}
\end{algorithm}
\subsection{Algorithm Analysis}
Firstly, the statement of the algorithm implies the partition $\calS, \calT$ and $\Cb\cup\Cbb\cup \Cd$.
We analyze the algorithm to prove the properties needed.
\medskip \noindent
At the beginning of each iteration, we want to show that the algorithm maintains the following invariants:
\begin{framed}
\begin{enumerate}
	\item[I1.] For any facility $i\in L(\calT)$, $\Gamma_G(i) \cap V(H) = \emptyset$, i.e., $\Gamma_G(i)$  contains no unassigned clients. Note that this holds even w.r.t. all the neighbors according to the original graph $G$.
\item[I2.] Similarly, for any facility $i \in L(\calS)$  added in~\Cref{falg:case2a} in Algorithm~1, $\Gamma_G(i)$ contains no unassigned clients.
\end{enumerate}
\end{framed}
Note that in I2, we count only those $i$ which get added to $L(\calS)$ in~\Cref{falg:case2a}, and so do not consider locations getting added in~\Cref{falg:case1}.

\begin{claim}
\label{fcl:inv}
The two invariants hold at the beginning of every iteration of the while loop in~\Cref{falg:iter}.
\end{claim}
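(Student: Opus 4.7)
The plan is to prove Claim~\ref{fcl:inv} by induction on the iteration count of the while loop at~\Cref{falg:iter}. Before the first iteration $L(\calS) = L(\calT) = \emptyset$, so both invariants hold vacuously. For the inductive step we assume I1 and I2 hold at the start of the current iteration and show they persist at its end, splitting on which branch the algorithm executes. The~\Cref{falg:case1} branch (when $\bar t = t^\star + 1$) adds a fresh root $i^\star$ and its ball $S_k$ to $\calS$, but nothing to $L(\calT)$ and nothing to $L(\calS)$ via~\Cref{falg:case2a}, so neither invariant is stressed. Hence we focus on the non-expanding branch $\bar t \leq t^\star$, which either invokes~\Cref{falg:case2a} (augmenting an existing $S_r$) or~\Cref{falg:case2b} (creating a new $T_\ell$).

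The common geometric observation for both sub-branches is: every facility $i \in \F = N_H^{(\bar t)}(i^\star) \cap F$ sits in $V(H)$ at the start of the iteration, and for any unassigned client $j \in V(H)$ with $(i,j) \in G$ we also have $(i,j) \in H$ since $H$ is the induced subgraph on $V(H)$. Thus $d_H(j, i^\star) \leq d_H(j,i) + d_H(i, i^\star) \leq 1 + (\bar t - 1) = \bar t$, placing $j$ in $J_{\mathsf{int}} \cup J_{\mathsf{ext}}$. By the symmetric calculation, any unassigned facility $i$ adjacent in $G$ to some $j \in J_{\mathsf{int}}$ must itself lie in $\F$. In the~\Cref{falg:case2b} branch $J_{\mathsf{int}}$ is moved to $\Cbb$ and $J_{\mathsf{ext}}$ to $\Cd$, so every unassigned $G$-neighbor of every $i \in T_\ell = \F$ is assigned in this iteration, which establishes I1 for the newly added $\calT$-facilities.

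The delicate case is~\Cref{falg:case2a}. Here $J_{\mathsf{ext}}$ again goes to $\Cd$, but a client $j \in J_{\mathsf{int}}$ is added to $\Cb$ only if $\sum_{i \in \calS, p} x_{ijp} > 1 - \eps$ \emph{after} the augmentation. To verify this inequality, partition the $G$-neighbors of $j$ into those still unassigned (in $V(H)$) and those already assigned. The unassigned neighbors lie in $\F$ by the symmetric observation above, and $\F$ has just been absorbed into $S_r \subseteq L(\calS)$. By the inductive hypothesis the assigned neighbors cannot lie in $L(\calT)$ (by I1) nor in any~\Cref{falg:case2a}-added portion of $L(\calS)$ (by I2), so they are necessarily in the~\Cref{falg:case1}-added portion of $L(\calS)$---still inside $L(\calS)$. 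Consequently every $G$-neighbor of $j$ sits inside $L(\calS)$ after augmentation, and~\eqref{feq:lp1} yields $\sum_{i \in \calS, p} x_{ijp} \geq 1 > 1 - \eps$. So $j$ is indeed added to $\Cb$, establishing I2 for the facilities in $\F$ and completing the induction.

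The anticipated main obstacle is precisely the~\Cref{falg:case2a} step, where the two invariants must combine to force every $G$-neighbor of a $J_{\mathsf{int}}$-client into $L(\calS)$; once this is achieved, the LP coverage constraint~\eqref{feq:lp1} supplies the required $(1-\eps)$-mass threshold mechanically. Notably, Claim~\ref{fcl:inv} itself does not consume the distance bound $\dist_G(i^\star, R(\calS)) \leq \tfrac{16}{\eps}\ln(1/\eps)$ that triggers~\Cref{falg:case2a}---that bound is deployed elsewhere to control the diameter of each $S_r$ to $\tilde O(1/\eps)$.
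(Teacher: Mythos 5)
Your proof is correct and follows essentially the same inductive argument as the paper: the only facilities that need checking are those added to $\calT$ in~\Cref{falg:case2b} (for I1) and to $\calS$ in~\Cref{falg:case2a} (for I2), their unassigned neighbors all lie in $J_{\mathsf{int}}\cup J_{\mathsf{ext}}$ and get assigned within the iteration, and for I2 one uses I1 to conclude that every $G$-neighbor of a $J_{\mathsf{int}}$-client ends up in $L(\calS)$, so \eqref{feq:lp1} forces the client into $\Cb$. The only cosmetic differences are that you make the distance/induced-subgraph observation explicit and additionally invoke I2 to exclude \Cref{falg:case2a}-added facilities as assigned neighbors, which is harmless but unnecessary since any assigned facility outside $L(\calT)$ is already in $L(\calS)$.
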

\begin{proof}
We show this by induction over the number of iterations $t$. Clearly, at $t=1$, $L(\calT)$ and $L(\calS)$ are empty, so the invariants hold tautologically. Suppose they hold for iterations upto $i$. We show that they also hold at the end of the $t^{th}$ iteration, and hence they hold at the beginning of the $(t+1)^{th}$ iteration, thus completing the proof. To this end, consider the $t^{th}$ iteration.

\medskip \noindent We first show that I1 continues to hold at the end of this iteration. Note that we only need to check if I1 holds for any new facilities added to $L(\calT)$ in this iteration, which only happens in~\Cref{falg:case2b}. In this case, consider any facility $i \in T_\ell$, the set of facilities added to $L(\calT)$, and consider the neighborhood $\Gamma_G(i)$: in this set, some clients are already in $\Cb \cup \Cbb \cup \Cd$ in which case they would have been deleted from $H$ in earlier iterations. By definition, the remaining clients belong to $J_\mathsf{int} \cup J_\mathsf{ext}$, since $J_\mathsf{int} \cup J_\mathsf{ext}$ contains all remaining neighbors of $T_\ell$. But clients in $J_\mathsf{int} $ are added to $\Cbb$, and those in $ J_\mathsf{ext}$ are added to $\Cd$, hence $i$ would have no  clients as neighbors in $H$ at the end of this iteration. Applying this to all $i \in T_\ell$ completes the proof.

\medskip \noindent We now show that I2 continues to hold at the end of this iteration. Similar to the above proof, note that we only need to check if I2 holds for any new facilities added to $L(\calS)$ in~\Cref{falg:case2a}. In this case, consider any facility $i \in \F$, the set of facilities added to $L(\calS)$, and consider the neighborhood $\Gamma_G(i)$: in this neighborhood, some clients are already in $\Cb \cup \Cbb \cup \Cd$ in which case they would have been deleted from $H$ in earlier iterations. By definition, the remaining clients belong to $J_\mathsf{int} \cup J_\mathsf{ext}$, since $J_\mathsf{int} \cup J_\mathsf{ext}$ contains all remaining neighbors of $\F$. But clients in $J_\mathsf{ext}$ are added to $\Cd$, and we now show that all clients in $J_\mathsf{int}$ would be colored blue in~\Cref{falg:case2ac}, hence showing that $i$ would have no clients as neighbors in $H$ at the end of this iteration.
Indeed, consider any client $j \in J_\mathsf{int}$: by definition, it was in $H$ at the beginning of this iteration and so by invariant I1, there are no edges in $H$ between $j$ and any location $i' \in L(\calT)$. So all neighbors in $\Gamma_G(j)$ which have already been deleted belong to $L(\calS)$. Moreover, $\F$ includes all remaining neighbors of $j$. Hence, for any such $j$, we know that $\sum_{i \in L(\calS), p} x_{ijp} = 1$, and so it would be
added to $\Cb$ in~\Cref{falg:case2ac}.
\end{proof}

We now show that the deleted clients $\Cd$ can be charged to $\Cb$ and $\Cbb$.
\begin{claim} \label{fcl:phi-augment}
$\Cd$ is a $(\tilde{O}(1/\delta),\delta)$-deletable set.
\end{claim}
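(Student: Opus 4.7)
The plan is to construct the charging map $\phi$ explicitly by tracing the algorithm. Each client $j \in \Cd$ enters $\Cd$ at exactly one iteration, namely the unique iteration in which $j$ lies in the boundary set $J_\mathsf{ext}$ (see Lines~\ref{falg:phi1}--\ref{falg:phi2}). For that iteration, let $J_\mathsf{int}$ denote the ball's internal clients. I set
\[
\phi_{j,j'} \;:=\; \frac{1}{|J_\mathsf{int}|} \quad \text{for every } j' \in J_\mathsf{int},
\]
and $\phi_{j,j'} = 0$ otherwise. This trivially satisfies $\sum_{j'} \phi_{j,j'} = 1$.

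Three verifications remain: (i) each recipient $j'$ actually lies in $C\setminus\Cd$; (ii) the load on any recipient is at most $\delta$; and (iii) the distances are $\tilde O(1/\delta)\cdot\opt$. For (i), I will observe that in the iteration where $J_\mathsf{ext}$ is added to $\Cd$, the internal set $J_\mathsf{int}$ is immediately moved either to $\Cb$ (in Case 1 via Line~\ref{falg:blue}, or in Case 2 via Line~\ref{falg:case2ac}) or to $\Cbb$ (in Case 3 via Line~\ref{falg:color2}); in every subcase these clients are simultaneously removed from $H$. Hence they can never reappear in a future $J_\mathsf{ext}$, so $J_\mathsf{int} \subseteq \Cb \cup \Cbb = C\setminus\Cd$. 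For (ii), the same removal argument shows that each $j' \in C\setminus\Cd$ belongs to $J_\mathsf{int}$ of at most one iteration, so the total load on $j'$ comes from a single $J_\mathsf{ext}$; and the non-expansion stopping criterion ($\t \le t^\star$) guarantees $|J_\mathsf{ext}| < \eps \,|J_\mathsf{int}|$. Therefore
\[
\sum_{j \in \Cd} \phi_{j,j'} \;\le\; \frac{|J_\mathsf{ext}|}{|J_\mathsf{int}|} \;<\; \eps \;\le\; \frac{\delta}{100} \;\le\; \delta.
\]

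For (iii), in the iteration under consideration both $j$ and $j'$ lie in the ball of radius $\t \le t^\star+1 = O\!\left(\tfrac{1}{\eps}\log\tfrac{1}{\eps}\right)$ around $i^\star$ inside $H$. Since $H$ is a subgraph of $G$, the concatenation of the two $H$-paths gives a $G$-path of length at most $2\t$ between $j$ and $j'$, and each edge of $G$ has actual length at most $\opt$. Thus $d(j,j') \le 2\t \cdot \opt = \tilde O(1/\delta)\cdot \opt$, as required.

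The proof is essentially bookkeeping once the non-expansion stopping rule and the ``remove-once-assigned'' invariant of the algorithm are in place; there is no serious obstacle. The only point that required care was ensuring that $J_\mathsf{int}$ never collides with $\Cd$, which is guaranteed by the fact that whenever we delete a boundary $J_\mathsf{ext}$ we also assign its interior to $\Cb$ or $\Cbb$ in the \emph{same} iteration, preventing the interior clients from ever entering $\Cd$ later.
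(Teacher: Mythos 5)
Your proposal is correct and follows essentially the same route as the paper: the uniform charging map over $J_\mathsf{int}$, the observation that $|J_\mathsf{ext}| < \eps|J_\mathsf{int}|$ at the moment of deletion, and the fact that $J_\mathsf{int}$ is simultaneously assigned to $\Cb$ or $\Cbb$ (so recipients never re-enter $\Cd$ or get charged twice) are exactly the ingredients the paper uses, with the paper deferring the "interior gets colored" step to its Claim~\ref{fcl:inv}. The only cosmetic slip is your mention of Case 1 (the $\t = t^\star+1$ branch) in verification (i): no $J_\mathsf{ext}$ is deleted there, so that case is vacuous for this claim, but this does not affect correctness.
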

\begin{proof}
	We add vertices to $\Cd$ only in~\cref{falg:phi1}, and at that point it must be that $|J_\mathsf{ext}| \leq \eps\cdot |J_\mathsf{int}|$.
	As in the proof of Theorem~\ref{fthm:weakdecomp}, we can define the assignment $\phi_{j,j'}$ for $j\in J_\mathsf{ext}$ and $j' \in \mathsf{int}$.
Furthermore, as in the proof of Claim~\ref{fcl:inv}, our algorithm makes sure that the client set $J_\mathsf{int}$ gets added to $\Cb$ or $\Cbb$ (in~\cref{falg:case2ac,falg:color2}).
Therefore, these clients in $J_\mathsf{int}$ will never be images of $\phi$ again, thus completing the proof.
\end{proof}

We will now show that the sets $\{T_\ell\}$ in the family $\calT$ form $\tilde{O}(1/\epsilon)$-complete neighborhoods supported by the corresponding client-sets $\{J_\ell\}$.
\begin{lemma} \label{flem:local}
Consider an iteration when a new set $T_\ell$ is added to $\calT$ in~\cref{falg:case2b}. The set $T_\ell$ is a $\tilde{O}(1/\epsilon)$-complete neighborhood supported by the set of clients $J_\ell$ (defined in~\cref{falg:color2}).
\end{lemma}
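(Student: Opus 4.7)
The plan is to establish two properties of $T_\ell = \F$: the diameter bound $\diam_G(T_\ell) = \tilde{O}(1/\epsilon)$, and the complete-neighborhood property $\Gamma_G(J_\ell) \subseteq T_\ell$. The diameter bound is essentially by construction — since $\F = N_H^{(\bar{t})}(i^\star) \cap F$ with $\bar{t} \leq t^\star = \tilde{O}(1/\epsilon)$, the triangle inequality in $H$ (combined with the fact that $d_G \leq d_H$ for vertices surviving in $H$) immediately yields $\diam_G(T_\ell) \leq 2\bar{t}$.

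For the complete-neighborhood property, I fix any $j \in J_\ell = J_\mathsf{int}$ and any $i' \in \Gamma_G(j)$, and case on where $i'$ currently resides. If $i' \in V(H)$, then the edge $(i', j)$ also lies in $H$; since $d_H(j, i^\star) \leq \bar{t} - 2$ by parity (clients sit at odd $H$-distance from the facility $i^\star$ and $\bar{t}$ is odd), the triangle inequality gives $d_H(i', i^\star) < \bar{t}$, placing $i' \in N_H^{(\bar{t})}(i^\star) \cap F = T_\ell$. If instead $i' \in L(\calT)$, invariant I1 is directly violated (since $j \in V(H)$), so this case is impossible. If $i' \in L(\calS)$ was added via~\cref{falg:case2a}, invariant I2 applies and rules the case out analogously.

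The remaining subcase — and the main obstacle, since here neither invariant applies — is when $i' \in L(\calS)$ was added via~\cref{falg:case1}, as a member of some $S_r$ with root $i_r$. My plan is to derive a contradiction from the branching condition that led the current iteration into~\cref{falg:case3}. When $S_r$ was originally formed in~\cref{falg:case1}, its root was the $i^\star$ of that earlier iteration and $S_r$ was taken to be $N_{H_{\mathrm{old}}}^{(t^\star)}(i_r) \cap F$; in particular $d_G(i', i_r) \leq d_{H_{\mathrm{old}}}(i', i_r) \leq t^\star - 2$. Chaining this bound with $d_G(j, i^\star) \leq \bar{t} - 2$ and $d_G(j, i') = 1$ via the triangle inequality yields
\[
d_G(i^\star, i_r) \;\leq\; (\bar{t} - 2) + 1 + (t^\star - 2) \;\leq\; 2 t^\star - 4,
\]
which, by the choice of $t^\star$ as the smallest even integer exceeding $\tfrac{8}{\epsilon}\ln(1/\epsilon)$, is at most the threshold $\tfrac{16}{\epsilon}\ln(1/\epsilon)$ tested in~\cref{falg:case2}. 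Yet the current iteration entered~\cref{falg:case3} precisely because $\dist_G(i^\star, R(\calS)) > \tfrac{16}{\epsilon}\ln(1/\epsilon)$ — a direct contradiction, ruling out this last subcase and completing the proof.
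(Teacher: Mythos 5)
Your proof is correct and follows essentially the same route as the paper's: the same four-way case analysis on where a neighbor $i'\in\Gamma_G(j)$ resides, invoking invariant I1 for $L(\calT)$, invariant I2 for facilities added in~\cref{falg:case2a}, and a triangle-inequality contradiction with the branching threshold $\frac{16}{\epsilon}\ln(1/\epsilon)$ for facilities added in~\cref{falg:case1}. Your explicit chain $(\bar t-2)+1+(t^\star-2)\leq 2t^\star-4$ is in fact slightly tighter than the paper's informal "path $i^\star\rightarrow j\rightarrow i\rightarrow R(\calS)$" bound; both versions share the same negligible additive slack from the ceiling in the definition of $t^\star$, which does not affect the $\tilde O(1/\epsilon)$ statement.
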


\begin{proof}
Firstly, the diameter of the new set is at most $\frac{16}{\epsilon}\ln(1/\epsilon)$, since for every $i \in T_\ell$ is $d(i,i^\star) \leq t^\star$ for the  $i^\star$ facility identified in~\cref{falg:istar}. To complete the proof, we show that $T_\ell$ is supported by the set $J_\ell$ defined in~\cref{falg:color2}, which is same as $J_\mathsf{int}$. 
We establish this by showing that $\Gamma_G(J_\mathsf{int}) \subseteq T_\ell$ (recall~\cref{fdef:comp-nbr}).

To this end, consider a client $j \in J_\mathsf{int}$. At the beginning of this iteration, $j$ is a client  in $H$.
We claim that at the beginning of this iteration, $\Gamma_H(j) = \Gamma_G(j)$ (i.e., no neighboring facility has already been assigned in earlier iterations). Indeed, suppose not, and let $i$ be some facility which is present in $\Gamma_G(j)$ but not in $\Gamma_H(j)$. We first observe that $i$ cannot be in $L(\calT)$ as that would violate invariant I1 at the beginning of this iteration --- $(i,j)$ would form the violated pair. Similarly, we note that $i$ cannot be added to $\calS$ in~\cref{falg:case1} in an earlier iteration --- because then the distance between $i^\star$ and $R(\calS)$ would be at most $\frac{16}{\epsilon}\ln(1/\epsilon)$ (via the path $i^\star \rightarrow j \rightarrow i \rightarrow R(\calS)$), so this is a contradiction to the fact that the algorithm is in the branch executing~\cref{falg:case2b}. Finally, we note that $i$ cannot be added to $\calS$ in~\cref{falg:case2a} in an earlier iteration, as that would violate invariant I2 at the beginning of this iteration --- again $(i,j)$ would form the violated pair.
So we can conclude that $\Gamma_H(j) = \Gamma_G(j)$ and thus that the entire neighborhood of $j$ is contained in $\{i^\star\} \cup A$ which is added to $T_\ell$ in this iteration. Repeating this argument for all $j$ shows that $\Gamma_G(J_\mathsf{int}) \subseteq T_\ell$.
\end{proof}

\medskip \noindent We now turn our attention to proving that the sets in $\calS$ are locally roundable.
 Toward this end, we begin with the following useful claim.

\begin{claim}\label{fclm:imp}
For any set $S_k \in \calS$, we have  $\sum_{j\in C} \sum_{i\in S_k}  \sum_{p\in [P]} d_j \x_{ijp} \geq \frac{1}{\epsilon^3}\cdot \max_{i\in S_k, p\in [P]} \effc(i,p) $
\end{claim}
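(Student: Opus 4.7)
The plan is as follows. Let $t_k$ denote the iteration at which $S_k$ was first introduced via Line~\ref{falg:case1} (every $S_k \in \calS$ originates there, since Line~\ref{falg:case2a} only augments existing sets). Let $(i^\star, p^\star)$ be the greedy pair picked at iteration $t_k$ and set $c_0 := \effc(i^\star, p^\star)$. I will establish two facts: (a) $\max_{i \in S_k, p \in [P]} \effc(i,p) = c_0$, and (b) already the clients $J_\mathsf{int} = N_H^{(t^\star-1)}(i^\star)$ and the facilities $S_k^{(0)} := N_H^{(t^\star)}(i^\star) \cap F$ chosen at iteration $t_k$ contribute at least $\epsilon^{-3} c_0$ to the LHS of the claim. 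Since $J_\mathsf{int} \subseteq C$ and $S_k^{(0)} \subseteq S_k$, this immediately gives the bound.

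Fact (a) is a pure monotonicity argument. Effective capacities are frozen once a facility enters $L(\calS) \cup L(\calT)$, and $H \cap C$ shrinks monotonically across iterations; hence the greedy maximum $\effc(i^\star_t, p^\star_t)$ is non-increasing in $t$. Any facility $i$ added to $S_k$ at some iteration $t \geq t_k$ (whether at $t_k$ itself or by a subsequent Line~\ref{falg:case2a} augmentation) therefore satisfies $\effc(i,p) \leq \effc(i^\star_t, p^\star_t) \leq c_0$ for every $p$, while $i^\star$ itself attains $c_0$.

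For (b), the branch $\bar t = t^\star + 1$ at iteration $t_k$ means that expansion was observed at every odd radius $\leq t^\star - 1$, and so $|J_\mathsf{int}| > \epsilon^{-4} |\Gamma_H(i^\star)|$. Coupling LP constraint \eqref{feq:lp4} with the fact that $x_{i^\star j p^\star} > 0 \Rightarrow j \in \Gamma_H(i^\star)$ yields $c_0 \leq |\Gamma_H(i^\star)|$, hence $|J_\mathsf{int}| \geq \epsilon^{-4} c_0$. Next, for each $j \in J_\mathsf{int}$, I will show $\sum_{i \in S_k^{(0)}, p} x_{ijp} \geq \epsilon$ by combining three observations: (i) every $H$-neighbor of $j$ lies in $S_k^{(0)}$, since $j \in N_H^{(t^\star-1)}(i^\star)$; (ii) invariant~I1 of Claim~\ref{fcl:inv} rules out any $i \in L(\calT)$ being a $G$-neighbor of the unassigned client $j$, so $\Gamma_G(j) \subseteq \calS_\mathrm{prev} \cup S_k^{(0)}$, where $\calS_\mathrm{prev}$ is the collection $\calS$ just before iteration $t_k$; and (iii) since $j$ survived in $H$ through every prior iteration in which $\calS$ changed, the $\Cb$-colouring test must have failed for $j$ each such time, forcing $\sum_{i \in \calS_\mathrm{prev}, p} x_{ijp} \leq 1 - \epsilon$. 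Combined with LP~\eqref{feq:lp1}, this gives $\sum_{i \in S_k^{(0)}, p} x_{ijp} \geq 1 - (1-\epsilon) = \epsilon$. Finally, $\x_{ijp} = x_{ijp}$ for $j \in J_\mathsf{int}$ and $i \in S_k^{(0)}$: the only zero-out of $\x_{i \cdot p}$ that is triggered by $j$ occurs at iteration $t_k$ itself (Line~\ref{falg:setx}), after $S_k^{(0)}$ has already joined $\calS$, and it leaves $\x_{ijp}$ untouched for $i \in \calS$. Summing over $J_\mathsf{int}$ and using $d_j = 1$ then gives $\sum_{j, i \in S_k, p} d_j \x_{ijp} \geq \epsilon \cdot |J_\mathsf{int}| \geq \epsilon^{-3} c_0$, as required.

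The main technical care I expect is in observation (iii) --- carefully invoking the $\Cb$-colouring rule at \emph{every} prior iteration in which $\calS$ was updated --- together with the $\x$-versus-$x$ bookkeeping; both should flow cleanly from the invariants in Claim~\ref{fcl:inv} and the fact that the colouring step zeroes $\x$ only for facilities outside $\calS$.
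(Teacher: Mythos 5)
Your proposal is correct and follows essentially the same route as the paper's proof: you bound $\effc(i^\star,p^\star)\le|\Gamma_H(i^\star)|$ via \eqref{feq:lp4}, use the expansion condition to get $|J_\mathsf{int}|\ge \epsilon^{-4}|\Gamma_H(i^\star)|$, and extract an $\epsilon$ fraction of each $j\in J_\mathsf{int}$'s demand inside $S_k$ using the failed $\Cb$-colouring test together with invariant~I1, exactly as in Lemma/Claim~\ref{fclm:imp}'s proof (your monotonicity argument for fact~(a) is the same observation the paper makes in its closing paragraph about $\effc$ only decreasing and the LHS only growing under augmentation). No gaps; the extra care you flag about the $\x$-versus-$x$ bookkeeping is handled the same way the paper handles it, since \Cref{falg:setx} only zeroes entries for facilities outside $\calS$.
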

\begin{proof}
Given a set $S_k \in \calS$, there is a subset $S'_k$ which was formed in~\cref{falg:case1} and then got augmented in~\cref{falg:case2a}.
Note that the root $i^\star$ of $S_k$ lies in $S'_k$. First we prove the claim for the set $S'_k$.
By the definition of $(i^\star, p^\star)$~(\cref{falg:istar}), we know that $ \max_{i\in S'_k, p\in [P]} \effc(i,p)  = \effc(i^\star,p^\star)$.
Furthermore, by definition,
\[\effc(i^\star,p^\star) = \sum_{j\in H\cap C} d_j \frac{\x_{i^\star jp^\star}}{y_{i^\star p^\star}} \leq \sum_{j\in \Gamma_H(i^\star)} d_j = |\Gamma_H(i^\star)|\]
The inequality follows because we know  (a) that $\x_{ijp^\star} = x_{ijp^\star} > 0$ only for $j\in \Gamma_H(i^\star)$ (i.e., only clients which are neighboring $i^\star$ can be serviced by $i^\star$), and (b) that $x_{i^\star jp^\star} \leq y_{i^\star p^\star}$ (using inequality~\eqref{feq:lp4}).
Now note that for any $j\in J_\mathsf{int}$, since $j\in H$, we have that $\sum_{i\in L(\calS), p\in [P]} x_{ijp} \leq 1 - \eps$ (otherwise it would have been added
to $\Cb$ in an earlier iteration and deleted from $H$), and so $\sum_{i\in S_k, p\in [P]} x_{ijp} \geq \eps$ (because $S_k$ includes all the neighbors of $j$ not already in $L(\calS)$, and $j$ has no edge to any facility in $L(\calT$) even in the original graph $G$ by invariant I1, so the fractional demand from $j$ to vertices in $L(\calT)$ is $0$).
Therefore,
\[
|J_\mathsf{int}| = \sum_{j\in J_\mathsf{int}} d_j  \leq \frac{1}{\eps} \sum_{j\in J_\mathsf{int}} d_j \left(\sum_{i\in S_k, p\in [P]} x_{ijp}\right)
\]
Since  $|J_\mathsf{int}| \geq \frac{1}{\eps^4}\cdot |\Gamma(i^\star)| $~\eqref{feq:jint}, we have the claim for $S'_k$.

	Subsequently, since $i^\star$ got added to $H$, $\effc(i^\star,p^\star)$ remains unchanged and since $\effc(i,p)$ can only decrease, we see $\effc(i^\star,p^\star) =  \max_{i\in S_k,p\in [P]} \effc(i,p)$.
	Therefore, even when we add more facilities
	to $S_k$ later in the algorithm~(\cref{falg:case2a}), the RHS of the inequality in the statement of the Claim does not change.
	Observe that the LHS can only increase since the set $S_k$ can grow during the algorithm. \end{proof}

\begin{claim}
		At the end of the algorithm, for all $j \in \Cb$, $\sum_{i \in \calS, p \in [P]} \x_{ijp} > (1-\frac{\delta}{100})$.	
\end{claim}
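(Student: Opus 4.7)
The plan is a direct bookkeeping argument that tracks exactly when and how $\x_{ijp}$ is modified for a client $j$ that ends up in $\Cb$. First I would identify the two (and only two) lines where a client is inserted into $\Cb$: the loop at \cref{falg:blue} inside the ``$\bar t = t^\star+1$'' branch, and the loop at \cref{falg:case2ac} inside the ``$\bar t \le t^\star$ with a nearby root'' branch. In both lines, the gate for adding $j$ to $\Cb$ is exactly the predicate $\sum_{i\in\calS,\,p}x_{ijp} > 1-\eps$, where $\calS$ means the \emph{current} collection at that iteration. So at the moment $j$ is placed into $\Cb$ we have the strict inequality $\sum_{i\in\calS_{\text{now}},p} x_{ijp} > 1-\eps$.

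Next I would trace what happens to the vector $\x$ for this particular $j$. Immediately after $j$ is declared blue, the algorithm executes $\x_{ijp}\gets 0$ for every $i\notin\calS_{\text{now}}$ (\cref{falg:setx,falg:setx-2}), while leaving $\x_{ijp}=x_{ijp}$ for every $i\in\calS_{\text{now}}$ (this is the initialization $\x\gets x$, never overwritten on these coordinates). Consequently, at the instant $j$ enters $\Cb$,
\begin{equation*}
\sum_{i\in\calS_{\text{now}},\,p}\x_{ijp} \;=\; \sum_{i\in\calS_{\text{now}},\,p} x_{ijp} \;>\; 1-\eps.
\end{equation*}

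Then I would argue that this sum is preserved until the algorithm terminates. Because $j$ is removed from $H$ the very iteration it is colored blue, it is never re-processed, and therefore $\x_{ij\cdot}$ is never touched again for this $j$. The collection $\calS$ can only grow in later iterations, but every facility $i'$ added to $\calS$ after $j$ was colored blue satisfies $\x_{i'jp}=0$ (since $i'\notin\calS_{\text{now}}$, these entries were explicitly zeroed at the time $j$ was colored). Hence the newly added facilities contribute nothing, and at the end of the algorithm
\begin{equation*}
\sum_{i\in\calS_{\text{final}},\,p}\x_{ijp} \;=\; \sum_{i\in\calS_{\text{now}},\,p}\x_{ijp} \;>\; 1-\eps.
\end{equation*}

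Finally I would close using the choice $\eps=\min(1/12,\delta/100)\le \delta/100$, which gives $1-\eps\ge 1-\delta/100$ and yields the strict inequality $\sum_{i\in\calS,p}\x_{ijp} > 1-\delta/100$ claimed. I do not expect a genuine obstacle here; the only point worth being careful about is confirming that the zeroing step only touches the coordinates of the current blue client and that the predicate defining entry into $\Cb$ is written in terms of $x$ rather than $\x$, so that no circular reasoning creeps in when $\calS$ later grows.
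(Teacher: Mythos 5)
Your proof is correct and follows the same route as the paper, whose entire argument is the one-line observation that clients enter $\Cb$ only when their fractional allocation to the current $\calS$ already exceeds $1-\eps \ge 1-\delta/100$. Your additional bookkeeping (that $\x_{ij\cdot}$ is frozen once $j$ leaves $H$ and that facilities added to $\calS$ later contribute nothing — indeed, since all terms are nonnegative, the sum over the final $\calS$ can only be at least the sum at the moment of insertion) is a careful but unsurprising elaboration of the same idea.
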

\begin{proof}
This follows since we only add clients to $\Cb$ when their fractional allocation to $\calS$ exceeds $(1-\epsilon)$.
\end{proof}

\begin{lemma}
\label{flem:sk}
Each set $S_k \in \calS$ is a $\left(\tilde{O}\left(\frac{1}{\delta}\right),(1+\delta)\right)$-roundable set with respect to $(\x,y)$.
\end{lemma}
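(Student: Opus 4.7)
The plan is to verify both conditions of Definition~\ref{fdef:rnding-mkc}. For (a), note that $S_k$ is the union of the initial ball $N^{(t^\star)}_H(i^\star)\cap F$ formed at line~\ref{falg:case1} and possibly several augmentations $\F=N^{(\bar t)}_H(i^{\star'})\cap F$ added at line~\ref{falg:case2a}. By the test at line~\ref{falg:case2}, each augmentation center $i^{\star'}$ lies within $G$-distance $\tfrac{16}{\epsilon}\ln(1/\epsilon)$ of the root $i^\star$, and each ball has radius at most $t^\star=O(\epsilon^{-1}\log\epsilon^{-1})$; the triangle inequality therefore yields $\diam_G(S_k)=\tilde O(1/\epsilon)=\tilde O(1/\delta)$.

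For (b), I would order types so that $c_1\le\cdots\le c_P$, set $Y_p:=\sum_{i\in S_k}y_{ip}$ and $Z_p:=\sum_{q\ge p}Y_q$, and pick the suffix-sum rounding
\[
\tilde n_p \;:=\; \lfloor Z_p\rfloor - \lfloor Z_{p+1}\rfloor \qquad (Z_{P+1}:=0),
\]
realized by installing capacity $c_p$ at $\tilde n_p$ arbitrary distinct locations in $S_k$; this is feasible because $\sum_p\tilde n_p\le Z_1 \le |S_k|$ by~\eqref{feq:lp5}. The telescoping identity $\sum_{q\ge p}\tilde n_q=\lfloor Z_p\rfloor$ makes (b1) hold with equality. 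For (b2), let $M:=\max_{i\in S_k,p}\effc(i,p)$ and $c'_p:=\min(c_p,M)$, which is still non-decreasing in $p$. The key upper bound is
\[
D_k\;:=\;\sum_{j\in C}d_j\sum_{i\in S_k,p}\x_{ijp}\;\le\;\sum_p c'_p Y_p,
\]
which follows by combining, for each $(i,p)$ with $i\in S_k$, the LP bound $\sum_j x_{ijp}\le c_p y_{ip}$ from~\eqref{feq:lp2} with the frozen effective-capacity bound $\sum_{j\in H_{t_i}\cap C}x_{ijp}\le M y_{ip}$, and using that $\x_{ijp}=0$ whenever $j\in\Cb$ was assigned before $t_i$ (by the updates at lines~\ref{falg:setx} and~\ref{falg:setx-2}) and that $x_{ijp}=0$ for every $j\in\Cbb$ on $S_k$ (by Invariant~I1 and Lemma~\ref{flem:local}, since the $T_\ell$'s are disjoint from $S_k$). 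Since $c'_p$ is non-decreasing, Abel summation then gives
\[
\sum_p c_p\tilde n_p\;\ge\;\sum_p c'_p\tilde n_p\;=\;\sum_p(c'_p-c'_{p-1})\lfloor Z_p\rfloor\;\ge\;\sum_p c'_p Y_p - c'_P\;\ge\; D_k - M,
\]
and combining this with Claim~\ref{fclm:imp}'s bound $M\le\epsilon^3 D_k$ and $\epsilon\le\delta/100$ gives $\sum_p c_p\tilde n_p\ge(1-\epsilon^3)D_k\ge D_k/(1+\delta)$, proving (b2) with $b=1+\delta$.

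The main technical obstacle is justifying $\sum_j\x_{ijp}\le c'_p y_{ip}$ for every $(i,p)$ with $i\in S_k$: the frozen $\effc$-bound constrains only clients in $H_{t_i}$, so $\x$-contributions from clients of $\Cd$ removed before $t_i$ need to be handled carefully --- most cleanly by an additional preprocessing step that zeroes $\x_{ijp}$ on $\Cd$ using the charge map $\phi$, whose effect on the overall solution is absorbed globally via Claim~\ref{fclm:prelim3}. Everything else --- the diameter computation, the verification of (b1), the Abel summation, and the final constant calibration using $\epsilon = \min(1/12, \delta/100)$ --- is routine.
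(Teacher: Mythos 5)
Your proof is correct, and in the rounding step it takes a genuinely different and somewhat cleaner route than the paper's. The paper buckets $(i,p)$ pairs by their (frozen) effective capacities into geometric ranges $[\Delta^u,\Delta^{u+1})$, rounds down each bucket's $y$-mass, installs capacity $\Delta^u$ at those facilities, verifies condition (b1) through a chain of floor inequalities using $c_q\ge\effc(i,q)$, and then splits into ``good'' buckets ($\alpha_u\ge1/\epsilon$) and ``bad'' ones for (b2), losing roughly $O(\epsilon)$ in total. You instead use the exact suffix-sum rounding $\tilde n_p=\lfloor Z_p\rfloor-\lfloor Z_{p+1}\rfloor$ (which is non-negative since $Z_p$ is non-increasing, and fits in $S_k$ by~\eqref{feq:lp5}), which makes (b1) hold with equality, and then a single Abel summation with the truncated capacities $c'_p=\min(c_p,M)$, losing only $c'_P\le M\le\epsilon^3 D_k$ (using Claim~\ref{fclm:imp}). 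This is tighter and more transparent; both approaches ultimately rest on the same two ingredients — Claim~\ref{fclm:imp}'s lower bound and the per-$(i,p)$ bound $\sum_j d_j\x_{ijp}\le\min(c_p,\effc(i,p))\,y_{ip}$ — but your bookkeeping avoids the good/bad case split entirely.

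You are also right to flag the $\Cd$ subtlety, which the paper's proof elides. The paper asserts ``$\effc(i,p)y_{ip}=\sum_{j\in C}d_j\x_{ijp}$ since for $j\notin H, i\in H$ we have $\x_{ijp}=0$,'' but Algorithm~1 as written never zeroes $\x_{ijp}$ for $j\in\Cd$; a deleted client $j\in J_{\mathsf{ext}}$ sits at $H$-distance $\t$ from the current center and can have surviving $H$-neighbor facilities at distance $\t+1$ that later join some $S_k$, in which case $\x_{ijp}=x_{ijp}>0$ while $j\notin H_{t_i}$. Your proposed fix — a final pass zeroing $\x_{ijp}$ for all $j\in\Cd$, with the lost demand re-absorbed by Claim~\ref{fclm:prelim3} — is exactly what the decomposition theorem permits (it still states only that \emph{some} $\x$ satisfying~\eqref{feq:lp2},~\eqref{feq:lp4} and the stated properties is found), and it preserves the $\Cb$ coverage condition since $\x$ on $\Cb$ is untouched. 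One small nit: your per-$(i,p)$ bound should read $\sum_{j\in H_{t_i}\cap C} d_j\, x_{ijp}\le M\, y_{ip}$ (you dropped the $d_j$, though this is harmless since demands are unit), and the root of $S_k$ that the augmentation centers are close to is $i_k\in R(\calS)$, not the $i^\star$ of the augmenting iteration — but the triangle-inequality computation you sketch is the one the paper also uses and gives $\diam_G(S_k)=\tilde O(1/\delta)$.
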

\begin{proof}
(Diameter) We claim that $\diam(S_k) \leq \frac{50}{\epsilon}\ln(\frac{1}{\epsilon})$ for every $S_k \in \calS$. We show by induction that for each $S_k \in \calS,$
$\dist_G(i,i_k) \leq \frac{25}{\epsilon}\ln(\frac{1}{\epsilon})$ for every $i \in S_k$, where $i_k$ is the root of $S_k$. When we add a new set $S$ to $\calS$ (in~\Cref{falg:case1}),
$S$ is the set $N^{(t^\star)}_H(i^\star)\cap F$ and so $\dist_G(i,i^\star) \leq t^\star \leq \frac{9}{\epsilon}\ln(\frac{1}{\epsilon})$ 
for any $i \in S$. Now, consider the case when we augment an existing set  in $\cal S$ (as in~\Cref{falg:case2a}).
Again, using the notation in the algorithm, suppose $i_k = \argmin_{i \in R(\calS)} \dist_G(i, i^\star)$, and let $i_k$ be the
root of $S_k \in \calS$. Then, $\dist_G(i^\star, i_k) \leq \frac{16}{\epsilon}\ln(\frac{1}{\epsilon})$. Since $\dist_G(i^\star, i') \leq \frac{9}{\epsilon}\ln(\frac{1}{\epsilon})$ for any $i' \in \F$, we see that
$\dist_G(i_k, i') \leq \frac{25}{\epsilon}\ln(\frac{1}{\epsilon})$ for any $i' \in \F$.
So the desired claim follows by induction. Since $\eps = O(\delta)$, the diameter condition follows.
\smallskip

\noindent
(Roundability)
We now show that there is a rounding of $\y_{ip}$ values for $i\in S_k$ such that
		\begin{enumerate}
			\item $\sum_{q \geq p} \sum_{i\in S_k} \y_{iq} ~\leq~ \floor{\sum_{q \geq p}\sum_{i\in S} y_{iq}}$, and
			\item $\sum_{j\in C} d_j \sum_{i\in S_k,p\in [P]} \x_{ijp} \leq (1+\delta)\cdot \sum_{i\in S} \sum_{p\in [t]} c_p\y_{ip}$
		\end{enumerate}

\noindent
For simplicity, let us use $\Delta := (1+\epsilon)$.
Define $A_u := \{(i,p), i\in S_k,p\in [P]: \effc(i,p) \in [\Delta^u,\Delta^{u+1})\}$ and let $\max_{i\in S_k,p\in [P]} \effc(i,p) \in [\Delta^U,\Delta^{U+1})$.
From Claim~\ref{fclm:imp}, we have
\[
\textstyle \Delta^U \leq ~ \max_{i\in S_k,p\in [P]} \effc(i,p) \leq ~\epsilon^3 \sum_{j\in C} d_j \sum_{i\in S_k,p\in [P]} \x_{ijp}
\]
We also assume we have available capacities of value $\Delta^u$ available to us; this is without loss of generality by setting there $k_p$ value to $0$
if there don't exist any.

Define $\alpha_u := \sum_{(i,p) \in A_u} y_{i,p}$. For all values of $u$,
{\em arbitrarily} choose $\floor{\alpha_u}$ different facilities $F_u$ in $S_k$; that there are so many is implied by \eqref{feq:lp4} of the LP.
For each $u$ and for each $i\in F_u$, set $\y_{i\Delta^u} = 1$. For every other $(i,p)$, set $\y_{ip} = 0$.
Note that  $\sum_{i\in S,p\in [P]}c_p\y_{ip} = \sum_u \floor{\alpha_u}\Delta^u$.

We claim that $\y$ satisfies the two conditions of the roundability property. We check Condition~1 first. Let $p \in [P]$ and let $s$ be the index such that $\Delta^s < p \leq \Delta^{s+1}$. Then
\begin{eqnarray}
\textstyle \sum_{q \geq p} \sum_{i \in S_k} \y_{iq} & = & \textstyle \sum_{u \geq s+1} \sum_{i \in S_k} \y_{i\Delta^u} \ = \ \sum_{u \geq s+1} \floor{\alpha_u} \leq \textstyle \ \floor{\sum_{u \geq s+1} \alpha_u } \\
& = &  \textstyle \floor{\sum_{u \geq s+1} \sum_{(i,q) \in A_u} y_{iq}} \ \leq \floor{\sum_{q: c_q \geq \Delta^{s+1}} \sum_{i \in S_k} y_{iq} } \\
& \leq &  \textstyle\floor{\sum_{q \geq p}\sum_{i\in S} y_{iq}}
\end{eqnarray}
where in the second-last inequality  we have used the fact that $c_q \geq \effc(i,q)$ for any $i,q$.\smallskip
%

We now need to prove condition 2 is satisfied.
Call the parameter $u$ {\em good} if $\alpha_u \geq \frac{1}{\epsilon}$ and bad otherwise.
Note that if $u$ is good, then $\alpha_u \leq (1+\epsilon)\floor{\alpha_u}$.
For simplicity, let
$
D:= \sum_{j\in C} d_j \sum_{i\in S_k,p\in [P]} \x_{ijp}
$ denote the total fractional demand assigned to $S_k$. 
From the definition of $\effc(\cdot)$, we get $\effc(i,p)y_{ip} = \sum_{j\in C} d_j \x_{ijp}$ since for $j\notin H,i\in H$ we have $\x_{ijp} = 0$. Therefore,
\begin{eqnarray}
D := \sum_{j\in C} d_j \sum_{i\in S_k,p\in [P]} \x_{ijp} & = &  \sum_{i\in S_k,p\in [P]} \effc(i,p)y_{i,p}	\  \leq \ \sum_{u\leq U} \Delta^{u+1} \sum_{(i,p)\in A_u} y_{ip} \\
	& \leq & (1/\epsilon)\sum_{u\le U: \textrm{ bad }} \Delta^{u+1} + \sum_{u\le U: \textrm{ good}} \Delta^{u+1} \alpha_u\\
	&	  \leq & \  (1/\epsilon) \cdot \frac{\Delta^{U+2}}{\Delta - 1} + (1+\epsilon)\Delta \sum_{u: \textrm{good}}\Delta^{u} \floor{\alpha_u}  \\
																				& \leq & \frac{(1+\epsilon)^2}{\epsilon^2}\Delta^U + (1+\epsilon)^2 \sum_{i\in S,p\in [P]}c_p\y_{ip} \\
																				& \leq & \epsilon(1+\epsilon)^2 D + (1+\epsilon)^2 \sum_{i\in S,p\in [P]}c_p\y_{ip}
\end{eqnarray}
Therefore, we get $D \leq \frac{(1+\epsilon)^2}{1 - \epsilon(1+\epsilon)^2}\sum_{i\in S,p\in [P]} \y_{ip} \leq (1+100\epsilon)\sum_{i\in S,p\in [P]} \y_{ip} \leq (1+\delta)\sum_{i\in S,p\in [P]} \y_{ip}$.
The second inequality uses $\epsilon$ is small enough.
Therefore, $S_k$ has the $\left(\tilde{O}(1/\delta),(1+\delta)\right)$-roundability property.
\end{proof}
\noindent
Claim~\ref{fcl:phi-augment}, Lemma~\ref{flem:local}, Claim~\ref{fclm:imp}, and Lemma~\ref{flem:sk} prove that the decomposition has the properties desired by
Theorem~\ref{fthm:decomp}.
\section{Supply Polyhedra of $Q||C_{min}$: Proof of Theorem~\ref{fthm:asslp}}\label{fsec:asslp}
\def\pv{\mathbf{b}}
\newcommand{\dem}{\mathsf{cap}}
  Throughout the proof we fix $\calI$ to be the instance of $Q||C_{min}$ 
  and the supply vector $(s_1,\ldots,s_n)$. For simplicity of presentation, given the supply vector, abusing notation let $J$ denote the multiset of jobs where job $j$ appears $s_j$ times. We know that the LP\eqref{feq:asslp1}-\eqref{feq:asslp3} is feasible with the $s_j$ replaced by $1$. 
  We want to find an assignment where machine $i$ gets at least $D_i/2$ capacity.

The algorithm is a  very simple greedy algorithm which doesn't look at the LP solution , and the feasibility of LP\eqref{feq:asslp1}-\eqref{feq:asslp3} is only used for analysis.
Rename the jobs (with multiplicities) in decreasing order of capacities $c_1\geq c_2 \geq \cdots \geq c_N$, and rename  the machines in decreasing order of $D_i$'s, that is, $D_1 \geq D_2 \geq \ldots \geq D_m$.
Starting with machine $i=1$ and job $j=1$, assign jobs $j$ to $i$ if the total capacity filled in machine $i$ is $< D_i/2$ and move to the next job. Otherwise, call machine $i$ happy and move to the next machine. Obviously, if all machines are happy at the end we have found our assignment.

The non-trivial part is to  prove that if some machine is unhappy, then the LP\eqref{feq:asslp1}-\eqref{feq:asslp3} is infeasible (with $s_j$ replaced by $1$).
To do so, we take the Farkas dual of the LP; the following LP is feasible iff LP\eqref{feq:asslp1}-\eqref{feq:asslp3} is infeasible. We describe a feasible solution to the system below if we obtain some unhappy agent.
	\begin{alignat}{4}
		&&   & \quad \textstyle \sum_{i=1}^m \beta_i D_i > \sum_{j=1}^n\alpha_j \label{feq:assdual1}  \tag{F1} \\
		&& \quad \forall i\in M,j\in J & \quad \textstyle \beta_i\min(c_j,D_i) \leq \alpha_j \label{feq:assdual2} \tag{F2}  \\
		&& \quad \forall i\in M, &\quad  \beta_i \geq 0\label{feq:assdual3}\tag{F3}
	\end{alignat}
	\def\i{i^\star}
Suppose machine $\i$ is the first machine which is unhappy. Let $S_1,\ldots,S_{\i-1}$ be the jobs assigned to machines $1$ to $(\i-1)$ and $S_{i^\star}$ be the remainder of jobs.
We have $\sum_{j\in S_{\i}} c_j < D_{\i}/2$. We also have for all $1\leq i\leq \i$, $\sum_{j\in S_i} \min(c_j,D_i) \leq D_i$. If not, then the machine must receive at least two jobs and would have capacity $> D_i/2$
from all but the last.
We now describe a feasible solution to \eqref{feq:assdual1}-\eqref{feq:assdual3}.

Given the assignment $S_i$'s, call a machine $i$ {\em overloaded} if $S_i$ contains a single jobs $j_i$ with $c_{j_i} \geq D_i$.
We let $\beta_1 = 1$. For $1\leq i <\i$, we have the following three-pronged rule
\begin{itemize}[noitemsep]
	\item If $i+1$ is not overloaded, $\beta_{i+1} = \beta_i$.
	\item If $i+1$ is overloaded, and so is $i$, then $\beta_{i+1} = \beta_i \cdot D_i/D_{i+1}$.
	\item If $i+1$ is overloaded but $i$ is not, then $\beta_{i+1} = \beta_i \cdot c_{j_{i+1}}/D_{i+1}$, where $j_{i+1}$ is the job assigned to $i+1$.
\end{itemize}
For any job $j$ assigned to machine $i$, we set $\alpha_j = \beta_i \min(c_j,D_i)$. Since for any $S_i$, we have $\sum_{j\in S_i} \min(c_j,D_i) \leq D_i$ and $\sum_{j\in S_{\i}} c_j < D_{\i}/2$,  the given $(\alpha,\beta)$ solution satisfies \eqref{feq:assdual1}. We now prove that it satisfies \eqref{feq:assdual2}.
From the construction of the $\beta$'s the following claims follow.
\begin{claim}\label{fclm:c1}
$\beta_1\leq \beta_2 \leq \cdots \leq \beta_m$.
\end{claim}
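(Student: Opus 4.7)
The plan is to prove the claim by induction on $i$, showing that in each of the three cases defining $\beta_{i+1}$ in terms of $\beta_i$, the multiplicative factor is at least $1$. Since all $\beta_i$'s are clearly non-negative (starting from $\beta_1 = 1$ and multiplied by non-negative ratios), it suffices to check that the ratio $\beta_{i+1}/\beta_i \geq 1$ in each case.

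First, I would observe the base case $\beta_1 = 1 > 0$, and the inductive hypothesis that $\beta_i > 0$. Then I would go through each of the three cases in turn. In the first case, when $i+1$ is not overloaded, $\beta_{i+1} = \beta_i$, so the inequality holds trivially. In the second case, both $i$ and $i+1$ are overloaded, and $\beta_{i+1} = \beta_i \cdot D_i/D_{i+1}$; since the machines are sorted in decreasing order of demand, $D_i \geq D_{i+1}$, so $D_i/D_{i+1} \geq 1$ and thus $\beta_{i+1} \geq \beta_i$.

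The remaining case is the only one requiring a small observation: when $i+1$ is overloaded but $i$ is not. Here $\beta_{i+1} = \beta_i \cdot c_{j_{i+1}}/D_{i+1}$, and the key point is that since machine $i+1$ is overloaded, by the very definition of overloaded, the (single) job $j_{i+1}$ assigned to it satisfies $c_{j_{i+1}} \geq D_{i+1}$. Hence $c_{j_{i+1}}/D_{i+1} \geq 1$, and again $\beta_{i+1} \geq \beta_i$. This closes the induction and establishes the claim. I do not anticipate any obstacle here — the claim is essentially bookkeeping that verifies the recursive definition of $\beta$ was set up correctly, and its real utility will appear in the subsequent verification of inequality~\eqref{feq:assdual2}.
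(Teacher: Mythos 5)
Your proof is correct and matches the paper's (implicit) reasoning — the paper simply asserts the claim "follows from the construction," and your case analysis, in particular noting that overloadedness of machine $i+1$ gives $c_{j_{i+1}} \geq D_{i+1}$ in the third case, is exactly the verification intended.
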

\begin{claim}\label{fclm:c2}
$\beta_1D_1 \geq \beta_2D_2 \geq \cdots \geq \beta_mD_m$.
\end{claim}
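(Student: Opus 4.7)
The plan is to do a case analysis on the three-pronged construction of $\beta_{i+1}$ from $\beta_i$ and verify $\beta_iD_i \geq \beta_{i+1}D_{i+1}$ in each case; almost everything is immediate from the definitions, and the only non-trivial case requires a small structural observation about the greedy assignment, which will be the main (mild) obstacle.

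First I would handle the easy cases. If $i+1$ is not overloaded then $\beta_{i+1}=\beta_i$, and since the machines are sorted in decreasing order of demand, $D_{i+1} \leq D_i$, so $\beta_{i+1}D_{i+1} \leq \beta_iD_i$. If both $i$ and $i+1$ are overloaded then by construction $\beta_{i+1}D_{i+1} = \beta_iD_i$ exactly, giving the claim with equality.

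The main case is when $i+1$ is overloaded but $i$ is not. Here $\beta_{i+1}D_{i+1} = \beta_i\, c_{j_{i+1}}$, so it suffices to show $c_{j_{i+1}} \leq D_i$. This is where I need a structural fact about the greedy: because machine $i$ is not overloaded but is happy (as $i < i^\star$), every job actually assigned to $i$ has capacity strictly less than $D_i$. Indeed, if the first job placed on $i$ had capacity $\geq D_i$, then by the greedy rule $i$ would become happy right after that single placement, making $i$ overloaded --- contradiction. So all jobs on $S_i$ have capacity $<D_i$. Now since jobs are processed in decreasing order of capacity and machine $i+1$ is filled only after $i$, the single job $j_{i+1}$ assigned to $i+1$ comes later in this order than any job in $S_i$, so $c_{j_{i+1}}\leq c_j$ for every $j\in S_i$. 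Combining, $c_{j_{i+1}} < D_i$, and hence $\beta_{i+1}D_{i+1} = \beta_ic_{j_{i+1}} \leq \beta_iD_i$.

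Iterating the above across all consecutive pairs $i,i+1$ for $1\leq i < m$ yields the chain $\beta_1D_1\geq\beta_2D_2\geq\cdots\geq\beta_mD_m$, proving the claim. The only place where care is needed is the structural observation in the mixed case; the other two cases are purely mechanical from the recursion.
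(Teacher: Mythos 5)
Your proof is correct and follows essentially the same route as the paper: the only non-trivial case is ``$i+1$ overloaded, $i$ not,'' where both arguments use the greedy's decreasing-capacity order to bound $c_{j_{i+1}}$ by the capacity of a job on $S_i$, which is at most $D_i$. Your extra observation that a non-overloaded happy machine's first (hence every) job has capacity below $D_i$ just makes explicit a step the paper leaves implicit.
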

\begin{proof}
	The only non-obvious case is if $i+1$ is overloaded but $i$ is not: in this case $\beta_{i+1}D_{i+1} = \beta_ic_{j_{i+1}}$. But since $i$ is not overloaded, let $j$ be some job assigned to $i$ with $c_j \leq D_i$.
	By the greedy rule, $c_j \geq c_{j_{i+1}}$, and so $\beta_{i+1}D_{i+1} \leq \beta_iD_i$.
\end{proof}
\noindent
Now fix a job $j$ and let $i$ be the machine it is assigned to. Note \eqref{feq:assdual2} holds for $(i,j)$ and we need to show \eqref{feq:assdual2} holds for all $(i',j)$ too.
I don't see any more glamorous way than case analysis. \smallskip

\noindent
{\bf Case 1: $c_j \leq D_i$.} In this case $\alpha_j = \beta_ic_j$ and $i$ is not overloaded.
Let $i' < i$.  Then we have $\beta_{i'}\min(c_j,D_{i'}) \le \beta_{i'}c_j \leq \beta_ic_j$, where the last inequality follows from Claim~\ref{fclm:c1}.

Now let $i' > i$. If $c_j \leq D_{i'}$, then none of the machines from $i$ to $i'$ can be overloaded. Therefore, $\beta_{i'} = \beta_i$, and so $\beta_{i'}c_j = \beta_ic_j = \alpha_j$.
So, we may assume $c_j > D_{i'}$ and we need to upper bound $\beta_{i'}D_{i'}$. Let $i'' > i$ be the first machine which is overloaded with job $j''$ say.
By Claim~\ref{fclm:c2}, we have $\beta_{i'}D_{i'} \leq \beta_{i''}D_{i''}$. Now note that
$\beta_{i''}D_{i''}  = \beta_{i''-1}c_{j''} = \beta_ic_{j''} \leq \beta_ic_j = \alpha_j$ where the second equality follows since none of the machines from $i$ to $i''-1$ were overloaded. \smallskip

\noindent
{\bf Case 2: $c_j > D_i$.} In this case $\alpha_j = \beta_iD_i$ and $i$ is overloaded. Let $i' > i$. Then, $\beta_{i'}\min(c_j,D_{i'}) = \beta_{i'}D_{i'} \leq \beta_iD_i$ where the last inequality follows from Claim~\ref{fclm:c2}.

Let $i' < i$. Let $i'\leq i'' < i$ be the smallest entry such that $c_j > D_{i''}$. Note that all machines from $i''$ to $i$ must be overloaded implying $\beta_{i''}D_{i''} = \beta_iD_i$.
Since $c_j \leq D_{i'}$ (in case $i' < i''$), we need to upper bound $\beta_{i'}c_j$.
By Claim~\ref{fclm:c1}, $\beta_{i'}c_j \leq \beta_{i''-1}c_j$. Now, if $i''-1$ were overloaded,
then $\beta_{i''}D_{i''} = \beta_{i''-1}D_{i''-1} \geq \beta_{i''-1}c_j$ where the last inequality follows from definition of $i''$. Together, we get $\beta_{i'}c_j \leq \beta_iD_i$.
\def\y{\bar{y}}
\def\z{\bar{z}}
\def\yy{\bar{\bar{y}}}
\begin{lemma}\label{flem:conf-is-uf}
	$\calP_\mathsf{ass}$ is upward-feasible.
\end{lemma}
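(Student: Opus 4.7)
The plan is to show that any witness $z$ for $s \in \calP_\mathsf{ass}(\calI)$ can be transformed into a witness $z'$ for $t$ by substituting each fractional unit of job used by $z$ with a unit of at least equal capacity drawn from the $t$-supply. Adopt the convention $c_1 \le c_2 \le \cdots \le c_n$, set $w_j := \sum_{i \in M} z_{ij} \le s_j$, and note that $t \succeq_\suff s$ immediately gives $\sum_{k \ge j} t_k \ge \sum_{k \ge j} w_k$ for every $j$.

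First I would build a ``substitution plan'' $\eta_{jj'} \ge 0$ satisfying $\sum_{j'} \eta_{jj'} = w_j$ for every $j$, $\sum_j \eta_{jj'} \le t_{j'}$ for every $j'$, and $\eta_{jj'} > 0$ only when $j' \ge j$ (so that substitutions move mass to higher-capacity job types). This is a bipartite transportation problem on the ``allowed'' graph with edges from left-node $j$ to right-node $j' \ge j$. By max-flow/min-cut (or Hall's theorem on the right-capacitated bipartite graph), feasibility reduces to checking that for every subset $S$ of left-nodes, the reachable right-nodes $\{j' : j' \ge \min(S)\}$ carry total capacity at least $\sum_{j \in S} w_j$. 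Since the reachable set depends only on $\min(S)$, the worst subset with $\min(S)=k$ is the suffix $\{k,\ldots,n\}$, and the hypothesis supplies exactly $\sum_{j' \ge k} t_{j'} \ge \sum_{j \ge k} w_j$.

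Second I would construct $z'$ by splitting each existing assignment proportionally along the substitution plan: for indices $j$ with $w_j > 0$ define
\[
z'_{ij'} \; := \; \sum_{j} z_{ij} \cdot \frac{\eta_{jj'}}{w_j},
\]
where terms with $w_j = 0$ contribute nothing since in that case $z_{ij}=0$ for every $i$. Summing over $i$ gives $\sum_i z'_{ij'} = \sum_j \eta_{jj'} \le t_{j'}$, which verifies \eqref{feq:asslp1} against the new supply vector $t$. The demand inequality \eqref{feq:asslp2} at machine $i$ follows from the \emph{capacity-monotone} property of every substitution: whenever $\eta_{jj'}>0$ we have $c_{j'} \ge c_j$ and so $\min(c_{j'},D_i) \ge \min(c_j,D_i)$, hence
\[
\sum_{j'} z'_{ij'}\min(c_{j'},D_i) \; \ge \; \sum_{j} z_{ij} \min(c_j,D_i) \; \ge \; D_i.
\]

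The only nontrivial step is the Hall-type feasibility check for the substitution plan, which I expect to be the mild ``main obstacle,'' but it collapses cleanly because the allowed bipartite graph has interval structure and the worst-case subsets are suffixes. Everything else is algebraic bookkeeping, so the whole argument should fit in a short paragraph or two.
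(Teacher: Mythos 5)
Your proof is correct, but it takes a genuinely different route from the paper's. The paper argues by ``hybridization'': it first reduces to the case where $t$ and $s$ differ in only two adjacent coordinates $\{j-1,j\}$, and then repairs the witness $z$ by repeatedly shifting $\delta$ units of mass from $z_{i,j-1}$ to $z_{ij}$ at some machine $i$, noting that \eqref{feq:asslp2} survives each local exchange because $c_j\ge c_{j-1}$; iterating over coordinate pairs walks from $s$ to $t$. You instead construct the entire reallocation in one shot: a transportation plan $\eta$ from the used mass $w_j=\sum_i z_{ij}$ to the new supplies $t_{j'}$, supported only on pairs with $j'\ge j$, whose existence you certify via a max-flow/Hall condition that, thanks to the interval structure of the allowed edges, collapses to exactly the suffix inequalities $\sum_{j'\ge k}t_{j'}\ge\sum_{j\ge k}w_j$ supplied by $t\succeq_\suff s$ and $w\le s$; the proportional splitting then verifies \eqref{feq:asslp1} and \eqref{feq:asslp2} directly. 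Both arguments hinge on the same monotonicity $\min(c_{j'},D_i)\ge\min(c_j,D_i)$ for $j'\ge j$. What your version buys is that it sidesteps the paper's (somewhat informally justified) induction over hybrid vectors and the implicit claim that each intermediate pair-modified vector is itself a nonnegative supply vector dominating the previous one; what it costs is an appeal to max-flow/min-cut duality, which here is harmless. As a bonus, your argument is agnostic to the cardinality constraints and would transfer verbatim to the configuration-LP analogue, where the paper again uses the hybridization route.
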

\begin{proof}
	Let $s := (s_1,\ldots,s_n)\in \calP_\mathsf{ass}$ for a certain instance of $Q||C_{min}$ where the jobs have been renamed so that $c_1\leq \cdots \leq c_n$. We need to prove any non-negative vector $t := (t_1,\ldots,t_n)$ s.t. $t\succeq_\suff s$also lies in $\calP_\mathsf{ass}$.
	By the ``hybridization argument'', it suffices to prove the lemma for  $s$ and $t$ differing only in coordinates $\{j-1,j\}$ and $ t_j\ge s_j$ and $t_{j-1} \geq \max(0,s_{j-1} + (s_j - t_j))$.
	Given that, we can move from $s$ to $t$ by changing pairs of coordinates each time maintaining feasibility in $\calP_\mathsf{ass}$.
	
	Let $z$ be the solution for the supply vector $s$; we construct a solution $\z$  for the supply vector $t$ starting with $\z = z$.
	If $\z$ is not already feasible, then it must be because $s_{j-1} \geq \sum_{i\in M} \z_{i,j-1} > t_{j-1}$.
	We select an arbitrary $i\in M$ with $\z_{i,j-1} > 0$ and increase $\z_{ij}$ and decrease $\z_{i,j-1}$ by $\delta$. Since $c_j \geq c_{j-1}$, \eqref{feq:asslp2} remains valid.
	Since the total increase of fractional load of job $j$ is exactly the same as the decrease in that of job $j-1$, and we only need total  decrease $(s_{j-1} -t_{j-1}) \leq t_j - s_j$, at the end we get that $\z$ is feasible wrt supply vector $t$.
	
\end{proof}
%
%
%
%

\def\Supp{\mathsf{Supp}\xspace}
\newcommand{\barcalS}{\bar{\cal S}\xspace}
\renewcommand{\brp}{{(p)}}
\renewcommand{\br}[1]{{(#1)}}
\renewcommand{\brp}{{(p)}}
\renewcommand{\br}[1]{{(#1)}}
\renewcommand{\bc}{{\bar c}}
\newcommand{\brt}{{(t)}}
\def\cc{\tilde{c}}
\newcommand{\barD}{\bar{D}}
\def\calFr{\calF^{(\alpha,\beta)}}
\section{Supply Polyhedra for \cckp: Proof of Theorem~\ref{fthm:conflp}}\label{fsec:conflp}
\def\z{\bar z}
	    Throughout the proof we fix $\calI$ to be the instance of \cckp and the supply vector $(s_1,\ldots,s_n)$.
	    Let  $z$ be a feasible solution to \eqref{feq:conflp1}-\eqref{feq:conflp3}. The proof of Theorem~\ref{fthm:conflp} follows from
	    Lemma~\ref{flem:conf-round}, Lemma~\ref{flem:conf-is-uf}, and Lemma~\ref{flem:conf-so}
	    \begin{lemma}\label{flem:conf-round}
	    	Given $z$, we can  find an of assignment of the $s_j$ jobs of capacity $c_j$  to the machines such that for all $i\in M$
	    	receives a total capapcity $\geq D_i/\alpha$ for $\alpha = O(\log D)$ where $D = D_\mathrm{max}/D_\mathrm{min}$.
	    \end{lemma}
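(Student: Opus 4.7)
The plan is a demand-bucketing reduction to per-bucket assignment-LP rounding that loses an $O(\log D)$ factor when slicing the shared supply across buckets and an additional $O(1)$ factor from a Shmoys--Tardos-style rounding inside each bucket. First, partition the machines into $L+1 := \lceil \log_2 D \rceil + 1 = O(\log D)$ buckets $M_\ell := \{i\in M : 2^\ell D_{\min} \leq D_i < 2^{\ell+1} D_{\min}\}$, so that all demands in any one bucket agree up to a factor of $2$. From $z(i,S)$ I extract the marginal assignment $y_{ij} := \sum_S z(i,S)\,n(S,j)$; since every $S$ in the support of $z(i,\cdot)$ lies in $\calF_i$, this satisfies $\sum_j y_{ij} \leq f_i$, $\sum_j c_j\,y_{ij} \geq D_i$, and $\sum_i y_{ij} \leq s_j$.

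Second, I slice the shared supply among the buckets. Let $y_j^\ell := \sum_{i\in M_\ell} y_{ij}$; then $\sum_\ell y_j^\ell \leq s_j$. I will produce integer per-bucket supplies $(s_j^\ell)$ with $\sum_\ell s_j^\ell \leq s_j$ such that the scaled configuration-LP solution $\{z(i,S)/(L+1)\}_{i\in M_\ell}$ remains feasible for bucket $\ell$ at demand $D_i/(L+1)$ and supply $s_j^\ell$. The $1/(L+1)$ scaling is what absorbs the loss from breaking the shared supply; integrality of $s_j^\ell$ is obtained from the fractional share $y_j^\ell/(L+1)$ by a rounding step that uses the upward-feasibility of $\calP_{\mathsf{conf}}$ (Lemma~\ref{flem:conf-is-uf}) to push any fractional residue of type $j$ onto a strictly larger-capacity type, preserving bucket-$\ell$ feasibility.

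Third, inside each bucket $\ell$ the demands all lie in a factor-$2$ window, so the restricted assignment LP is feasible at the scaled demand $D_i/(L+1)$. I partition the jobs seen by bucket $\ell$ into \emph{big} (capacity at least $D_i/(cL)$ for a suitable constant $c$) and \emph{small}: the big jobs are integrally matched to machines by a bipartite matching using the fractional $y_{ij}/(L+1)$ values, since each machine needs only $O(1)$ big jobs to reach $\Omega(D_i/L)$ capacity, and the small jobs are rounded by Theorem~\ref{fthm:shmoystardos}, whose additive $C_i$ loss is at most the big/small threshold $O(D_i/L)$ and is absorbed by the constant-factor slack. Each machine $i \in M_\ell$ therefore ends up with capacity $\Omega(D_i/L) = \Omega(D_i/\log D)$, while $\sum_\ell s_j^\ell \leq s_j$ keeps the global supply constraints intact.

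The step I expect to be the main obstacle is the integer supply-slicing in the second paragraph: naively rounding $y_j^\ell/(L+1)$ down to the nearest integer can annihilate contributions of job types whose per-bucket fractional share is below $1$. The clean resolution is to invoke upward-feasibility to roll such fractional residues up into a strictly larger-capacity type that remains usable in the same bucket; for job types with total supply $s_j < L$ the simplest fix is to assign the entire $s_j$ to the single bucket maximizing $y_j^\ell$ and treat that type as unavailable elsewhere, which still yields the claimed $O(\log D)$ bound since each bucket loses only a constant factor from this single-assignment choice.
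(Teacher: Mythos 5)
Your bucketing of demands into $O(\log D)$ powers of two and your endgame for small jobs (restricting to jobs of capacity below a $\Theta(D_i/\log D)$ threshold and applying Theorem~\ref{fthm:shmoystardos}, whose additive loss $C_i$ is then absorbed) do match the first and last steps of the paper's proof. But there is a genuine gap in the middle, and it sits exactly where the difficulty of the lemma lives: the \emph{big} jobs, i.e.\ job types $j$ for which a machine needs only one copy and for which the fractional share $y_{ij}$ (or its $1/(L+1)$-scaled version) is well below $1$. Your proposed fix for the integer supply-slicing --- ``roll fractional residues up into a strictly larger-capacity type via upward-feasibility'' --- does not work: upward-feasibility (Lemma~\ref{flem:conf-is-uf}) says that a supply vector dominating a feasible one in the suffix order is still feasible, but rolling $0.3$ units of type $j$ up to a larger type requires $0.3$ \emph{extra} units of that larger type to exist, and the total supply $s$ is fixed; you cannot manufacture larger-capacity supply. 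Likewise, assigning all $s_j$ copies of a scarce type to one bucket does not help, because within a single bucket many machines can each fractionally claim a $1/K$-sliver of the same big job.

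The second, related gap is the claim that ``the big jobs are integrally matched to machines by a bipartite matching using the fractional $y_{ij}/(L+1)$ values.'' No Hall condition is verified, and in general it fails: a set $\calH'$ of machines each holding total fractional big-job mass $\approx 1/(L+1)$ certifies only $|\calH'|/(L+1)$ units of neighboring supply, far short of $|\calH'|$. The integrality-gap instance of Theorem~\ref{fthm:conf-ig} (Section~8.1) is precisely a configuration-LP-feasible instance where every machine in a class fractionally touches big jobs yet no integral assignment can serve them all, so any correct rounding must first \emph{reduce the number of machines that genuinely require an integral big job}. This is what the paper's Steps~2 and~3 do and what your proposal omits: the subroutines {\sf FixLargeMachine} and {\sf FixBucket} repeatedly shuffle configuration mass so that, at termination, every machine is either fully rounded, purely small, or one of at most $K$ ``hybrid'' machines (one per bucket) each with more than a $1-1/K$ fraction of its mass on big singleton configurations. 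Only for that tiny set does Hall's condition hold with integer supplies ($\sum_{j\in J''} s'_j > (1-1/K)|\calH'| \geq |\calH'|-1$, hence $\geq |\calH'|$ by integrality), and only then can the remaining machines be handed to the assignment LP with demand $\Theta(D_i/\log D)$. Without an analogue of this reduction, your per-bucket slicing-plus-matching scheme breaks on exactly the instances that make the lemma nontrivial.
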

	    \begin{proof}
	    We start by classifying the demands into buckets.
	
	    	\paragraph{Bucketing Demands.} We partition the demands into buckets depending on their requirement values $D_i$. By scaling data, we may assume without loss of generality that $D_\mathrm{min} = 1$.
	    	We say that demand $i$ belongs to \emph{bucket $t$} if $2^{t-1} \leq D_i < 2^t$. We let $B^\brt$ to denote the bucket $t$. The number of buckets $K \leq \log_2 D$.
	    	For any bucket $t$, we round-down all the demands for $i\in B^\brt$; define $\barD_i = 2^{t-1}$ for all $i\in B^\brt$. Note that any $\rho$-approximate feasible solution with respect to $\barD$'s is $2\rho$-approximate with respect to the original $D_i$'s. \smallskip

\noindent
  To this end, we modify the feasible solution $z$ to a solution $\z$  in various stages. Initially $\z \equiv z$.
	    Our modified solution $\z$'s support will not be $\calF_i$; to this end we define $\calFr_i$ for parameters $\alpha,\beta \geq 1$.
	    \begin{definition}
For machine $i$ and parameters $\alpha,\beta > 1$, $\calFr_i$ contains the set $S$ if either (a) $S = \{j\}$ is a singleton with $c_j \geq \frac{\barD_i}{3\log_2 D}$,
or (b) $|S|\leq f_i$, $c_j \leq \alpha \cdot \frac{\barD_i}{3\log_2 D}$, and $\sum_{j\in S} c_j \geq \frac{\barD_i}{\beta}$. 	    We say $\z$ is $(\alpha,\beta)$-feasible if for all $i$, $\z(i,S) > 0$ implies $S\in \calFr_i$.
	    \end{definition}
	    \noindent

	    \medskip
	
		\noindent {\bf Step 1: Partitioning Configurations.}	
		
		We call a job of capacity $c_j$ {\em large} for machine $i$ if $c_j \geq \frac{\barD_i}{3\log_2 D}$, otherwise we call it \emph{small} for machine $i$.
%
%
%
%
%
For every machine $i$, if $z(i,S) > 0$ and $S$ contains any large job $j$ for $i$, then we replace $S$ by $\{j\}$. To be precise, we set $\z(i,\{j\}) = z(i,S)$ and $\z(i,S) = 0$.
We call such singleton configurations {\em large} for $i$; all others are {\em small}. 
Let $\calF^L_i$ be the collection of large configurations for $i$; the rest $\calF^S_i$ being small configurations.
Define $\z^L(i) := \sum_{S\in \calF^L_i} \z(i,S)$ be the total large contribution to $i$, and let $\z^S(i) := \sum_{S\in \calF^S_i} \z(i,S)$ the small contribution.

\begin{claim}\label{fclm:step1}
	After {\bf Step 1}, $\z$ satisfies \eqref{feq:conflp1} and \eqref{feq:conflp2} and $\z$ is $(1,1)$-feasible.
\end{claim}

	We partition the demands into buckets depending on their requirement values $D_i$. By scaling data, we may assume without loss of generality that $D_{min} = 1$.
	We say that demand $i$ belongs to \emph{bucket $t$} if $2^{t-1} \leq D_i < 2^t$. We let $B^\brt$ to denote the bucket $t$. The number of buckets $K \leq \log_2 D$.

A machine $i$ is called {\em rounded} if $\z(i,S) = 1$ for some set $S$. We let $\calR$ denote the rounded machines.
The remaining machines are of three kinds:  {\em large} ones with $\z^L(i) = 1$, {\em hybrid} ones with $\z^L(i) \in (0,1)$ and {\em small} ones with $\z^L(i) = 0$.
Let $\calL,\calH,\calS$ denote these respectively.  \medskip

\noindent{\bf Step 2: Taking care of large machines.}

The goal of this step is to modify $\z$ such that (a) the set of large machines becomes empty and (b) the set of hybrid machines is bounded. In particular, we will have at most one hybrid machine in a bucket
proving there are at most $K$ hybrid machines. First we need to discuss two sub-routines.

\paragraph{Subroutine: {\sf FixLargeMachine}($i$).}
This takes input a large machine $i\in\calL$, that is,  $\z^L(i) = 1$. We modify $\z$ such that at the end of the subroutine, among other things, $i$ gets rounded and enters $\calR$.

Consider the jobs $j$ large for $i$ such that $\z(i,\{j\}) \in (0,1)$. Since $\z^L(i) = 1$ and $i\notin \calR$, there exists at least two such jobs.
%
Let $j_1$ be the smallest capacity among these, and $j_2$  be any other such job. 
Two cases arise. In the simple case, there exists no $i'\notin \calR, S'\subseteq \Supp$ with $\z(i',S) > 0$ and $j_1 \in S$. That is, no other machine fractionally claims the job $j_1$.
Since $s_{j_1}$ is an integer, we have slack in \eqref{feq:conflp2}. We round up $\z(i,\{j_1\}) = 1$, set $\z(i,T) = 0$ for all other configurations of $i$, and add $i$ to $\calR$ and terminate.
%

Otherwise, there exists a machine $i'$
and a set $S$ such that $z(i',S) \in (0,1)$ and $j_1 \in S$.
Now define the set $T$ as follows. If $c_{j_2} > \frac{\barD_{i'}}{3\log_2 D}$, then $T = \{j_2\}$; otherwise $T = S - j_1 + j_2$.
Note that in the second case $j_2$ could already be in $S$; $T$ then contains one more copy, that is, $n(T,j_2) = n(S,j_2) + 1$.
We modify $\z$-as follows. We decrease $\z(i,\{j_2\})$ and $\z(i',S)$ by $\delta$, and increase $\z(i,\{j_1\})$ and $\z(i',T)$ by $\delta$ till one of the values becomes $0$ or $1$.
If at any point, some configuration gets $\z$ value $1$, we add the corresponding machine to $\calR$.
We proceed till $i$ enters $\calR$.

\begin{claim}\label{fclm:002}
	{\sf FixLargeMachine}($i$) terminates. Upon termination, the solution $\z$ satisfies \eqref{feq:conflp1} and \eqref{feq:conflp2}, and
	if $\z$ was $(\alpha,\beta)$-feasible before the subroutine, it remains $(\alpha,\beta)$-feasible afterwards.
\end{claim}
\begin{proof}
	If at any point we are in the simpler case, then $i$ enters $\calR$ and we terminate. Since we modify $\z(i,S)$ only for machine $i$, \eqref{feq:conflp1} is satisfied by the modification.
	\eqref{feq:conflp2} is satisfied for $j$ no other machine fractionally claims it. In the other case, note that the modification by $\delta$'s preserve the LHS of \eqref{feq:conflp1}. Furthermore, since $T\subseteq S\cup j_2$, it can only
	decrease the LHS of \eqref{feq:conflp2} (for jobs $j' \in S\setminus T\cup j_1$ when $T=\{j_2\}$ ). Finally, the new entry to the support of $\z$ is $\z(i',T)$ and we need to check $T\in \calFr_i$.
	If $T$ is a singleton (that is $j_2$), then $c_{j_2} \geq \frac{\barD_{i'}}{3\log D}$ and so $T\in \calFr_i$.
  Otherwise, since $S\in \calFr_i$, $c_{j_2} <  \frac{\barD_{i'}}{3\log D}$, and $c_{j_2} \geq c_{j_1}$ we have $T\in \calFr_i$.
  So at every step $\z$ maintains \eqref{feq:conflp1} and \eqref{feq:conflp2} and is $(\alpha,\beta)$-feasible. To argue termination, note that in the second case the value of $\z(i,\{j_1\})$ strictly goes up.
  In the end, we must have $\z(i,\{j_1\}) = 1$.
\end{proof}

\paragraph{Subroutine: {\sf FixBucket}($t$).} This takes input a bucket $t$ with more than one hybrid machine, and modifies the $\z$-solution such that
there is at most one hybrid machine in $t$. Recall a machine is hybrid if $\z^L(i) \in (0,1)$.
The $\z$-value for other machines in other buckets are unaffected.


%
	
	
Among the hybrid machines in $B^\brt$, let $i$  be the one with the smallest $f_i$. Let $i'$ be any other hybrid machine in this bucket. We know there is at least one more.
%
%
	We now \emph{modify} $\z$ as follows.
	Since $\z^L(i') > 0$, there exists a large configuration $\{j'\}$ for $i'$ with $\z(i',\{j'\}) > 0$. Similarly, since $\z^L(i) < 1$, there must exist a {\em small} configuration $T$ with $\z(i,T) > 0$.
	We then perform the following change: decrease $\z(i',\{j'\})$ and $\z(i,T)$ by $\delta$, and increase $\z(i,\{j'\})$ and $\z(i',T)$ by $\delta$,  for a $\delta > 0$ such that one of the variables becomes 0 or 1.
	Note that this keeps \eqref{feq:conflp1} and \eqref{feq:conflp2} maintained. 	
	
	We keep performing the above step till bucket $t$ contains at most one hybrid machine.
	If at any point, some configuration gets $\z$ value $1$, we add the corresponding machine to $\calR$.
\begin{claim}\label{fclm:step2}
	{\sf FixBucket}($t$) terminates. Upon termination, the solution $\z$ satisfies \eqref{feq:conflp1} and \eqref{feq:conflp2}, and
	if $\z$ was $(\alpha,\beta)$-feasible before the subroutine, it remains $(\alpha,\beta)$-feasible afterwards.
\end{claim}
	\begin{proof}
The possibly new entry to the support of $\z$ is $\z(i',T)$. Note that $|T| \leq f_i$ since $\z$ was $(\alpha,\beta)$-feasible to begin with,  and therefore $|T|\leq f_{i'}$ as well.
The other conditions of $(\alpha,\beta)$-feasibility are satisfied since $\barD_i = \barD_{i'}$, both being in the same bucket.
Also note that the LHS of both \eqref{feq:conflp1} and \eqref{feq:conflp2} remain unchanged.
To argue termination, till bucket $t$ contains more than one hybrid machine, note that $\z^L(i)$ increases for the hybrid machine $i$ with the smallest $f_i$.
	\end{proof}

\noindent
Now we have the two subroutines to describe {\bf Step 2} of the algorithm. It is the following while loop.
   \begin{itemize}[noitemsep]
   	\item[{}] While $\calL$ is non-empty:
   	\begin{itemize}[noitemsep]
   		\item If $i\in \calL$, then {\sf FixLargeMachine}($i$). Note that $i$ enters $\calR$ after this. This can increase the number of hybrid machines across buckets.
   		\item For all $1\leq t\leq K$, if $B^\brt$ contains more than one hybrid machine, then {\sf FixBucket}($t$). This can increase the number of machines in $\calL$.
   	\end{itemize}
   \end{itemize}

%
	  Since the {\sf FixLargeMachine} adds a new machine to $\calR$, it cannot run more than $m$ times. Therefore, the while loop terminates. Furthermore, before the loop $\z$ is $(1,1)$-feasible satisfying \eqref{feq:conflp1} and \eqref{feq:conflp2} (Claim~\ref{fclm:step1}), therefore Claim~\ref{fclm:002} and Claim~\ref{fclm:step2} imply that it satisfies after the while loop. We encapsulate the above discussion in the following claim about Step 2.
   \begin{claim}\label{fclm:003}
   	{\bf Step 2} terminates. Upon termination, the modified LP solution $\z$ is $(1,1)$-feasible, satisfies \eqref{feq:conflp1} and \eqref{feq:conflp2}, and furthermore
   	$\calL$ is empty and for every bucket $t$ we have at most one hybrid machine $i\in B^\brt\setminus\calR$.
   	\end{claim}
   	\smallskip

\noindent{\bf Step 3: Taking care of hybrid machines.}

Let $\calH$ be the set of hybrid machines at this point. We know that $|\calH| \leq K \leq \log_2 D$ since each bucket has at most one hybrid machine.
For any machine $i\in \calH$ with $\z^L(i) \le 1-1/K$, we zero-out all its large contribution. More precisely, for all $j$ large for $i$ we set $\z(i,\{j\})= 0$.
Note that \eqref{feq:conflp1} no longer holds, but it holds with RHS $\geq 1/K$. Note that these machines now leave $\calH$ and enter $\calS$.

At this point, for every $i\in \calH$ has $z^L(i) > 1-1/K$. Let $K' := |\calH|$. Let $J'$ be the set of jobs $j$ which are large for some machine $i\in \calH$ and $\z(i,\{j\}) > 0$.
Let $s'_j := s_j - \sum_{i\in \calR} \sum_S \z(i,S)n(S,j)$ be the remaining copies of $j$. Note that it is an integer since $s_j$ was an integer and for all $i\in \calR$, $\z(i,S) \in \{0,1\}$.
Let $G$ be a bipartite graph with $\calH$ on one side and $J'$ on the other with $s'_j$ copies of job $j$. We draw an edge $(i,j)$ iff $j$ is large for $i$ with $\z(i,\{j\}) > 0$.
\begin{claim}
	There is a matching in $G$ matching all $i\in \calH$.
\end{claim}
\begin{proof}
Pick a subset $\calH' \subseteq \calH$ and let $J''$ be its neighborhood in $G$. We need to show $\sum_{j\in J''} s'_j \geq \calH'$.
Since $z$ satisfies \eqref{feq:conflp2}, we get
\[
\sum_{j\in J''} s'_j \geq \sum_{j\in J''} \sum_{i\in \calH'} z(i,\{j\}) = \sum_{i\in \calH'} \sum_{j\in J''} z(i,\{j\})  > (1-1/K)|\calH'| \geq |\calH'| - 1
\]
The first inequality follows since $\z$ satisfies \eqref{feq:conflp2}.
The strict inequality follows since $J''$ is the neighborhood of $\calH'$ and the fact that $z^L(i) > 1-1/K$ for all $i\in \calH$.
The claim follows since $s'_j$'s are integers.
\end{proof}
If machine $i\in \calH$ is matched to job $j$, then we assign $i$ a copy of this job, that is,  set $\z(i,\{j\}) = 1$ and $\z(i,S) = 0$ for all other $S$,
and add $i$ to $\calR$. Let $J_M \subseteq J'$ be the sub(multi)set of jobs allocated; note $|J_M| \leq K \leq \log_2 D$.
After this point all machines outside $\calR$ are small. For every $i\in \calS$ and every small configuration $S$ with $\z(i,S) > 0$, we move this mass to $\z(i,S\setminus J_M)$.
More precisely, $\z(i,S\setminus J_M) = \z(i,S)$ and $\z(i,S) = 0$ for all $i$ and $S$. Note that \eqref{feq:conflp2} is satisfied at this point. Furthermore,
since $\z$ was $(1,1)$-feasible, we know that $\sum_{j\in S} c_j \geq \barD_i$ and for every $j\in S\cap J_M$ we have $c_j \leq \frac{\barD_i}{3\log_2 D}$.
\[
\sum_{j\in S\setminus J_M} c_j \geq  \sum_{j\in S} c_j - |J_M|\cdot \frac{\barD_i}{3\log_2 D} \geq \frac{2\barD_i}{3}
\]
Therefore, we have proved the following claim.
\begin{claim}\label{fclm:007}
At the end of {\bf  Step 3}, we have a solution $\z$ with (a) $\z^L(i) = 0$ for all $i\notin \calR$, (b) $\z$ is $(1,3/2)$-feasible,
(c) $\z$ satisfies \eqref{feq:conflp2}, and satisfies \eqref{feq:conflp1} replaced by $\frac{1}{K} \leq \sum_S \z(i,S) \leq 1$.
%
%
\end{claim}
\smallskip

\noindent
{\bf Step 4: Taking care of Small Machines.}
\def\zz{z^{\mathsf{int}}}
\def\2z{\mathsf z}
We now convert the solution $\z$  to a solution $\2z$ of the assignment LP in the following standard way.
As before, let $s'_j = s_j - \sum_{i\in \calR}\sum_S \z(i,S)n(S,j)$ be the number of jobs remaining.
For every $i\notin \calR$ and $j\in J$ define $\2z_{ij} = \sum_{S}\z(i,S)n(S,j)$.  Note that this satisfies the constraint of the assignment LP:
	\begin{alignat}{4}
		&& \quad \forall j \in J,   &\quad  \textstyle \sum_{i\in \calS} \2z_{ij}  \leq  s'_j \label{feq:1} \\
		&& \quad \forall i\in \calS ,      &\quad  \textstyle \sum_{j\in J}  \2z_{ij}c_j \geq \frac{2\barD_i}{3\log_2 D} \label{feq:2} \\
	&& \quad \forall i\in \calS ,      &\quad  \textstyle \sum_{j\in J}  \2z_{ij} \leq f_i \label{feq:3} \\
		&& \quad \forall i\in \calS, j\in J~\textrm{with}~ \textstyle c_j \geq \frac{\barD_i}{3\log_2 D}, & \quad \2z_{ij}   =  0  \label{feq:4}
	\end{alignat}
	The last equality follows since $\z$ was $(1,8/15)$-feasible and so $\z(i,S) = 0$ for any set $S$ containing a job $j$ with $c_j \geq \frac{\barD_i}{3\log_2 D}$.
The first inequality follows  since $\z$ satisfies \eqref{feq:conflp2}. To see the second and third point, note
that for any $i\in \calS$,
\[
\sum_{j\in J} \2z_{ij}c_j = \sum_j \sum_S \z(i,S)n(S,j)c_j = \sum_S \z(i,S) \sum_j n(S,j)c_j \geq \frac{1}{\log_2 D}\cdot \frac{2\barD_i}{3}
\]
	since $\sum_S \z(i,S) \geq 1/K$ for all $i\in \calS$ and since $\z$ is $(1,3/2)$-feasible, we have $\sum_{j=1}^n n(S,j) c_j \geq \frac{2\barD_i}{3}$. Similarly,
\[
\sum_{j\in J} \2z_{ij} = \sum_j \sum_S \z(i,S)n(S,j) = \sum_S \z(i,S) \sum_j n(S,j) \leq f_i
\]
since for any $S$, $\sum_{j\in S} n(S,j) \leq f_i$ and $\sum_S \z(i,S) \leq 1$.
Now we use Theorem~\ref{fthm:shmoystardos} to find an integral allocation $\zz$ of the jobs $J$ to machines in $\calS$ satisfying \eqref{feq:1},\eqref{feq:3}, and $\sum_{j\in J} \zz_{ij}c_j \geq \frac{\barD_i}{3
\log_2 D}$. \medskip

The final integral assignment is as follows. For every $i\in \calR$, we assign the configuration $S$ with $\z(i,S) = 1$.
Since $\z$ is $(1,3/2)$-feasible, every such machine $i$ gets a total capacity of at least $\frac{\barD_i}{3\log_2 D}$. 
All the remaining machines $i\in \calS$ obtain a set of jobs giving them capacity $\geq \frac{\barD_i}{3\log_2 D}$. 
This completes the proof of Lemma~\ref{flem:conf-round}.
\end{proof}

\begin{lemma}\label{flem:conf-is-uf}
$\calP_\mathsf{conf}$ is upward-feasible.
\end{lemma}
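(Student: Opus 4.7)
The plan is to mirror the hybridization argument used for $\calP_\mathsf{ass}$ in the previous Lemma~\ref{flem:conf-is-uf}. Since the suffix-majorization relation $t \succeq_\suff s$ is generated by elementary operations of two types --- (i) adding a unit to a single coordinate, and (ii) moving a unit from coordinate $j-1$ to coordinate $j$ (with $c_{j-1} \leq c_j$) --- it suffices to show that $\calP_\mathsf{conf}$ is preserved under each such move. Type (i) is immediate: having strictly more supply of some job preserves \eqref{feq:conflp2} while \eqref{feq:conflp1} and \eqref{feq:conflp3} are unaffected. So the real work is in handling type (ii), which I reduce (as in the $\calP_\mathsf{ass}$ proof) to the case where $s$ and $t$ differ only in coordinates $\{j-1, j\}$ with $t_j \geq s_j$ and $t_{j-1} \geq \max(0, s_{j-1} + (s_j - t_j))$.

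Given a feasible $z$ for $s$, I will construct $\z$ feasible for $t$ by starting with $\z = z$ and iteratively repairing \eqref{feq:conflp2} for job $j-1$, which is the only constraint that can possibly be violated. If it holds, we are done. Otherwise, there exists some pair $(i,S)$ with $\z(i,S) > 0$ and $n(S,j-1) \geq 1$; define the multiset $S' := S - \{j-1\} + \{j\}$. The key structural observation (which replaces the ``$c_j \geq c_{j-1}$'' step in the assignment proof) is that $S' \in \calF_i$: indeed $|S'| = |S| \leq f_i$, and $\sum_{k \in S'} c_k = \sum_{k \in S} c_k - c_{j-1} + c_j \geq \sum_{k \in S} c_k \geq D_i$. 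I then decrease $\z(i,S)$ and increase $\z(i,S')$ by a common $\delta > 0$, chosen so that one of the two variables saturates a bound.

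This swap preserves \eqref{feq:conflp1} (the total $\z$-mass on machine $i$ is unchanged) and \eqref{feq:conflp3} (the new support $S'$ lies in $\calF_i$). On the \eqref{feq:conflp2} side, consumption of job $j-1$ decreases by exactly $\delta$, consumption of job $j$ increases by exactly $\delta$, and all other job consumptions are unchanged; the budget for job $j$ is still respected because $t_j - s_j \geq s_{j-1} - t_{j-1}$ bounds the total amount we will ever need to move. I iterate until consumption of $j-1$ drops to at most $t_{j-1}$; each iteration either zeros out some $\z(i,S)$ or creates a new support element $S'$ from an existing one, and since each swap strictly reduces the number of $(i,S)$ pairs with $\z(i,S) > 0$ and $n(S,j-1) \geq 1$, the process terminates in finitely many steps.

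The main potential obstacle I anticipated was whether the replacement configuration $S'$ can always be kept feasible --- in $\calP_\mathsf{ass}$ this was trivial because the constraints are separable across jobs, but here configurations are monolithic and must belong to $\calF_i$. The observation above resolves this cleanly: swapping a copy of a smaller-capacity job for a larger-capacity one can only help the demand constraint and leaves the cardinality constraint untouched. No other delicate step is needed, and the proof closes.
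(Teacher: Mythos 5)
Your proposal is correct and follows essentially the same route as the paper: the hybridization reduction to adjacent coordinates $\{j-1,j\}$, the configuration swap $S \mapsto S - \{j-1\} + \{j\}$ justified by $c_j \geq c_{j-1}$ preserving membership in $\calF_i$ (same cardinality, no smaller total capacity), and the $\delta$-transfer bounded by $t_j - s_j \geq s_{j-1} - t_{j-1}$. No substantive difference.
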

\begin{proof}
Let $s := (s_1,\ldots,s_n)\in \calP_\mathsf{conf}$ for a certain instance of \cckp where the jobs have been renamed so that $c_1\leq \cdots \leq c_n$. We need to prove any non-negative vector $t := (t_1,\ldots,t_n)$ s.t. $t\succeq_\suff s$also lies in $\calP_\mathsf{conf}$.
By the ``hybridization argument'',it suffices to prove the lemma for  $s$ and $t$ differing only in coordinates $\{j-1,j\}$ and $ t_j\ge s_j$ and $t_{j-1} \geq \max(0,s_{j-1} + (s_j - t_j))$.
Given that, we can move from $s$ to $t$ by changing pairs of coordinates each time maintaining feasibility in $\calP_\mathsf{conf}$.

Let $z$ be the solution for the supply vector $s$; we construct a solution $\z$  for the supply vector $t$ starting with $\z = z$.
If $\z$ is not already feasible, then it must be because $s_{j-1} \ge \sum_{i,S} z(i,S)n(S,j-1) > t_{j-1}$.
Therefore, we need to decrease the fractional utilization of job $(j-1)$ by $s_{j-1} - t_{j-1} \leq t_j - s_j$.
For any machine $i$ and any set $S\in \calF_i$ with $z(i,S) > 0$ and $n(S,j-1) \geq 1$ (and this must exist since $t_{j-1}\geq 0$),
define $T := S - \{j-1\} + \{j\}$. Note that $T$ could already have a copy of job $j$; we have $n(T,j) = n(S,j) + 1$. Also note since $c_j \geq c_{j-1}$, if $S\in \calF_i$ then so is $T\in \calF_i$.
We let $\z(i,S) = z(i,S) - \delta$ and $\z(i,T) = z(i,T) + \delta$ till
either $\z(i,S) = 0$ or $\z(i,T) = 1$. Since the total increase of fractional load of job $j$ is exactly the same as the decrease in that of job $j-1$, and we only need total  decrease $(s_{j-1} -t_{j-1}) \leq t_j - s_j$, at the end we get that $\z$ is feasible wrt supply vector $t$.
\end{proof}
\begin{lemma}\label{flem:conf-so}
	$\calP_\mathsf{conf}$ has an $(1+\epsilon)$-approximate separation oracle.
\end{lemma}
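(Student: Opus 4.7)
The plan is to apply LP duality in the standard configuration-LP fashion and run the ellipsoid method on the dual, using an FPTAS for a per-machine covering-knapsack subproblem as the approximate separation oracle; this follows the round-and-cut paradigm of \cite{BansalS06,Feige08,AsadpourFS12}. By LP duality applied to the feasibility form of \eqref{feq:conflp1}-\eqref{feq:conflp3} with the given $y$ in place of $s$, $y \in \calP_\mathsf{conf}(\calI)$ iff for every $\beta \geq 0$,
\[
\textstyle \sum_i \mu_i^*(\beta) \;\leq\; \sum_j \beta_j y_j, \qquad \text{where} \quad \mu_i^*(\beta) \;:=\; \min_{S \in \calF_i} \sum_j \beta_j n(S,j).
\]
Any $\beta$ violating this, paired with $\alpha_i := \mu_i^*(\beta)$, immediately yields a valid inequality $\sum_j \beta_j y'_j \geq \sum_i \alpha_i$ on $\calP_\mathsf{conf}(\calI)$ that is violated at $y$, and hence a separating hyperplane.

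The key algorithmic task is thus to evaluate $\mu_i^*(\beta)$, which is exactly the cardinality-constrained covering knapsack problem
\[
\textstyle \min\bigl\{\sum_j \beta_j x_j \;:\; \sum_j x_j c_j \geq D_i,\;\; \sum_j x_j \leq f_i,\;\; x_j \in \mathbb{Z}_{\geq 0}\bigr\}.
\]
This admits a standard FPTAS: discretize the capacities $c_j$ to multiples of $\epsilon D_i/f_i$, truncate jobs with $\beta_j$ above an obvious upper bound on the optimum, and run a polynomial two-dimensional dynamic program indexed by (rounded demand, cardinality). The FPTAS returns a set $S_i \in \calF_i$ of cost at most $(1+\epsilon)\mu_i^*(\beta)$.

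I will run ellipsoid on the dual. Since the dual is homogeneous in $(\alpha,\beta)$, deciding whether its optimum is positive is equivalent to deciding feasibility of the system $\{\alpha_i \leq \mu_i^*(\beta),\ \beta \geq 0,\ \sum_i \alpha_i - \sum_j \beta_j y_j \geq 1\}$. At each iterate $(\alpha,\beta)$ I invoke the FPTAS on every machine: if, for some $i$, the FPTAS returns $S_i \in \calF_i$ with cost strictly below $\alpha_i$, I output the corresponding inequality as a genuine separating hyperplane for $\calP_\mathsf{conf}(\calI)$; otherwise, the oracle certifies $\alpha_i \leq (1+\epsilon)\mu_i^*(\beta)$ for every $i$. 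When ellipsoid halts without having produced a cut, it has thereby proved that no $(\alpha,\beta)$ with $\alpha_i \leq (1+\epsilon)\mu_i^*(\beta)$ has $\sum_i \alpha_i > \sum_j \beta_j y_j$. By positive homogeneity this is equivalent to $(1+\epsilon)\sum_i \mu_i^*(\beta) \leq \sum_j \beta_j y_j$ for every $\beta \geq 0$, i.e.\ $y/(1+\epsilon) \in \calP_\mathsf{conf}(\calI)$. Any certificate $z$ for this is supported on $\calF_i$ (the configurations feasible for the stricter demand $D_i$, which are also feasible for the weaker demand $D_i/(1+\epsilon)$ of $\calI'$) and satisfies $\sum_{i,S} z(i,S) n(S,j) \leq y_j/(1+\epsilon) \leq y_j$, so the same $z$ witnesses $y \in \calP_\mathsf{conf}(\calI')$, as required.

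The main subtlety is this final translation: the oracle is only approximate, so ellipsoid really operates with a relaxation of the dual, and the rescaling step $\beta \mapsto (1+\epsilon)\beta$ (legitimate because $\mu_i^*(\cdot)$ is positively homogeneous in $\beta$) is what converts the ``approximate dual optimum $\leq 0$'' conclusion into the clean polyhedral membership $y \in \calP_\mathsf{conf}(\calI')$. Everything else --- the FPTAS for covering knapsack and the ellipsoid-plus-approximate-separation machinery --- is standard.
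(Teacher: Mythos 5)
Your overall strategy --- dualize the configuration LP, run ellipsoid on the dual, and use an approximation scheme for a per-machine knapsack subproblem as the separation routine --- is the same as the paper's, but the way you place the FPTAS's error leaves neither required outcome of the oracle achievable. First, the branch you label as producing ``a genuine separating hyperplane for $\calP_\mathsf{conf}(\calI)$'' does no such thing: finding $S_i \in \calF_i$ with $\sum_j \beta_j n(S_i,j) < \alpha_i$ only shows that the current \emph{dual iterate} is infeasible; it is an ellipsoid cut in $(\alpha,\beta)$-space and says nothing about whether $y \in \calP_\mathsf{conf}(\calI)$. A hyperplane separating $y$ from $\calP_\mathsf{conf}(\calI)$ must come from the \emph{other} branch, i.e.\ from a genuinely dual-feasible $(\alpha,\beta)$ with $\sum_i \alpha_i > \sum_j \beta_j y_j$. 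Second, in that other branch your oracle only certifies $\alpha_i \leq (1+\epsilon)\mu_i^*(\beta)$, which is not enough: rescaling to $\alpha_i/(1+\epsilon)$ restores exact dual feasibility, but the objective $\sum_i \alpha_i/(1+\epsilon) - \sum_j \beta_j y_j$ may then be negative (the normalization $\sum_i\alpha_i - \sum_j\beta_j y_j \geq 1$ does not bound $\sum_j \beta_j y_j$), so no violated valid inequality for $\calP_\mathsf{conf}(\calI)$ is obtained. Third, your terminal claim --- that ellipsoid has proved no $(\alpha,\beta)$ with $\alpha_i \leq (1+\epsilon)\mu_i^*(\beta)$ has positive objective --- would require every cut fed to ellipsoid to be valid for that \emph{relaxed} dual region; but your cuts $\alpha_i \leq \sum_j \beta_j n(S_i,j)$ with $S_i \in \calF_i$ can cut off relaxed-feasible points (take $S_i$ near-optimal), so the volume argument does not yield that statement.

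The paper pushes the error in the opposite direction, which is exactly what makes both branches close. It runs the knapsack subroutine as a \emph{maximization} of $\sum_j (1+\epsilon)c_j\, n(S,j)$ over $\{S : |S| \leq f_i,\ \sum_j \alpha_j n(S,j) < \beta_i\}$. If the FPTAS value is below $D_i$, then every $S \in \calF_i$ exactly satisfies $\sum_j \alpha_j n(S,j) \geq \beta_i$, so $(\alpha,\beta)$ is \emph{exactly} dual feasible and a clean separating hyperplane for $\calP_\mathsf{conf}(\calI)$ results; otherwise the returned set lies in the enlarged family $\calF^{(\epsilon)}_i = \{S : |S| \leq f_i,\ \sum_j c_j n(S,j) \geq D_i/(1+\epsilon)\}$, so the accumulated cuts are constraints of the \emph{stronger} dual system, and the Farkas certificate of its infeasibility (over the polynomially many generated configurations) is a fractional $z$ supported on configurations of $\calI'$ --- which is precisely the assertion $y \in \calP_\mathsf{conf}(\calI')$. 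In short, the $(1+\epsilon)$ loss must be absorbed into the configuration family (a demand relaxation), not into the dual constraint values. A minor additional point: your sketched FPTAS rounds the capacities $c_j$, which perturbs the covering constraint, so the returned set need not lie in $\calF_i$ as claimed --- another symptom of the same misplacement of the error.
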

\begin{proof}
	Fix $\epsilon> 0$.
	Given a supply vector $s = (s_1,\ldots,s_n)$, we give a polynomial time algorithm which either returns a hyperplane separating $s$ and $\calP_\mathsf{conf}$, or we can assert that
	$s\in \calP_\mathsf{conf}(\calI')$, where $\calI'$ is an instance where machine $i$ has demand $D_i/(1+\epsilon)$. To this end, for every machine $i$, define $\calF^{(\epsilon)}_i := \{S: |S| \leq f_i,  \sum_j c_jn(S,j)\geq D_i/(1+\epsilon) \}$. To prove $s\in \calP_\mathsf{conf}(\calI')$, we need to find $z(i,S)$ defined for all $i\in M, S\in \calF^{(\epsilon)}_i$ satisfying \eqref{feq:conflp1}-\eqref{feq:conflp2}.
	For every $j\in J$, define $\cc_j := (1+\epsilon)c_j$. Note for every $S\in \calF^{(\epsilon)}_i$ iff $|S| \leq f_i$ and  $\sum_{j\in S} \cc_j n(S,j) \geq D_i$.
	
Consider the following system of inequalities.
%
%
%
\begin{alignat}{4}
	& \quad \forall j \in J,   &&\quad  \textstyle \alpha_j  \geq 0 \label{feq:d3} \tag{D1} \\
	& \quad  &&\quad  \textstyle \sum_{j\in J}  s_j\cdot \alpha_j  < \sum_{i\in M}\beta_i \label{feq:d1} \tag{D2}\\
	& \quad \forall i\in M, S\in\calF_i, && \quad \textstyle \sum_{j\in J} \alpha_j n(S,j)  \geq \beta_i  \label{feq:d2}  \tag{D3}
\end{alignat}
We also need a stronger set of 	inequalities.
\begin{equation}\label{feq:d4}
 \textstyle \forall i\in M, S\in\calF^{(\epsilon)}_i, \quad \sum_{j\in J} \alpha_j n(S,j)  \geq \beta_i  \tag{D4}
\end{equation}
If there exists a feasible solution $(\alpha,\beta)$ to \eqref{feq:d3}-\eqref{feq:d2}, then this forms the hyperplane separating $s$ and $\calP_\mathsf{conf}$ as follows.
This is because for all $t\in \calP_\mathsf{conf}$, if  $z(i,S)$ is the solution feasible for $\calP_\mathsf{conf}$ with $t_j$'s in the RHS of \eqref{feq:conflp2},
then $\sum_{i\in M}\beta_i = \sum_{i\in M} \sum_{S\in \calF_i} \beta_i z(i,S) \leq \sum_{i\in M,S\in \calF_i} z(i,S) \sum_{j\in J} \alpha_jn(S,j) \leq \sum_{j\in J} \alpha_j t_j$.
The following claim proves the lemma.\smallskip

\begin{claim}
In polynomial time, we can either find $(\alpha,\beta)$ feasible for \eqref{feq:d3}-\eqref{feq:d2}, or we can find variables $z(i,S)$ for $i\in M,S\in \calF^{(\epsilon)}_i$ satisfying
\eqref{feq:conflp1}-\eqref{feq:conflp2}.
\end{claim}
\begin{proof}
We run the ellipsoid algorithm to check feasibility of the stronger system \eqref{feq:d3},\eqref{feq:d1}, and \eqref{feq:d4}.
At any point, we have a running iterate $(\alpha,\beta)$. 
For every $i\in M$, maximize $\sum_j \cc_j n(S,j)$ over all subsets $S$ with $|S|\leq f_i$ and $\sum_{j\in J} \alpha_j n(S,j) < \beta_i$.
There is an FPTAS for this problem~\cite{CapraraKPP00}. If the maximum value returned by the approximation scheme is {\em smaller} than $D_i$, then
we know that the true optimum is $\leq D_i(1+\epsilon)$. That is, for every $S$ with $|S| \leq f_i$ and $\sum_{j\in J}\alpha_j n(S,j) < \beta_i$, we have $\sum_{j\in J} \cc_jn(S,j) \le D_i(1+\epsilon)$.
Which in turn implies $\sum_{j\in J} c_jn(S,j) \leq D_i$. Contrapositively, for every $S\in \calF_i$, we must have $\sum_{j\in J} \alpha_j n(S,j) \geq \beta_i$. 
That is $(\alpha,\beta)$ satisfies \eqref{feq:d3}-\eqref{feq:d2} and we exit.

Otherwise, the PTAS  returns a set $S^\star$ with $|S^\star|\le f_i$ and $\sum_{j\in J}\cc_j n(S,j) \geq D_i$, that is $S^\star \in \calF^{(\epsilon)}_i$, for which
$\sum_{j\in J}\alpha_j n(S^\star,j) < \beta_i$. We add $(i,S^\star)$ to $\cC$, and return $(\alpha,\beta)$ to the separation oracle for \eqref{feq:d4}.
The ellipsoid algorithm states than in polynomial time we either find an $(\alpha,\beta)$ feasible for \eqref{feq:d3}-\eqref{feq:d2}, or the polynomially many hyperplanes
in $\calC$ prove \eqref{feq:d3},\eqref{feq:d1}, and \eqref{feq:d4} is infeasible.
More precisely, there exists a solution $z$ satisfying \eqref{feq:conflp1}-\eqref{feq:conflp2} with $z(i,S)$ defined for $(i,S)\in \calC$.
Since $|\cC|$ is bounded by a polynomial, we can explicitly find $z$ by solving the LP \eqref{feq:conflp1}-\eqref{feq:conflp2} with variables $z(i,S)$ for $(i,S)\in \calC$.
\end{proof}\end{proof}

\subsection{Integrality Gap}\label{fsec:conf-ig}
In this section we prove Theorem~\ref{fthm:conf-ig}.
	\def\M{\mathcal{M}}
	Fix $K$. We present an instance $\calI_K$ for which configuration LP is feasible but any integral allocation must violate the demand of some machine by factor $K$.
	
	First we describe the machines in $\calI_K$.
	\begin{enumerate}
		\item There is $1$ machine $M_0$ with $D(M_0) = 1$ and $f(M_0) = 1$.
		\item There are $K$ machines $M_1,\ldots,M_K$ with $D(M_i) = K^{-i}$ and $f_i := f(M_i) = K^{2K + 1}\cdot K^{-2i}$.
		\item There are $K$ {\bf classes} of machines $\M_1,\M _2,\ldots, \M _K$. Machines in the same class are equivalent.
		There are $f_i$ machines in $\M_i$ and they are numbered $N^{(i)}_1,\ldots,N^{(i)}_{f_i}$.
		Each machine $N$  in class $i$ has $D(N) = \frac{1}{f_iK^i} = K^{-(2K+ 1)}\cdot K^i$ and $f(N) = 1$.
	\end{enumerate}
	Now we describe the jobs.
	\begin{enumerate}
		\item There are $K$ ``big jobs'' $J_1,\ldots,J_K$ with $c(J_i) = 1$.
		\item There are $K$ other types of jobs of the same capacity. Job $J$ of type $i$ has capacity $c(J) = c_i :=  \frac{1}{f_iK^i} = K^{-(2K+1)}K^i$  and there are $n_i := f_i (1+1/K) = (K+1)K^{2K}K^{-2i}$ of them.
		We divide these $n_i$ jobs into two sets $S_i \cup T_i$ where $|S_i| = f_i$ and $|T_i| = f_i/K$. We order the jobs in $S_i$ arbitrarily and call them $P^{(i)}_1,\cdots,P^{(i)}_{f_i}$.
	\end{enumerate}
	So, the total number of machines in $\calI_K$ are $1 + K + \sum_{i=1}^K f_i  \leq K^{2K}$ and the number of jobs is $K + (1+1/K)\sum_{i=1}^K f_i \approx K^{2K}$.
	
	\begin{lemma}
		The Configuration LP is feasible.
	\end{lemma}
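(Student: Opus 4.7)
The plan is to exhibit an explicit feasible fractional solution $z$ to \eqref{feq:conflp1}-\eqref{feq:conflp3} for the supply vector with $K$ copies of the capacity-$1$ big job and $n_i$ copies of the capacity-$c_i$ job for each $i \in [K]$. The design is driven by two elementary identities built into the instance: $f_i \cdot c_i = K^{2K+1-2i} \cdot K^{-(2K+1)+i} = K^{-i} = D(M_i)$, so the multiset $S_i$ consisting of $f_i$ copies of type-$i$ jobs is a feasible configuration for $M_i$ that exactly saturates both its demand and its cardinality bound; and $n_i = f_i(K+1)/K$, which will balance the joint use of type-$i$ jobs by $M_i$ and by the $f_i$ machines of class $\mathcal{M}_i$.

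The values I would set are: $z(M_0, \{\text{big}\}) = 1$; for each $i \in [K]$, $z(M_i, \{\text{big}\}) = (K-1)/K$ and $z(M_i, S_i) = 1/K$; and $z(N, \{\text{one type-}i\}) = 1$ for every $N \in \mathcal{M}_i$. Since configurations are multisets over the $K+1$ job types, each $z$-variable is supported on one representative configuration of each kind and no combinatorial blow-up arises.

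Constraint \eqref{feq:conflp1} is immediate by construction: $z$-mass out of $M_0$, each $M_i$, and each class machine sums to $1$, $(K-1)/K + 1/K = 1$, and $1$ respectively. The support-in-$\calF_i$ condition follows from the identities noted above, together with the trivial bound $1 \geq K^{-i}$ for the big-job singleton on $M_i$ (size $1 \leq f_i$) and with $c_i = D(N)$ for the class-machine singleton (size $1 = f(N)$). The only nontrivial check is the supply constraint \eqref{feq:conflp2}: big jobs are consumed to total extent $1 + K \cdot (K-1)/K = K$, matching the supply; each type-$i$ job is used to extent $f_i/K$ by $M_i$ (through $S_i$ weighted by $1/K$) plus $f_i$ by class $\mathcal{M}_i$, totaling $f_i(K+1)/K = n_i$, again matching exactly.

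Since every step is a straightforward counting verification on the carefully calibrated parameters, there is no real obstacle in the proof itself; the whole point of the construction is the integrality gap it exhibits in subsequent arguments, not the feasibility of the fractional relaxation.
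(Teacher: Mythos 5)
Your proposal is correct and follows essentially the same construction as the paper: each $M_i$ places weight $1-1/K$ on a big-job singleton and $1/K$ on the configuration of $f_i$ type-$i$ jobs, each class-$\mathcal{M}_i$ machine is satisfied by type-$i$ singletons, and the supply identities $f_ic_i=D(M_i)$ and $n_i=f_i(K+1)/K$ make everything balance. The only cosmetic difference is that you aggregate jobs by type (which is exactly how \eqref{feq:conflp2} is stated, with $n(S,j)$ and $s_j$), whereas the paper distributes the mass over individually named jobs $P^{(i)}_j$ and $T_i$; both verifications are valid.
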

	\begin{proof}
		We describe a fractional solution.
		\begin{enumerate}
			\item For machine $M_0$ we satisfy as follows: set  $y(M_0,J_i) = 1/K$ for $i=1,\ldots,K$. Note $c(J_i) \geq D(M_0)$ and $|J_i| = 1 = f(M_0)$.
			\item For machine $M_i$ we satisfy as follows: set $y(M_i,J_i) = 1-1/K$ and $y(M_i,S_i) = 1/K$. Recall $S_i$ are the $f_i$ jobs of type $i$.
			\begin{itemize}
				\item Note $c(J_i) = 1 \geq D(M_i) = K^{-i}$ and 	$|J_i| = 1 \leq f(M_i) = K^{2K+1}K^{-2i}$ since $i\leq K$.
				\item Note $c(S_i) = |S_i|\cdot \frac{1}{f_iK^i} = \frac{1}{K^i} = D(M_i)$. Note $|S_i| = f_i = f(M_i)$.
			\end{itemize}
			\item For $1\leq i\leq K$, for a machine $N^{(i)}_j$ in class $i$, where $1\leq j\leq f_i$, we satisfy it as follows: $y(N^{(i)}_j, P^{(i)}_j) = 1-1/K$ and $y(N^{(i)}_j, t) = 1/f_i$ for all $t\in T_i$.
			Since $|T_i| = f_i/K$, the total fractional $y$-amount that $N^{(i)}_j$ gets is $1$. Also note that $N^{(i)}_j$ gets singleton jobs of type $i$ whose capacity is $\frac{1}{f_iK_i} = D(N^{(i)}_j)$.
		\end{enumerate}
		We need to show that no job is over allocated.
		\begin{enumerate}
			\item The big jobs $J_i$ is given $1/K$ to $M_0$ and $(1-1/K)$ to $M_i$.
			\item For $1\leq i\leq K$, $1\leq j\leq f_i$, job $P^{(i)}_j \in S_i$ is given $1/K$ to $M_i$ and $(1-1/K)$ to $N^{(i)}_j\in \M_i$.
			\item For $1\leq i\leq K$, job $t\in T_i$ is given $1/f_i$ to the $f_i$ machines of $\M_i$.
		\end{enumerate}
		This completes the description of the feasible solution.
	\end{proof}
	\begin{lemma}
		The integral optimum must violate some machine by factor $\Omega(K)$.
	\end{lemma}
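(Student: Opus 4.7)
My plan is to show, via a pigeonhole argument on the $K$ big jobs, that any integral allocation for $\calI_K$ violates some machine by a factor $\Omega(K)$. I start with two structural observations. Because consecutive type-capacities satisfy $c_i = K \cdot c_{i-1}$, any class machine in $\M_i$ (which has demand $c_i$ and cardinality $1$) cannot be served by a single job of type $\leq i-1$ when the target violation $\alpha$ is less than $K$, since such a job has capacity at most $c_{i-1} = c_i/K < c_i/\alpha$; hence every class machine must receive either a big job or a job of type $j \geq i$. Moreover, $M_0$ has $f(M_0)=1$ and $D(M_0)=1$, and the largest small-type capacity $c_K = K^{-K-1}$ is negligible compared to $1$, so $M_0$ must consume exactly one big job.

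The main step is to show that, under the assumption $\alpha < K/C$ for a suitable constant $C$, each $M_i$ with $i \in \{1,\ldots,K\}$ must also consume a big job; combined with the observation about $M_0$, this would require $K+1$ big jobs, contradicting the supply of $K$. Fix $i \geq 1$ and suppose $M_i$ receives no big job. I will bound, for each $j$, the number of type-$j$ jobs available to the machines $M_0, M_1, \ldots, M_K$ as follows. Each class $\M_k$ contains $f_k$ machines, each requiring a single job of type $\geq k$ (or a big job). In any class-allocation that deviates from the natural one (where $\M_k$ uses exactly $f_k$ type-$k$ jobs), an exchange argument shows that the aggregate capacity accessible to the $M_i$'s only decreases: the deviation replaces a low-capacity type-$k$ leftover with a higher-capacity type-$j$ leftover for some $j > k$, leaving less high-value slack available to $M_i$ under its greedy choice. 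Hence at most $n_j - f_j = f_j/K$ type-$j$ jobs remain per level, giving a per-type capacity budget of at most $(f_j/K)\cdot c_j = K^{-j-1}$. A greedy highest-capacity-first allocation for $M_i$ subject to cardinality $f_i$ and per-level budget $f_j/K$ then yields total capacity $O(K^{-i-1})$: for $i \geq 2$ the cardinality binds and the geometric progressions give $\sum_{j \geq i} K^{-j-1}$ plus a single-level top-up from type-$(i-1)$ jobs, both $O(K^{-i-1})$; for $i = 1$ the per-level budget binds and $\sum_{j=1}^K K^{-j-1} = O(K^{-2})$ suffices. Either way, $M_i$'s total capacity is $O(K^{-i-1})$, forcing violation at least $K^{-i}/O(K^{-i-1}) = \Omega(K)$, contradicting $\alpha < K/C$.

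Combining, every machine $M_i$ for $i = 0, 1, \ldots, K$ must be assigned a big job in any allocation with violation less than $K/C$, but the instance only supplies $K$ big jobs, so the violation must be at least $\Omega(K)$. Since $\calI_K$ has at most $O(K^{2K})$ machines, writing $n$ for this number gives $K = \Omega(\log n / \log\log n)$, matching the theorem's bound. The main obstacle I anticipate is the formal justification of the per-level availability bound $f_k/K$, which I plan to verify by a monotonicity/exchange argument showing that any class-allocation deviating from the natural one (where $\M_k$ uses $f_k$ type-$k$ jobs) strictly reduces the total capacity accessible to the $M_i$'s, because type-$j$ jobs for $j > k$ have higher value per slot and the class has enough cardinality to absorb them.
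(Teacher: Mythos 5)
Your proof follows essentially the same route as the paper's: class machines in $\M_i$ force consumption of almost all type-$i$ jobs (leaving only $\Theta(f_j/K)$ per level), the cardinality bound makes types $<i$ worth at most $D(M_i)/K$ to $M_i$, so every $M_i$ needs a big job and the pigeonhole on $K$ big jobs versus $K+1$ machines finishes the argument. The only cosmetic difference is that you phrase the per-level scarcity bound as a monotone exchange argument where the paper counts directly that there are too few jobs of type $>j$ for many $\M_j$ machines to avoid type-$j$ jobs; both treatments leave the same details informal (e.g., big jobs absorbed by class machines), and your description of the exchange has the replacement direction reversed, though the conclusion is right.
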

	\begin{proof}
		Lets take machines in $\M_i$. Recall all machines here have demand of $\frac{1}{f_iK^i}$ and cardinality constraint of $1$.
		Thus in the integral optimum, they {\bf must} get one job which is either big, or of type $i$ or larger.
		Now, the total number of jobs of type $j > i$ are
		\[
		\sum_{j>i} f_j(1+1/K) = (K+1)K^{2K} \sum_{j > i} K^{-2j} \leq  (K+1)K^{2K}K^{-2i} \sum_{\ell=1}^\infty K^{-2\ell} = O\left(f_i/K\right)
		\]
		So, at least $(1 - \Theta(1/K))f_i$ of the machines in $\M_i$ get a job of type $i$ (or a big job but lets assume for now this don't happen -- can be ma).
		Therefore, the number of type $i$ jobs left after satisfying machines $(M_0,\ldots,M_K)$ are only $\Theta(f_i/K)$.

		Now take a machine $M_i$. We have $f(M_i) = f_i$ and $D(M_i) = 1/K^i$.
		First note that jobs of type $j < i$ are ``useless'' for $M_i$. Any $f_i$ of them (best to take them of type $(i-1)$)  gives capacity $f_i\cdot c_{i-1}  = \frac{f_i}{f_{i-1}K^{i-1}} = \frac{1}{K^{i+1}} =  \frac{1}{K}\cdot D(M_i)$. So any subset of these jobs that can fit in $M_i$ gives capacity $\leq D(M_i)/K$.
		On the other hand, the total capacity of jobs remaining from type $j \geq i$ is  $\sum_{j\geq i} \Theta(f_j/K)\cdot \frac{1}{f_jK^j} = \Theta(1/K)\sum_{j\geq i} \frac{1}{K^j} = \Theta(D(M_i)/K)$.
		
		Therefore, any machine $M_i$ can't get more than $D(M_i)/K$ from the ``small'' jobs. But then they all can't get big jobs.
	\end{proof}
	The above two lemmas prove~\Cref{fthm:conf-ig} after noting that $K = \Theta(\log n/\log\log n) = \Theta(\log D/\log \log D)$ where $n$ is either the number of machines of jobs and $D$ is the ratio of $D_{\max}/D_{\min}$.

\begin{theorem} \label{fthm:supply-bad-cckp}
There cannot exist $\alpha$-approximate supply polyhedra (or convex sets) for $\alpha < \frac{\log D}{\log \log D}$ for \cckp instances.
\end{theorem}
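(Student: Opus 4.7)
The plan is to leverage the integrality gap instance $\calI_K$ from the proof of Theorem~\ref{fthm:conf-ig} and upgrade that integrality gap into a lower bound against \emph{every} convex set containing all feasible supply vectors for \cckp. The first step is to examine the explicit LP solution $z$ built in that proof. Since every job in $\calI_K$ appears with multiplicity one, the supply vector exhibited there is simply $\hat{s} = (1, \ldots, 1)$, and a direct computation (tracing through the configurations used at $M_0$, $M_i$, and $N^{(i)}_k$) shows that constraint~\eqref{feq:conflp2} is tight for every job, i.e.\ $\sum_{i, S} z(i, S)\, n(S, j) = 1$ for all $j$. In particular, for each machine $i$, the assignment $S \mapsto z(i, S)$ defines a probability distribution supported on $\calF_i$.

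The core step is to write $\hat{s}$ as a convex combination of feasible supply vectors through a one-shot randomized rounding. I will independently sample a configuration $S_i \sim z(i, \cdot)$ for each machine $i$. The outcome $\omega$ yields an integer supply vector $s^{(\omega)}_j := \sum_i n(S_i, j)$ which is feasible in the sense of Definition~\ref{fdef:supp-poly}, because the assignment $i \mapsto S_i$ is itself a valid allocation: each sampled $S_i$ lies in $\calF_i$ and therefore meets both the cardinality bound $f_i$ and the demand $D_i$. By linearity of expectation together with the tightness of~\eqref{feq:conflp2}, the expected value of $s^{(\omega)}_j$ equals $1 = \hat{s}_j$ for every $j$, and since the sample space is finite this expectation is a genuine convex combination $\hat{s} = \sum_\omega \Pr[\omega]\, s^{(\omega)}$. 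Hence every convex set containing all feasible supply vectors must contain the integer point $\hat{s}$.

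To finish, Theorem~\ref{fthm:conf-ig} guarantees that any allocation of the jobs specified by $\hat{s}$ leaves some machine $i$ with received capacity $O(D_i \cdot \log \log n / \log n)$. For $\calI_K$ we have $D_\mathrm{max} = 1$, $D_\mathrm{min} = \Theta(K^{-2K})$ and $n = \Theta(K^{2K})$, giving $\log D / \log \log D = \Theta(K) = \Theta(\log n / \log \log n)$, so no convex $\calP$ can be $\alpha$-approximate for $\alpha < \Omega(\log D / \log \log D)$.

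The main conceptual point is that the sampled vector $s^{(\omega)}$ is itself a feasible supply vector (certified by the sample $i \mapsto S_i$), and not merely a vector lying in some LP relaxation. Given this, the tightness of~\eqref{feq:conflp2} --- a short arithmetic check based on the explicit $z$ from Theorem~\ref{fthm:conf-ig} --- is what upgrades the expectation to an equality $\EX[s^{(\omega)}] = \hat{s}$ rather than an inequality, which is essential for putting $\hat{s}$ itself (rather than some dominated vector) in the convex hull of feasible supply vectors.
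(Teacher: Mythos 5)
Your proof is correct, and it reaches the paper's conclusion by a genuinely different route. The paper also starts from the instance $\calI_K$ of Theorem~\ref{fthm:conf-ig}, but it exhibits the instance's (all-ones) supply vector explicitly as a \emph{two-point} convex combination $(1-p)\mathbf{s}_1 + p\,\mathbf{s}_2$ of hand-crafted feasible supply vectors ($\mathbf{s}_1$ with $K+1$ big jobs and $f_i$ type-$i$ jobs, $\mathbf{s}_2$ with one big job and $2f_i$ type-$i$ jobs, $p = 1-1/K$), checking feasibility of each by an explicit allocation. You instead derive the convex combination mechanically from the LP solution: sample one configuration per machine from the marginals $z(i,\cdot)$, observe that every outcome certifies a feasible supply vector (since each sampled set lies in $\calF_i$), and use the tightness of \eqref{feq:conflp2} for every job in this particular $z$ --- which does hold: each big job $J_i$ receives $1/K$ from $M_0$ and $1-1/K$ from $M_i$, each $P^{(i)}_j$ receives $1/K$ from $M_i$ and $1-1/K$ from $N^{(i)}_j$, and each $t\in T_i$ receives $1/f_i$ from each of the $f_i$ class-$i$ machines --- to conclude that the expectation equals the all-ones vector exactly, so that vector lies in the convex hull of feasible supply vectors. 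Your route buys generality: it shows that \emph{any} configuration-LP point whose job constraints are tight and whose support configurations are genuinely feasible lies in the convex hull of feasible supply vectors, so a configuration-LP integrality gap of this form transfers automatically to a lower bound against all supply polyhedra. It also sidesteps the need to guess the right explicit decomposition (the paper's stated weights do not transparently reproduce the instance's job multiplicities, whereas your expectation computation pins them down). The only step worth stating explicitly, which you implicitly rely on, is that every set in the support of $z$ indeed belongs to $\calF_i$ (the singletons and the sets $S_i$ all meet the cardinality and demand constraints); this is verified in the feasibility lemma of Section~\ref{fsec:conf-ig}, so your argument is complete.
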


\begin{proof}
The proof follows from the instance constructed in the above~\Cref{fthm:conf-ig}. Indeed note that we can express the supply vector of the instance as $(1-p) {\bf s}_1 + p {\bf s}_2$. Here ${\bf s}_1$ denotes the following supply vector: there are $K+1$ big jobs with size $1$, and there are $f_i$ jobs of size $c_i$ for all $1 \leq i \leq K$. Similarly $\bf{s}_2$ denotes the following supply vector: there is $1$ big job with size $1$, and there are $2f_i$ jobs of size $c_i$ for all $1 \leq i \leq K$. Finally, the value $p$ is set to $1-1/K$.

Now, we will show that for both ${\bf s}_1$ and ${\bf s}_2$ are feasible supply vectors. Indeed, for ${\bf s}_1$, we will assign the large jobs to the large machines $M_0, M_1, \ldots, M_K$. Then we will use the $f_i$ jobs of size $c_i$ to satisfy the $f_i$ machines in class ${\cal M}_i$.

Likewise, for ${\bf s}_2$, we will assign the one large job to the large machine $M_0$. Then, for machine $M_i$, we will assign $f_i$ items of size $c_i$. Finally, we will assign the remaining $f_i$ jobs of size $c_i$ to satisfy the $f_i$ machines in class ${\cal M}_i$.

But now, note that for the resulting average supply vector, we proved in~\Cref{fthm:conf-ig} that any assignment must violate some demand by a factor of $\Theta(\log D/\log \log D)$, thus proving the theorem.
\end{proof}

\section{QPTAS for \cckp: Proof of Theorem~\ref{fthm:q}} \label{fsec:qptas}
\def\pv{\mathbf{b}}
\renewcommand{\dem}{\mathsf{cap}}
Let the instance $\calI$ given to us have $m$ machines with demands $D_1,\ldots,D_m$ and cardinality constraints $f_1,\ldots,f_m$, and $n$ jobs with capacities
$c_1,\ldots,c_n$ (with $1$ copy of each job by taking possible duplicates). Let $D:= D_\mathrm{max}/D_\mathrm{min}$, which we assume to be $\leq \poly(n)$.
Fix $\epsilon> 0$. Our goal is to either prove there is no feasible solution, or find an assignment
giving each machine $i$ a capacity of $\geq D_i(1-O(\epsilon))$.
We start with a lemma which states that finding solutions satisfying cardinality constraints approximately suffices.
\begin{lemma}\label{flem:appx-feas}
	Given an assignment of jobs such that the load on any machine $i$  is at least $D_i(1 - \epsilon_1)$ such that machine $i$ gets $\leq (1+\epsilon_2)f_i$ jobs,
	we can find another assignment which satisfies the cardinality constraints and the load of any machine $i$ is $\geq D_i(1 - \epsilon_3)$ for $\epsilon_3 < 2\epsilon_1+\epsilon_2$.
\end{lemma}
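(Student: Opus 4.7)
\textbf{Proof plan for Lemma~\ref{flem:appx-feas}.} The plan is to massage the given assignment locally on each overloaded machine, discarding its smallest jobs until the cardinality constraint is met. Fix a machine $i$ and suppose it currently receives $k_i$ jobs, where $f_i < k_i \leq (1+\epsilon_2)f_i$ (if $k_i\leq f_i$ we leave the machine alone). List the capacities of the jobs currently assigned to $i$ in non-decreasing order as $a_1\leq a_2\leq\cdots \leq a_{k_i}$, so by hypothesis $\sum_{\ell=1}^{k_i} a_\ell \geq D_i(1-\epsilon_1)$. I would simply drop the $k_i-f_i$ smallest jobs (i.e.\ $a_1,\ldots,a_{k_i-f_i}$) from machine $i$ and keep the rest. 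Doing this independently on every overloaded machine clearly yields an assignment respecting every cardinality constraint, and uses only a subset of the original jobs so no job is over-allocated.

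The only thing to argue is that not too much capacity is lost. Since the removed jobs are the smallest among the $k_i$ assigned to machine $i$, their average capacity is no larger than the overall average, so their combined capacity is at most $\frac{k_i-f_i}{k_i}\sum_{\ell}a_\ell$. Hence the remaining load on $i$ is at least
\[
\frac{f_i}{k_i}\sum_{\ell=1}^{k_i} a_\ell \;\geq\; \frac{f_i}{(1+\epsilon_2)f_i}\cdot D_i(1-\epsilon_1) \;\geq\; (1-\epsilon_1)(1-\epsilon_2)D_i \;\geq\; (1-\epsilon_1-\epsilon_2)D_i,
\]
using $\frac{1}{1+\epsilon_2}\geq 1-\epsilon_2$. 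Setting $\epsilon_3 := \epsilon_1+\epsilon_2$ (which is strictly less than $2\epsilon_1+\epsilon_2$ whenever $\epsilon_1>0$) gives the desired bound.

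There is no real obstacle here: the whole argument is a one-line averaging observation that the smallest $(k_i-f_i)$ of $k_i$ items carry at most a $(k_i-f_i)/k_i\leq \epsilon_2/(1+\epsilon_2)$ fraction of the total capacity, and this clean multiplicative loss combines with the $(1-\epsilon_1)$ loss already in the input to give $(1-\epsilon_1-\epsilon_2)D_i$. The lemma is really a ``lossless rounding of cardinality'' statement, which lets subsequent algorithms (e.g.\ the QPTAS) aim for the easier objective of approximately satisfying the $f_i$ constraints and then clean up at the end.
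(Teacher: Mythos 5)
Your proof is correct and follows essentially the same route as the paper's: discard the smallest-capacity excess jobs on each overloaded machine and bound the lost capacity by an averaging argument. Your version is in fact slightly cleaner and tighter (yielding loss $\epsilon_1+\epsilon_2$ rather than $2\epsilon_1+\epsilon_2$), so there is nothing to fix.
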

\begin{proof}
For every machine $i$, let $S_i$ be the jobs currently allocated to it. We may assume $\eps_1 f_i \geq 1$, otherwise $|S_i| \leq f_i$.
Remove the $\floor{2\eps_1 f_i} \geq \eps_1 f_i$ least capacity jobs to obtain the set $S'_i$. Note that the total capacity of $S'_i$ is at least $(1-2\eps_1)$ times capacity of $S_i$, and therefore
at least $(1-2\epsilon_1 - \epsilon_2)D_i$.
\end{proof}
\noindent
\paragraph{Input Modification and Grouping.}
We now modify the data so that everything is rounded to the nearest power of $(1+\epsilon)$. More precisely we round  $f_i$  to the {\em smallest} power of $(1+\epsilon)$ larger than the original value and $D_i$ to the largest power of $(1+\epsilon)$ smaller than the original value.
If the original instance had a feasible solution, then so does the modified instance.
For technical reasons, we round $c_p$ to the smallest value of the form $\epsilon(1+\epsilon)^t$ larger than the original value.
Let $J_p$ be the set of jobs with modified capacity $c_p = \eps (1+\epsilon)^p$, and let $n_p = |J_p|$. Furthermore, armed with Lemma~\ref{flem:appx-feas}, any $(1-\epsilon)$-approximate solution to the new instance gives an $(1-O(\epsilon))$-approximate solution to the original instance.  \smallskip

We now divide the machines into groups. For $0\leq r\leq O(\log n)$ and $0\leq s\leq O(\log n)$, let $M^{(r,s)}$ be the number of machines with $D_i = (1+\epsilon)^r$ and $f_i = (1+\epsilon)^s$.
%
\def\sm{\mathsf{small}}
Call a job $p$  big  for machine $i$ if $c_p \geq \eps D_i$. If $i\in M^{(r,s)}$, then $p$ lies in the set $J_r \cup J_{r+1} \cup \cdots$.
Otherwise, $p$ is small for machine $i$.
We define a bipartite graph $H$ with jobs and machines on either side, with an edge $(i,p)$ iff $p$ is small for $i$.

For every $0\leq r,s \leq O(\log n)$, we define a set of {\em feasible configurations} $\Phi^{(r,s)}$. These consist of vectors $\phi \in \Z^K_{\geq 0}$ for $K=O(1/\epsilon)$ corresponding to big jobs assigned to machines $i$ in $M^{(r,s)}$.
To be precise, $\phi_k$ is supposed to count the number of jobs with $c_p = \eps (1+\epsilon)^{r+k}$ contained in the configuration $\phi$.
The last coordinate $\phi_K$ counts the number of jobs $p$ with $c_p > (1+\epsilon)^r$.
Let
$\dem(\phi) := \sum_{k=0}^K \phi_k \epsilon(1+\epsilon)^{r+k}$ be the total load of the configuration and $|\phi| = \sum_{k=0}^K \phi_k$ be its cardinality.
We let $\Phi^{(r,s)}$ be the collection of feasible minimal configurations, that is, $\phi$'s with (a) $|\phi|\leq (1+\epsilon)^s$ and (b) either $\dem(\phi)\leq (1+\epsilon)^r$ or $\dem(\phi) > (1+\epsilon)^r$ and $\dem(\phi') \leq (1+\epsilon)^r$ for any $\phi'$ obtained by decreasing any positive coordinate of $\phi$ by exactly $1$. Note that $|\Phi^{(r,s)}| \leq N_0 = (1/\epsilon)^{(1/\epsilon)}$. Also note that in any optimal solution, each machine $i\in M^{(r,s)}$ does get one configuration from $\Phi^{(r,s)}$.
Our algorithm constructs these classes and arbitrary numbers them. The $t$th member of $\Phi^{(r,s)}$ is denoted as $\phi^{(r,s,t)}$.

%
%
\paragraph{Enumeration.}
For every $0\leq r,s \leq O(\log n)$ and $1\le t\le N_0$, we {\em guess}  the  integer $\pv^{(r,s)}_t \in \Z_{\geq 0}$ which indicates the number of machines in $M^{(r,s)}$ who are allocated the configuration $\phi^{(t)}$.
These guesses must satisfy
\begin{equation}\label{feq:milp1}
\forall 0\leq r,s \leq O(\log n), ~~~~ \sum_{t=1}^{N_0} \pv^{(r,s)}_t = |M^{(r,s)}|
\end{equation}
The number of such guesses is $\leq \prod_{r,s} |M^{(r,s)}|^{N_0} \leq C_\eps^{O(\log^3 n)}$ for some constant $C_\epsilon$ which is double-exponential in $(1/\epsilon)$.
Since machines in $M^{(r,s)}$ are all equivalent (in terms of demand and cardinality constraint), by symmetry we can assign the $\pv^{(r,s)}_t$ copies of $\phi^{(r,s,t)}$ as we like.
For a guess to be feasible, for every job of type $p$, at most $n_p$ copies must be used up in the guessed configurations.
For every guess we get a residual problem on the bipartite graph $H$.
Let $n'_p$ be the remaining number of jobs of type $p$.
Let $D'_i$ be the residual demand of machine $i$, that is, $D_i - \dem(\phi)$ where $\phi$ is allocated to it by the guess.
Let $f'_i$ be the residual cardinality constraint, that is, $f'_i = f_i - |\phi|$.
\paragraph{Rounding.}
The remaining copies of jobs must satisfy the residual demand. For this we simply write the assignment LP\eqref{feq:asslp1}-\eqref{feq:asslp4} which we rewrite below.
\begin{alignat}{4}
	& \quad \forall p,   &&\quad  \textstyle \sum_{i\sim p} z_{ip} \leq n'_p \label{feq:assgnlp1} \\
	& \quad \forall i\in [m] ,  &&\quad  \textstyle \sum_{p\sim i} c_pz_{ip}  \geq D'_i \label{feq:assgnlp2}\\
	& \quad \forall i\in [m], && \quad \textstyle \sum_{p\sim i} z_{ip}  \leq  f'_i \label{feq:assgnlp3}
\end{alignat}
where $i\sim p$ implies $c_p\leq \eps D_i$. If the residual LP has no solution, then our guess of big configurations is infeasible.
We are also guaranteed some guess is correct and we get a feasible solution to above LP.
Therefore, we apply Theorem~\ref{fthm:shmoystardos} to get an integral solution $\zz_{ip}$ satisfying \eqref{feq:assgnlp1},\eqref{feq:assgnlp3},
and $\forall i\in [m], ~~ \sum_{p\sim i} c_p\zz_{ip}  \geq D'_i - \epsilon D_i$.
Therefore in all every machine receives capacity $\geq D_i(1-\epsilon)$. The total running time is dominated by the enumeration step.
This proves~\Cref{fthm:q}.

\section{Integrality Gap for Non-Uniform Santa Claus Problem}\label{sec:app-bsig}
We reproduce the integrality gap example for the configuration LP by Bansal and Sviridenko~\cite{BansalS06} for the general max-min allocation problem, and point out how their instance is in fact a $Q|restr|C_{min}$ instance.
Fix integer $K$. There are $K$ machines with demand $D_i = K$; these are the large machines $L = \{M_1,M_2,\ldots,M_K\}$. There are $K-1$ large jobs with $c_j = K$ which can only be assigned to the machines in $L$.
Let $J_B$ be the set of large jobs. There are $K^2$ small machines each with $D_i = 1$; these machines are distributed in $K$ classes where the $i$th class $\cC_i$ contains $K$ small machines. We let $m^{(i)}_k$ denote the $k$th machine in $\cC_i$, for $1\leq k\leq K$.
There are $K^2 + K$ small jobs with $c_j = 1$. These jobs are partitoned into $K$ classes with $i$th class $\cJ_i$ containing $K+1$ small jobs. Each class $\cJ_i$ has one ``public'' job $j^{(i)}_0$ which can be assigned to any machine $m^{(i)}_k \in \cC_i$
 and $K$ ``private'' jobs $j^{(i)}_k$, $1\leq k\leq K$ where $j^{(i)}_k$ can be assigned to only $m^{(i)}_k \in \cC_i$. Furthermore all the private jobs $j^{(i)}_k\in \cJ_i$ can be assigned to the large machine $M_i \in L$. This completes the description of the instance.
Note that the number of machines and jobs are $\Theta(K^2)$.

The integral optimum solution must give one machine $i$ capacity $\leq D_i/K$. Indeed, at least one large machine $M_i$ will not receive a job in $J_B$.
The only other jobs available to $M_i$ are the private jobs in $\calJ_i$. Suppose we allocate two such jobs to $M_i$; wlog these are $j^{(i)}_1$ and $j^{(i)}_2$. Now note
that the machines $m^{(i)}_1$ and $m^{(i)}_2$ have only job $j^{(i)}_0$ which can be assigned to them; and so one of them would get capacity $0$.
Therefore, the machine $M_i$ can receive only one job $j^{(i)}_k$ giving it total capacity $\leq D_i/K$.

On the other hand the configuration LP is feasible. Every large machine $M_i$ gets $z(M_i, j) = 1/K$ for all large jobs $j\in J_B$ and $z(M_i,\{j^{(i)}_1,\ldots, j^{(i)}_K\}) = 1/K$ for the set of private jobs in $\cJ_i$.
For all $1\leq i,k\leq K$, every machine $m^{(i)}_k$ receives $z(m^{(i)}_k,j^{(i)}_k) = 1-1/K$ and $z(m^{(i)}_k,j^{(i)}_0) = 1/K$. One can check all the jobs are fractionally assigned exactly.

\section{Conclusion}
In this paper we introduced and studied the \mckc problem, and highlighted its connection to an interesting special case of the max-min allocation problems, namely \cckp. In our main result, we showed, using a decomposition theorem and the notion of supply polyhedra, a logarithmic approximation for \cckp, using which we showed a bicriteria $(O(1),O(\log n))$-approximation for \mckc. We believe designing polynomial-time $O(1)$-approximations for \cckp and bicriteria $(O(1),O(1))$ algorithms for \mckc are very interesting open problems.

\bibliographystyle{abbrv}
\bibliography{mckc}

\end{document}